\newcommand{\RN}[1]{  \textup{\uppercase\expandafter{\romannumeral#1}}}
\newtheorem{definition}{Definition}
\newtheorem{proposition}{Proposition}
\newtheorem{theorem}{Theorem}
\newtheorem{corollary}{Corollary}
\newtheorem{lemma}{Lemma}
\newtheorem{fact}{Fact}
\newtheorem{assumption}{Assumption}
\theoremstyle{remark}
\newtheorem{example}{Example}
\theoremstyle{remark}
\DeclareMathOperator*{\argmin}{argmin}
\DeclareMathOperator*{\diag}{diag}
\title{Optimal and Myopic Information Acquisition\thanks{This paper was previously circulated under the title of ``Dynamic Information Acquisition from Multiple Sources." We thank Yash Deshpande,  Johannes H\"orner, Michihiro Kandori, Elliot Lipnowski,  George Mailath, Juuso Toikka, Anton Tsoy, and Muhamet Yildiz for helpful comments. We are especially grateful to Drew Fudenberg, Eric Maskin and Tomasz Strzalecki for their constant guidance and support.} \\}
\author{Annie Liang\thanks{Department of Economics, University of Pennsylvania} \quad \quad Xiaosheng Mu\thanks{Department of Economics, Harvard University} \quad \quad Vasilis Syrgkanis\thanks{Microsoft Research}}
\begin{document}

\maketitle
\begin{abstract}
We consider the problem of optimal dynamic information acquisition from many correlated information sources. Each period, the decision-maker jointly takes an action and allocates a fixed number of observations across the available sources. His payoff depends on the actions taken and on an unknown state.  In the  canonical setting of jointly normal information sources, we show that the optimal dynamic information acquisition rule proceeds myopically after finitely many periods. If signals are acquired in large blocks each period, then the optimal rule turns out to be myopic \emph{from period 1}. These results demonstrate the possibility of robust and ``simple" optimal information acquisition, and simplify the analysis of dynamic information acquisition in a widely used informational environment.\\

\noindent \textbf{JEL\ codes:}\ C44, D81, D83

\noindent \textbf{Keywords:}\ Information Acquisition, Correlation, Endogenous Attention, Myopic Choice, Robustness, Value of Information

\end{abstract}
\clearpage

\pagebreak

\tableofcontents

\pagebreak

\section{Introduction}

In a classic problem of sequential information acquisition, a Bayesian decision-maker (DM) repeatedly acquires information and takes actions. His payoff depends on the sequence of actions taken, as well as on an unknown payoff-relevant state. We consider a setting in which the DM acquires information from a limited number of \emph{flexibly correlated} information sources, and allocates a fixed number of observations across these sources each period. The optimal strategy for information acquisition is of interest.

Neglecting dynamic considerations, a simple strategy is to acquire at each period the signal (or the set of signals) that maximally reduces uncertainty about the payoff-relevant state. We refer to this as the \emph{myopic} (or greedy) rule, as it is the optimal rule if the DM mistakenly believes each period to be the last possible period of information acquisition.

This myopic rule turns out to possess strong optimality properties in a widely used setting. Suppose that the available signals are \emph{jointly normal}. If signal observations are acquired in sufficiently large blocks each period, then myopic information acquisition is optimal from period 1 (Theorem \ref{propBatch}). We provide a sufficient condition on the required size of the block; this condition depends on primitives of the informational environment but not on the payoff function. Theorem \ref{propSeparable} characterizes a condition on the prior and signal structure, given which myopic information acquisition is optimal from period 1 for \emph{any} block size. In both of these cases, the optimal information acquisition strategy can be exactly and simply characterized. Additionally, instead of solving for the optimal decision rule and information acquisition strategy jointly (as would otherwise be required), our results show that one can separate these two problems, and  solve for the optimal decision rule in this setting as if information acquisition were exogenous.  

Finally, for generic signal structures, and for any block size, the optimal strategy proceeds by myopic acquisition after finitely many periods (Theorem \ref{propGenericEventual}). These results hold across all payoff functions (and in particular, independently of discounting); thus, myopic information acquisition is (eventually) ``robustly" best.

Why does the myopic rule perform so well? The main inefficiency of myopic planning is that it neglects potential complementarities across signals. A signal that is individually uninformative can be very informative when paired with other signals; thus, repeated (greedy) acquisition of the best single signal need not result in the best sequence of signals.\footnote{See Section \ref{sec:intuition} for a concrete example and further discussion.} A key observation is that whether the DM perceives two signals as providing complementary information depends on his current belief over the state space.\footnote{As a simple example, suppose the payoff-relevant state is $\theta_1$, and the available signals are about $\theta_1+\theta_2$ and $\theta_2$. These signals are ``complementary" when the agent's prior belief is that $\theta_1$ and $\theta_2$ are independent: observations of the first signal improve the value of observing the second signal, and vice versa. But suppose the DM's prior is such that $\theta_2=\theta_1$; then, these signals in fact become substitutes.} This means that complementarities across signals are not intrinsic to the underlying signal correlation structure: As the DM's beliefs about the states evolve, so too do his perceptions of the correlations across signals. It is clear that as information accumulates, the DM's beliefs become more precise about each of the unknown states. This does not itself lead to optimality of myopic information acquisition (see Section \ref{sec:precisionCorrelation} for more detail). We show that the key force comes from a second effect of information accumulation: The DM's beliefs evolve in such a way that the signals \emph{endogenously de-correlate} from  his perspective, and are eventually perceived as providing approximately independent information. At the limit in which all signals are independent, the value of any given signal can be evaluated separately of the others. The dynamic problem is  thus ``separable," and can be replaced with a sequence of static problems. Given sufficiently many signal observations, we have only approximate separability, which we show is sufficient for the myopic rule to be optimal. 

The mechanism we identify is different from the one underlying a classic result from the experimentation literature.  In ``learning by experimentation" settings, myopic behavior is eventually near-optimal: in the long run, the DM's beliefs converge, so the value of exploration (i.e.\ learning) becomes second-order relative to the value of exploitation of the perceived best arm.\footnote{\cite{EasleyKiefer} and \cite{Aghion} show that  if there is a unique myopically optimal policy at the limiting beliefs, then the optimal policies must converge to this policy. In our setting, every policy (signal choice) is trivially myopic at the limiting beliefs (a point mass at the true parameter), so we do not have uniqueness and cannot use this argument to identify long-run behavior.} In our paper, signal acquisition decisions are driven by learning concerns exclusively, as there is by design no exploitation incentive. To see this, recall that in the classic multi-armed bandit problem \citep{Gittins1979,Bergemann2008}, actions play the dual role of influencing the evolution of beliefs and also determining flow payoffs. In our setting (which does not fall into the multi-armed bandit framework), there is a separation between \emph{signal choices}, which influence the evolution of beliefs, and \emph{actions}, which determine (unobserved) payoffs. Myopic signal choices become optimal in our framework because \emph{they maximize the speed of learning}, and not because they optimize a tradeoff between learning and payoff. (Additionally, a myopic strategy is immediately optimal in multi-armed bandit problems only under very restrictive assumptions \citep{BerryFristedt, BanksSundaram}.)

Our results simplify the analysis of optimal dynamic information acquisition in an informational environment that is commonly used in economics: normal signals. However, the core of our analysis\textemdash the ``endogenous de-correlation" of signals described above\textemdash does not rely on the assumption of normality. As we discuss further in Section \ref{sec:intuition}, this de-correlation derives from a Bayesian version of the Central Limit Theorem, which holds for arbitrary signal distributions. This suggests that  our eventual optimality result (Theorem \ref{propGenericEventual}) generalizes.\footnote{Specifically, we conjecture that for general signals, the optimal rule eventually proceeds myopically when we restrict to certain decision problems (e.g.\ prediction of the payoff-relevant state). Immediate optimality of the myopic rule given sufficiently many signals, and also the independence of our results to the payoff function, do rely on properties of the normal environment (see Section \ref{sec:normality} for further details).}

We conclude by demonstrating several extensions. To facilitate application of our results, we extend our environment to a multi-player setting in which individuals privately acquire information before playing a one-shot game at a random final period. This extension connects our results to a literature on games with Gaussian information \citep{HellwigVeldkamp,MyattWallace,Pavan,LambertOstrovskyPanov}.\footnote{For games of information acquisition beyond the Gaussian setting, see e.g.\ \cite{Persico}, \cite{BergemannValimaki}, \cite{Yang} and \cite{Denti}. All of these papers restrict to a single signal choice.} We present corollaries of our main results, and use these to extend results from \cite{HellwigVeldkamp} and \cite{LambertOstrovskyPanov} in an online appendix.
 Finally, we demonstrate extensions to environments with choice of information ``intensity" (the number of signals to acquire each period), to multiple payoff-relevant states (for a class of prediction problems) and to a continuous-time setting.  

Our work primarily builds on a large literature about optimal dynamic information acquisition \citep{MoscariniSmith,FudenbergStrackStrzalecki,CheMierendorff,Mayskaya,SteinerStewartMatejka,HebertWoodford,Zhong} and a related literature on sequential search \citep{Wald,ArrowBlackwellGirshick,Weitzman,Callander,KeVillas-Boas,Bardhi}. In contrast to an earlier focus on optimal stopping and choice of signal precision, our framework characterizes choice between different \emph{kinds} of information, as in the work of \cite{FudenbergStrackStrzalecki} (where the sources are two Brownian motions), and  \cite{CheMierendorff} and \cite{Mayskaya} (where the sources are two Poisson signals).\footnote{\cite{CheMierendorff} and \cite{Mayskaya} consider choice between two Poisson signals, each of which provides evidence towards/against a binary-valued state. The Poisson model (with a finite state space) is more suited to applications such as learning about whether a defendant is guilty or innocent, while the Gaussian model describes for example learning about the (real-valued) return to an investment.} Compared to this work, we allow for many (i.e.\ more than two) sources with flexible correlation.\footnote{\cite{Callander} also emphasizes correlation across different available signals. But the signals in \cite{Callander} are related by a Brownian motion path, which yields a special correlation structure. Similar models are studied in \cite{GarfagniniStrulovici} and in \cite{Bardhi}.}

Another strand of the literature considers a DM who chooses from completely flexible information structures at entropic (or more generally, ``posterior-separable") costs, such as in \cite{SteinerStewartMatejka}, \cite{HebertWoodford} and \cite{Zhong}. Compared to these papers, our DM has access to a \emph{prescribed} and limited set of signals.\footnote{In Section \ref{ex:intensity}, we do allow the DM to also control the intensity of information acquisition by endogenously choosing how many signals to acquire in each period. But even in that extension, we assume that the incurred information cost is a function of the number of observations. This is analogous to \cite{MoscariniSmith} and is distinguished from the above papers that measure information cost based on belief changes.} 

Finally, acquisition of Gaussian signals whose means are linear combinations of unknown states appears previously in the work of \cite{MeyerZwiebel} and \cite{SethiYildiz}. In particular, \cite{SethiYildiz} characterizes the long-run behavior of a DM who myopically acquires information from experts with independent biases. See Section \ref{relatedLit} for remaining connections to the literature. See Section \ref{relatedLit} for remaining connections to the literature.

\section{Preliminaries} \label{prelim}

\subsection{Model}
Time is discrete. At each time $t=1, 2, \dots$, the DM first chooses from among $K$ information sources, and then chooses an action $a_t$ from a set $A_t$.\footnote{Thus, the action $a_t$ can be based on the information received in period $t$. The timing of these choices is not important for our results.}

The DM's payoff $U(a_1, a_2, \dots ; \theta_1)$ is an arbitrary function that depends on the sequence of action choices and a payoff-relevant state $\theta_1 \in \mathbb{R}$. We assume that payoffs are realized only at an (exogenously or endogenously determined) end date; thus, the information sources described below are the only channel through which the DM learns. This assumption distinguishes our model from multi-armed bandit problems, see Section \ref{relatedLit} for further discussion.

Stylized cases of such decision problems include: \\
\textbf{Exogenous Final Date.} An action is taken just once at a final period $T$ that is determined by an arbitrary distribution over periods.\footnote{Special cases include \emph{geometric discounting}, in which every period (conditional on being reached) has a constant probability of being final, as well as \emph{Poisson arrival} of the final period.
}  The DM's payoff is $U(a_1,a_2,\dots;\theta_1)=u_T(a_T, \theta_1)$ where $T$ is the (random) final time period, and $a_T$ is the action chosen in that period. The time-dependent payoff function $u_T(a_T,\theta_1)$ may incorporate discounting.

\smallskip

\noindent \textbf{Endogenous Stopping with Per-Period Costs.} Take each action $a_t$ to specify both the decision of whether to stop, and also the action to be taken if stopped. The DM's payoff is $U(a_1,a_2,\dots;\theta_1)= u_T(a_T,\theta_1)$ where $T$ is the (endogenously chosen) final time period, and $a_T$ is the action chosen in that period. The payoff-function $u_T(a_T,\theta_1)$ may incorporate discounting and/or a per-period cost to signal acquisition. Costs are fixed across sources in a given period, but can vary across periods.\footnote{See e.g. \cite{FudenbergStrackStrzalecki} and \cite{CheMierendorff} for recent models with constant waiting cost per period.}

\smallskip

Apart from the decision problem, there are $K$ information sources, which depend on the unknown and persistent state vector $\theta = (\theta_1, \dots, \theta_K)' \sim \mathcal{N}(\mu^0, V^0)$.\footnote{Here and later, we exclusively use the apostrophe to denote vector or matrix transpose.} This vector includes the payoff-relevant unknown $\theta_1$ and additionally $K-1$ \emph{payoff-irrelevant} unknowns $\theta_2, \dots, \theta_K$. The role of these auxiliary states is to permit correlations across the information sources conditional on $\theta_1$; this allows, for example, for the sources to be afflicted by common and persistent biases. In each period, the DM chooses $B$ sources (allowing for repetition), where $B \in \mathbb{N}^{+}$ is interpreted as a fixed time/attention constraint (see Section \ref{ex:intensity} for extension to endogenous choice of $B$). Choice of source $k=1,\dots, K$ produces an observation of
\[
X_k = \langle c_k, \theta \rangle + \epsilon_k, \quad \epsilon_k \sim \mathcal{N}(0,\sigma_k^2)
\]
where the coefficient vectors $c_k = (c_{k1}, \dots, c_{kK})'$ and signal variances $\sigma_k^2$ are fixed (and known), but the Gaussian error terms are independent across realizations and sources. Throughout, we use $C$ to denote the matrix of coefficients whose $k$-th row is $c_k'$. 

We impose the following assumption on the informational environment: 
\begin{assumption}[Non-Redundancy]\label{assumptionIdentifiability} The matrix $C$ has full rank, and no proper subset of row vectors of $C$ spans the coordinate vector $e_1'$. Equivalently, the inverse matrix $C^{-1}$ exists, and its first row consists of non-zero entries.
\end{assumption}

\noindent Heuristically, this means that the DM \emph{can and must} observe each source infinitely often to recover the value of the payoff-relevant state $\theta_1$. Since in this paper we restrict the number of sources to be the same as the number of unknown states, the above assumption is generically satisfied.\footnote{Throughout, ``generic" means with probability $1$ for randomly drawn coefficient matrices $C$.}$^,$\footnote{Although we have assumed that the number of sources and signals are the same, our results extend to cases in which there are fewer sources than states, so long as 
$e_1'$ is spanned by the whole set of signal coefficient vectors and not by any proper subset.}

Assumption of \emph{no redundant sources} simplifies our analysis, as it guarantees that all sources will be sampled infinitely often under the criteria we consider. With redundant sources, a new question emerges regarding which subset of sources the DM will choose from. Characterization of that subset is the focus of \cite{LiangMu2017}.

\subsection{Interpretations}
We provide below several interpretations of this framework. 

\emph{News Sources with Correlated Biases.} On election day $T$, a DM will choose which of two candidates $I$ and $J$ to vote for, where his payoff depends on $\theta_1=v_I-v_J$, the difference between the candidates' qualities $v_I$ and $v_J$. In each period up to time $T$, the DM can acquire information from different news sources. All sources provide biased information, and moreover the biases are correlated across the sources. As the DM acquires information, he learns not only about the payoff-relevant state $\theta_1$, but also how to de-bias (and aggregate) information from the various news sources.

\smallskip

\emph{Attribute Discovery.}  A product has $K$ unknown attributes $\tilde{\theta}_1, \dots, \tilde{\theta}_K$. Its value $\theta_1$ is some arbitrary linear combination of these attribute values. For example, the DM may want to learn the value of a conglomerate composed of several companies, where each company $i$ is valued at $\tilde{\theta}_i$ and the value of the conglomerate is $\theta_1:= \alpha_1 \tilde{\theta}_1 + \dots + \alpha_K \tilde{\theta}_K$. The DM has access to (noisy) observations of different linear combinations of the attributes; for example, he might have access to evaluations of each $\tilde{\theta}_i$ individually.\footnote{This model can be rewritten in our framework above, where the state vector is $(\theta_1, \tilde{\theta}_2, \dots, \tilde{\theta}_{K})$.} At some endogenously chosen end time, the DM decides whether or not to invest in the conglomerate.

\smallskip

\emph{Sequential Polling.}
A polling organization seeks to predict the average opinion in the population towards an issue. There are $K$ demographic groups in the population, and opinions in demographic group $k$ are normally distributed with unknown mean $\mu_k$ and known variance $\sigma_k^2$. The fraction of the population in each demographic group $k$ is $p_k$, so the average opinion is $\theta_1 := \sum_{k} p_k \mu_k$. It is not feasible to directly sample individuals according to the true distribution $p_k$, but the organization can sample individuals according to other non-representative distributions $\hat{p}_k \neq p_k$. Each period, the polling organization allocates a fixed budget of opinion samples across the available distributions (polling technologies), and posts a prediction for $\theta_1$. Its payoff is the average prediction error across some fixed number of periods. 

\smallskip

\emph{Intertemporal Investment.} Each action $a_t$ is a decision of how to allocate capital between consumption, and two investment possibilities: a  liquid asset (bond), and an illiquid asset (pension fund). The return to the liquid asset is known: $1$ dollar saved today is worth $e^{r}$ dollars tomorrow. The return to the illiquid asset is unknown, and it is the payoff-relevant state in the worker's problem; that is, every dollar invested today in the pension fund deterministically yields $e^{\theta_1}$ dollar(s) tomorrow. The worker works for $T$ periods, and in each of these periods he learns about $\theta_1$ (from some information sources) and then allocates his wealth across consumption, saving and investment. In period $T+1$, the worker retires and receives all the returns from his investments into the illiquid asset. His objective is to maximize the aggregated sum of his discounted consumption utilities and the payoff after retirement.\footnote{An important assumption of this example is that the return to investment is deterministic and only observed at the end. However, our model and results extend to a situation where there are ``free" signals arriving each period that do not count toward the capacity constraint $B$. By considering the realized log return as a particular free signal, the extension of our model covers the case where investment returns are stochastic and the DM observes past return realizations.}

\section{(Eventual) Optimality of Myopic Rule} \label{results}

\subsection{Myopic Information Acquisition} \label{sec:defineMyopic}
A strategy consists of an information acquisition strategy and a decision strategy. An \emph{information acquisition strategy}  is a measurable map from possible histories of signal realizations to multi-sets of $B$ signals, and a \emph{decision strategy} is a map from histories to actions. 

We will say that an information acquisition strategy is myopic if it proceeds by choosing signals that maximally reduce (next period) uncertainty about the payoff-relevant state.

\begin{definition}
An information acquisition strategy is \emph{myopic}, if at every next period, it prescribes choosing the $B$ signals that (combined with the history of observations) lead to the lowest posterior variance about $\theta_1$.
\end{definition}
\noindent Note that the $B$ signals which minimize the posterior variance also Blackwell dominate any other multi-set of $B$ signals (see e.g.\ \cite{HansenTorgersen}). Thus, myopic acquisition is optimal if the current period is the last chance for information acquisition, and this is true no matter what the payoff function is.

Our results below reveal a close relationship between the optimal information acquisition strategy and the myopic information acquisition strategy. We do not pursue a characterization of the optimal decision strategy, which in general depends on the payoff function, although we point out one application of our main results towards simplification of this characterization. 

\subsection{Main Results}
We present three results regarding optimality of the myopic information acquisition rule: Theorem \ref{propBatch} says that myopic information acquisition is optimal from period 1 if $B$ (the number of signals acquired each period) is sufficiently large. Our next two results hold for arbitrary $B$: Theorem \ref{propSeparable} provides a sufficient condition on the prior and the coefficient matrix $C$ under which myopic information acquisition is optimal from period 1, and Theorem \ref{propGenericEventual} states that the optimal rule is \emph{eventually} myopic in generic environments.

\begin{theorem}[Immediate Optimality under Many Observations]\label{propBatch}
Fix any prior and signal structure, and suppose $B$ is sufficiently large. Then the DM has an optimal strategy that acquires information myopically.\footnote{Without further assumptions on the payoff function $U$, we cannot assert \emph{strict} optimality of the myopic information acquisition strategy. For instance, this would not be true if there exists a ``dominant" action sequence that maximizes $U(a_1, a_2, \dots; \theta_1)$ for every value of $\theta_1$. But in most other cases, strictly more precise beliefs do lead to strictly higher expected payoffs, which implies unique optimality of the myopic rule.}
\end{theorem}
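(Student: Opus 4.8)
The plan is to strip the problem down to a deterministic optimization over sampling counts and then show that, once $B$ is large, the myopic rule traces out the pointwise variance-minimizing path. The starting observation is that with a Gaussian prior and Gaussian signals the posterior covariance matrix is a \emph{deterministic} function of the vector of sampling counts $n=(n_1,\dots,n_K)$, independent of the realized signal values: after sampling source $k$ exactly $n_k$ times the posterior precision is $(V^0)^{-1}+\sum_k n_k\Lambda_k$ with $\Lambda_k:=c_kc_k'/\sigma_k^2$, and the realizations move only the posterior mean. Writing $f(n):=e_1'\bigl((V^0)^{-1}+\sum_k n_k\Lambda_k\bigr)^{-1}e_1$ for the posterior variance of $\theta_1$, the law of total variance shows that the joint law of $\theta_1$ and its posterior mean at each period is pinned down by $f$. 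Since the payoff is $u_T(a_T,\theta_1)$, i.e.\ a single action taken at the (possibly random) terminal period, only the belief about $\theta_1$ at that period is payoff-relevant; hence a strategy whose variance path is pointwise (weakly) lower is Blackwell-more-informative about $\theta_1$ at every period and thus weakly better for \emph{every} $U$. It therefore suffices to show that, for $B$ large, the myopic rule attains at each period the minimum of $f$ over all allocations consistent with the budget used so far.

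I would then analyze the deterministic problem $\min\{f(n):\sum_k n_k=N\}$. The key structural fact is an asymptotic separation: expanding the inverse for large counts gives $f(n)=\sum_k b_k/n_k+O(1/N^2)$, where $b_k=\sigma_k^2\,(C^{-1})_{1k}^2$, and each $b_k>0$ by Non-Redundancy (the first row of $C^{-1}$ has no zero entries). The leading term is separable, strictly convex, and coercive in each coordinate, so it forces every source to be sampled; minimizing it under the budget constraint yields the ray $n^\star(N)=Np^\star$ with $p^\star_k\propto\sqrt{b_k}>0$. I would next establish that the minimizer of the \emph{exact} $f$ stays uniformly close to this ray, namely $n^\star(N)=Np^\star+O(1)$ with deviation bounded independently of $N$, via a first-order (KKT) perturbation argument: the $O(1/N^3)$ corrections to the gradient, divided by the $\Theta(1/N^3)$ curvature of the separable part, shift the optimal allocation by only a bounded amount.

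Given this bounded-deviation estimate, the proof concludes by a nesting-plus-greedy argument that does \emph{not} require exact separability. Because $n^\star(N)=Np^\star+O(1)$ with $p^\star_k>0$, once $B$ exceeds a threshold determined by $\min_k p^\star_k$ and the $O(1)$ bound—hence by the primitives $C,V^0,\{\sigma_k\}$ but not by $U$—the optima along the grid are nested: $n^\star((t-1)B)\le n^\star(tB)$ componentwise for every $t$. Nesting is exactly what lets block-myopic track the optimum: the myopic choice at period $t$ solves $\min\{f(m):m\ge n^\star((t-1)B),\ \textstyle\sum_k m_k=tB\}$, and when the unconstrained budget-$tB$ minimizer $n^\star(tB)$ dominates $n^\star((t-1)B)$ it is feasible for this constrained problem, so the two minima coincide. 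An induction on $t$, started from $n=0$, then gives myopic counts equal to $n^\star(tB)$ at every period; thus the myopic rule attains the pointwise-minimal variance, which by the first step is optimal for every payoff function.

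The step I expect to be the main obstacle is the uniform bounded-deviation estimate $n^\star(N)=Np^\star+O(1)$. The delicacy is that the non-separable corrections to $f$ are of order $1/N^2$—the \emph{same} order as the overall resolution of the separable objective—so they cannot simply be discarded; what rescues the argument is that their variation across the nearby allocations that actually determine the optimum is only $O(1/N^3)$, matching the $\Theta(1/N^3)$ curvature of the leading term, which confines the displacement of the minimizer to $O(1)$. Making this precise uniformly in $N$, and extracting from it an explicit, payoff-independent threshold on $B$, is the crux; by comparison the reduction to the variance path and the greedy/nesting step are routine once the estimate is in hand.
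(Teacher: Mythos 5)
Your core argument is essentially the paper's own: reduce to the deterministic sequence of $t$-optimal divisions, establish the asymptotic characterization $n^\star(N)=Np^\star+O(1)$ with $p^\star_k\propto \sigma_k\lvert[C^{-1}]_{1k}\rvert>0$ (the paper's $\lambda_k$), deduce componentwise nesting of $n^\star((t-1)B)\le n^\star(tB)$ for $B$ large, and close with the greedy induction. Your identification of the crux is also on target: the paper obtains the $O(1)$ bound from the first-order condition $\partial_i f=\partial_j f+O(1/t^3)$ and the factorization $\lambda_i n_j-\lambda_j n_i=O(1)$, which is exactly your ``gradient perturbation over curvature'' heuristic made rigorous; and your direct block-of-size-$B$ nesting is the route the paper itself uses for the explicit bound in Proposition \ref{propBoundOnB} (the published proof of Theorem \ref{propBatch} instead passes through the intermediate step $n(t+K-1)\ge n(t)$ via the order-difference Lemma \ref{lemmSecondDer}, but both derivations rest on the same asymptotics).

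The one genuine gap is in your first step. You write that the payoff is $u_T(a_T,\theta_1)$ for a (possibly random) terminal period, but the theorem is stated for an \emph{arbitrary} intertemporal payoff $U(a_1,a_2,\dots;\theta_1)$ in which actions are taken every period and may condition on the full history. The claim that a pointwise-lower posterior-variance path is ``Blackwell-more-informative at every period and thus weakly better for every $U$'' is then not a period-by-period application of static Blackwell dominance: one must compare the deterministic myopic path against an arbitrary, history-dependent competitor strategy, and the period-$t$ and period-$t'$ beliefs of the two strategies must be coupled consistently so that a single decision rule can replicate the competitor's payoff at \emph{all} periods simultaneously. This is the content of the paper's dynamic Blackwell lemma (Lemma \ref{lemmDynamicBlackwell}), which constructs an explicit Markov kernel between belief histories; the paper emphasizes that the analogous equivalence fails for non-normal signals (Greenshtein's Bernoulli counterexample), so it cannot be waved through as routine. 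Your combinatorial core is sound, but the proof is incomplete without this coupling argument (or an explicit restriction of the theorem to terminal-action payoffs).
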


Optimally, the DM chooses the most informative $B$ signals in the first period based on his prior, then chooses the most informative $B$ signals in the second period based on his updated posterior, and so on. Note that since posterior variances are independent of signal realizations, and we have assumed that there is no feedback from actions, the above myopic strategy is \emph{history-independent}, and can be represented as a deterministic signal path. This implies that instead of solving for the optimal decision strategy and information acquisition strategy jointly (as would otherwise be required), one can solve for the optimal decision strategy with respect to an exogenous stream of information.\footnote{An application of this two-step approach (in continuous time) appears in the concurrent work of \cite{FudenbergStrackStrzalecki}, Section 3.5. See Section \ref{ex:contTime} for a brief discussion of how our model extends to continuous time.}

We additionally mention that Theorem \ref{propBatch} can be strengthened to optimality of  myopic information acquisition at all histories, including those in which the DM has previously deviated from the myopic rule. Finally, a precise bound for how large $B$ must be appears in Section \ref{sec:precisionCorrelation}.

Our next two results hold for arbitrary block sizes $B$. First, the myopic rule is again optimal from period 1 in a class of ``separable" environments. Let $f(q_1, \dots, q_K)$ denote the DM's posterior variance about $\theta_1$, updating from $q_i$ observations of each signal $X_i$. An informational environment is separable if its posterior variance function can be decomposed in the following way:

\begin{definition}
The informational environment $(V^0, C, \{\sigma_{i}^2\})$ is \emph{separable} if there exist convex functions $g_1, \dots, g_K$ and a strictly increasing function $F$ such that 
\[
f(q_1, \dots, q_K) = F(g_1(q_1) + \dots + g_K(q_K)).
\]
\end{definition}

\noindent Intuitively, separability ensures that observing signal $i$ does not change the relative value of other signals, but strictly decreases the marginal value of signal $i$ relative to every other signal.

Note that separability is not defined directly on the primitives of the informational environment ($V^0$, $C$, and $\{\sigma_i^2\}$), as it is based instead on the posterior variance function $f$. Nevertheless, $f$ can be directly computed from these primitives, and so is not an endogenous object.

The result below says that myopic information acquisition is optimal in all separable informational environments.

\begin{theorem}[Immediate Optimality in Separable Informational Environments]\label{propSeparable} 
Suppose the informational environment is separable. Then for every $B \in \mathbb{N}^{+}$, the DM has an optimal strategy that acquires information myopically. 
\end{theorem}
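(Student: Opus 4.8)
The plan is to prove Theorem \ref{propSeparable} by showing that in a separable environment, the myopic information acquisition problem and the full dynamic problem coincide because the posterior variance function $f$ has a special structure that makes greedy allocation globally optimal. Since posterior variances are deterministic and independent of signal realizations (as noted after Theorem \ref{propBatch}), the information acquisition problem reduces to choosing, for each period, how to increment the observation counts $(q_1,\dots,q_K)$, with the goal of inducing the most informative possible sequence of beliefs about $\theta_1$. The key reduction is that a strategy is optimal for \emph{every} payoff function $U$ if and only if it produces a Blackwell-most-informative sequence of posteriors about $\theta_1$; because beliefs are Gaussian and one-dimensional in $\theta_1$, Blackwell informativeness is equivalent to a lower posterior variance $f$. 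So it suffices to show that myopic (greedy) minimization of $f$ at each period in fact yields the pointwise-lowest achievable posterior variance after every number of periods $t$.

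First I would set up the discrete allocation problem precisely: after $t$ periods the DM has chosen a total allocation vector $q = (q_1,\dots,q_K)$ with $\sum_i q_i = tB$, reachable by a feasible path that adds a multiset of $B$ signals each period. The myopic rule picks, at each step, the $B$-increment that most decreases $f$. I want to show that the allocation vector reached by the myopic path after $t$ periods minimizes $f$ over all feasible allocation vectors reachable in $t$ periods, \emph{simultaneously} for all $t$. Using separability, $f(q) = F(g_1(q_1) + \dots + g_K(q_K))$ with $F$ strictly increasing, so minimizing $f$ is equivalent to minimizing the separable sum $G(q) := \sum_i g_i(q_i)$. Each $g_i$ is convex, and I would argue from Assumption \ref{assumptionIdentifiability} (all sources must be sampled infinitely often) that each $g_i$ is strictly decreasing in $q_i$, so that additional observations always help. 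Convexity of $g_i$ means the marginal reduction $g_i(q_i) - g_i(q_i+1)$ is nonincreasing in $q_i$: each signal exhibits diminishing returns to its own observations, and crucially the marginal value of signal $i$ does not depend on $q_j$ for $j \neq i$.

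The heart of the argument is then a matroid/greedy-exchange claim: when minimizing a separable sum of convex functions subject to a cardinality constraint $\sum_i q_i = tB$, the greedy rule that repeatedly takes the single best marginal decrement produces an allocation that is optimal at every cardinality level. I would make this rigorous by an exchange argument. Suppose the myopic allocation $q^{\text{myo}}$ after $t$ periods is not optimal, and let $q^*$ be an optimal allocation for that $t$ with $G(q^*) < G(q^{\text{myo}})$. Since $\sum_i q_i^* = \sum_i q_i^{\text{myo}}$, there exist coordinates $j$ with $q_j^* > q_j^{\text{myo}}$ and $k$ with $q_k^* < q_k^{\text{myo}}$. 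Using convexity (diminishing marginal returns) of each $g_i$, I would show that one can transfer a unit from coordinate $j$ to coordinate $k$ in $q^*$ without increasing $G$, progressively transforming $q^*$ into $q^{\text{myo}}$ and contradicting strict suboptimality; the greedy choices are justified precisely because the marginal gain from the next observation of signal $i$ is a function of $q_i$ alone and is monotone decreasing, so the sorted sequence of all available marginal gains (across all signals and all observation counts) is exactly what greedy consumes in order. One subtlety is the block constraint: the DM must add exactly $B$ signals per period rather than one at a time. I would handle this by noting that within a block the $B$ greedy single-signal decrements can be chosen in any order and remain the top $B$ marginal gains available, so the block-myopic allocation coincides with the allocation obtained by $B$ consecutive single-step greedy moves; hence optimality at each period boundary $t$ follows from optimality of the single-step greedy sequence restricted to multiples of $B$.

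The main obstacle I anticipate is the interaction between the \emph{block} structure and the exchange argument: greedy optimality for separable convex minimization is standard when units are added one at a time, but here feasibility is constrained to allocation vectors whose total is a multiple of $B$, and I must confirm that the myopic block rule never gets ``trapped'' relative to an optimal block path. The resolution is that because the marginal-gain sequence is monotone within each coordinate and separable across coordinates, the set of allocations reachable in $t$ blocks is exactly the set with $\sum_i q_i = tB$, and the globally greedy single-step order visits an allocation with total $tB$ that is optimal among all such allocations; this optimal allocation is itself reachable as a valid block path (group the single steps into blocks of $B$). Establishing this careful equivalence between the single-step greedy frontier and the block-feasible frontier — and verifying that the per-period multiset choice of the myopic rule lands on exactly the greedy frontier at each $t$ — is the step requiring the most care, but it reduces to the convexity and separability of $G$ together with the integrality of the $q_i$.
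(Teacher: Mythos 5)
Your proposal is correct and follows essentially the same route as the paper: use separability to reduce $t$-optimality to greedy minimization of a sum of convex functions under a cardinality constraint (where the standard exchange argument shows the myopic path attains the $t$-optimal division at every $t$, including at block boundaries), and then pass from pointwise-lower posterior variances to optimality for all intertemporal payoffs. The one caution is that this last passage---that a signal path inducing uniformly lower posterior variance is weakly better in \emph{every} dynamic decision problem---is not an immediate consequence of the static Blackwell ordering of normal experiments but is precisely the paper's Lemma \ref{lemmDynamicBlackwell} and Corollary \ref{corDynamicBlackwell}, proved via a Markov-kernel coupling of belief histories; provided you invoke that result rather than treating it as obvious, your argument is complete.
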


Separability encompasses several classes of informational environments that are of independent interest. For example:

\smallskip

\noindent \emph{Example} (Orthogonal Signals). The DM's prior is standard Gaussian ($V^0 = \mathbf{I}_K$), and the row vectors of $C$ are orthogonal to one another.\footnote{This is because the signals are independent from each other, see also Example 2 in Figure \ref{fig:intuition}.}

\smallskip

\noindent \emph{Example} (Multiple Biases). There is a single payoff-relevant state $\theta_1 \sim \mathcal{N}(0, v_1)$. The DM has access to observations of $X_1 = \theta_1 + \theta_2 + \dots + \theta_K + \epsilon_1$, where each $\theta_i$ ($i > 1$) is a persistent ``bias" independently drawn from $\mathcal{N}(0, v_i)$, and $\epsilon_1 \sim \mathcal{N}(0, \sigma_1^2)$ is a noise term i.i.d.\ over time. Additionally, the DM has access to signals about each bias 
$X_i = \theta_i + \epsilon_i$ ($i>1$), 
where $\epsilon_i \sim \mathcal{N}(0, \sigma_i^2)$.\footnote{The DM's posterior variance about $\theta_1$ is given by
\begin{align*}
f(q_1, \dots, q_K) & = v_0 -\frac{v_0^2}{v_0 +  \frac{\sigma_1^2}{q_1} + \sum_{i=2}^{K} \left(v_i - \frac{v_i^2}{v_i+ \sigma_i^2/q_i}\right)}.
\end{align*}
This can be written in the separable form.
}

\smallskip

In all remaining cases, optimal signal choices are eventually myopic. 

\begin{theorem}[Generic Eventual Myopia]\label{propGenericEventual}
Fix any prior covariance $V^0$ and signal variances $\{\sigma_i^2\}_{i=1}^{K}$. For generic coefficient matrices $C$, there exists a time $T^* \in \mathbb{N}$ that depends only on the informational environment. For this $T^*$, and for any $B$ and any decision problem, the DM has an optimal strategy that acquires information myopically after $T^*$ periods. 
\end{theorem}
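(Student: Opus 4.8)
The plan is to show that the dynamic problem eventually reduces to a separable one, for which Theorem \ref{propSeparable} already delivers myopic optimality. The conceptual engine is the ``endogenous de-correlation'' described in the introduction: as observations accumulate, the posterior covariance over $\theta$ collapses, and the signals are perceived as providing asymptotically independent information about $\theta_1$. Concretely, I would study the posterior variance function $f(q_1, \dots, q_K)$ directly. After a long history, the relevant increments are governed by the inverse information (precision) matrix. Writing the total precision after $q_i$ observations of each source as $\Lambda(q_1,\dots,q_K) = (V^0)^{-1} + \sum_i \frac{q_i}{\sigma_i^2} c_i c_i'$, the posterior variance about $\theta_1$ is $f = e_1' \Lambda^{-1} e_1$. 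The first step is to obtain a sharp asymptotic expansion of $f$ for large $q_i$, isolating the leading-order behavior in each coordinate.

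The second step is the key structural observation: after sufficiently many observations of every source (which Assumption \ref{assumptionIdentifiability} guarantees will occur, since all sources are sampled infinitely often), the leading-order term of $f$ decomposes additively across sources. Intuitively, when precision is already large, the marginal effect of an extra observation of source $i$ on $\mathrm{Var}(\theta_1)$ decouples from how many observations the other sources have received, because the off-diagonal (cross-source) terms in $\Lambda^{-1}$ become lower-order. I would make this precise by expanding $\Lambda^{-1}$ and showing that $f(q_1,\dots,q_K) = \sum_i h_i(q_i) + (\text{lower order})$, where each $h_i$ is convex and decreasing in $q_i$. The genericity hypothesis on $C$ enters here: for generic coefficient matrices the leading coefficients in each direction are nonzero and the relevant quantities are in ``general position,'' ruling out degenerate cancellations that could spoil the clean decomposition or flip the sign of the relevant second differences.

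The third step is to convert this asymptotic separability into an exact finite-time statement. Since the decision is over discrete blocks of $B$ signals, I do not need $f$ to be exactly separable; I need only that, beyond some threshold of accumulated observations, the \emph{ranking} of marginal informativeness and the relevant complementarity/substitutability comparisons agree with those of a genuinely separable function. The plan is to show there exists $T^*$, depending only on $(V^0, C, \{\sigma_i^2\})$ and not on $B$ or the payoff function, such that after $T^*$ periods every source has been sampled enough that the second-difference conditions characterizing separability hold for all feasible continuations. One then invokes the argument behind Theorem \ref{propSeparable} on the ``tail'' subproblem starting at $T^*$: because the continuation environment behaves separably, the myopic (Blackwell-dominant) choice each period is optimal regardless of the payoff function and discounting.

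The main obstacle I anticipate is the third step — promoting an asymptotic (limiting) separability statement into a genuinely \emph{finite} cutoff $T^*$ that is uniform over all block sizes $B$ and all payoff functions. The delicate point is that a deviation from the myopic path could, in principle, postpone sampling of some source, so I must ensure the threshold is stated in terms of observation counts per source and that the separability-type inequalities, once they hold, are preserved under all admissible future sampling profiles. Establishing that the lower-order cross terms are dominated \emph{uniformly} over the relevant region of observation counts — rather than merely in the limit — is where the quantitative genericity estimates will have to do real work, and where I expect the sign conditions (convexity of each $h_i$, monotonicity of $F$) to require the most care.
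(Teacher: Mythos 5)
Your first two steps (the expansion of $f = e_1'\Lambda^{-1}e_1$ and the order comparison showing that cross-source terms in the inverse precision matrix are lower-order than own terms) correctly reproduce the paper's Lemma \ref{lemmSecondDer} and the de-correlation heuristic. But your third step --- promoting asymptotic separability to a finite cutoff $T^*$ by verifying that ``the second-difference conditions characterizing separability hold'' and then invoking the argument behind Theorem \ref{propSeparable} on the tail --- is exactly where the proof cannot proceed as you describe. The paper's Appendix \ref{appxExample2} gives a concrete counterexample to this strategy: an environment in which the $t$-optimal division $n(t)$ fails to be monotone at arbitrarily late times (the count of signal $1$ drops when passing from $t=3N+2$ to $t=3N+3$, for every $N$). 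The obstruction is not the magnitude of the cross-partials but discreteness: at late times the candidate divisions being compared have posterior variances within $O(1/t^4)$ of one another, i.e.\ inside the error of the separable approximation, and this near-tie can recur infinitely often. Ruling it out is precisely where genericity must do work, and the kind of genericity needed is number-theoretic rather than ``leading coefficients in general position'': the paper shows the near-tie condition is a Diophantine inequality in the limiting frequencies $\lambda_i/\lambda_j$, uses equidistribution of irrational rotations to get agreement with $n(t)$ at a density-one set of times (Proposition \ref{propGenericAlmostAll}), and then a simultaneous Diophantine approximation combined with Borel--Cantelli (Lemma \ref{lemmThreeDivisions}) to get eventual exact monotonicity of $n(t)$ for almost every choice of parameters.

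There is a second gap: even granting eventual monotonicity of $n(t)$, you must still show that the optimal strategy --- which may condition on signal realizations and hence induce stochastic division vectors --- actually lands on $n(t)$ at some late period, so that the monotone tail can be followed thereafter. Your proposal treats this as a matter of the tail subproblem ``behaving separably,'' but the paper needs genuine machinery here: the dynamic Blackwell lemma (Lemma \ref{lemmDynamicBlackwell}) to compare history-dependent strategies period by period, and the switch-deviation construction (Corollary \ref{corSwitch}) to show that any optimal strategy must track the frequencies $\lambda_k t + O(1)$ and coincide with $n(t)$ on a density-one set of times. Without these two ingredients --- the Diophantine genericity argument and the switch-deviation/dynamic-Blackwell argument --- your plan establishes only the \emph{approximate} monotonicity of $n(t)$, which is the content of the intermediate discussion in Section \ref{sec:intuition}, not the theorem.
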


\noindent That is, at all late periods, the optimal signal acquisitions are those that maximally reduce posterior variance in the given period.

The result above tells us that the optimal rule eventually proceeds by myopic signal acquisition; this is different from the statement that acquisition of signals myopically (from period 1) leads to the optimal signal path.  We show in Appendix \ref{appxMyopic} that this latter statement is also true. This complementary result (that the myopic signal path is eventually optimal) depends critically on Assumption \ref{assumptionIdentifiability} (no redundant signals). In particular, in environments with redundant signals, it is possible for the myopic and optimal signal paths to eventually sample from disjoint subsets of signals.\footnote{For example, suppose the available signals are $X_1=0.5 \theta_1 + \epsilon_1$, $X_2=\theta_1 + \theta_2 +\epsilon_2$, $X_3=\theta_1 - \theta_2 + \epsilon_3$, where noise terms are standard normal, states are independent, and prior variance about $\theta_2$ is larger than $3$. Myopic information acquisition (from period $1$) leads to exclusive sampling of signal $X_1$, while a patient DM eventually samples only from $X_2$ and $X_3$. This example is generalized in \citet{LiangMu2017}.} In contrast, we conjecture that Theorem \ref{propGenericEventual} (the optimal rule eventually proceeds myopically) extends beyond Assumption \ref{assumptionIdentifiability}. We leave this conjecture as an open question for future work. 

\section{Discussion} \label{discussion}

\subsection{Intuition for Theorems \ref{propBatch}-\ref{propGenericEventual}} \label{sec:intuition}
We begin with a simplified argument for two periods and $B=1$: Suppose that the best signal to acquire in period 1 is a part of the best pair of signals to acquire. In this case, no tradeoffs are necessary across the two periods, and it is optimal in both periods to acquire information myopically. In general however, signals that are individually uninformative (from the perspective of period 1) can be very informative as a pair; thus, myopic information acquisition in the first round can preclude acquisition of the best pair of signals.

At a high level, myopic information acquisition fails to be optimal when there are strong complementarities across signals. A key part of our argument is that complementarities ``wash out" as information is acquired, so that  signals are eventually perceived as  providing (approximately) independent information. After sufficiently many observations, the best next signal to acquire is (generically) a part of the best next pair of signals to acquire (and the best batch of $B$ signals, where $B$ is sufficiently large, is always a part of the best batch of $2B$ signals).

We now proceed with a more detailed intuition.

\bigskip

Consider a one-shot version of our problem, in which the DM allocates $t$ observations across the available signals. Define a \emph{$t$-optimal} ``division vector'' $n(t)$ to be any optimal allocation of these signals:
\[
n(t) =(n_1(t), \dots, n_K(t))\in \argmin_{(q_1,\dots,q_K) : q_i \in \mathbb{Z}_+, \sum_{i=1}^K q_i=t} f(q_1, \dots, q_K)
\]
where $n_i(t)$ is the number of observations allocated to signal $i$. Applying \citet{HansenTorgersen}, this allocation maximizes expected payoffs for any decision.

We study the evolution of the vectors $n(t)$ as  the number of observations $t$ varies. If each count $n_i(t)$ increases monotonically in $t$, then the division vectors $(n(t))_{t\geq 1}$ can be achieved under some sequential sampling rule; moreover, this sampling rule corresponds to myopic information acquisition.\footnote{Proceed by induction: in the first period the myopic rule chooses the signal that minimizes posterior variance. In the second period, he again wants to minimize posterior variance at the given period; since the division chosen by the totally optimal rule is best and feasible given the period $1$ choice, this is what myopic information acquisition will yield. So on and so forth.} A key ``dynamic Blackwell" lemma shows that a sequence of normal signals is better than another sequence (for all decision problems depending on $\theta_1$) if and only if it leads to lower posterior variances at every period.\footnote{This generalizes a result from \cite{Greenshtein}, see Section \ref{relatedLit} for further discussion.} Thus, optimality of myopic information acquisition directly follows from existence of a sequence $(n(t))_{t\geq1}$ of monotonically increasing division vectors.

Existence of such a sequence depends on whether there are strong complementarities across signals. In Example 2 of Figure \ref{fig:intuition}, signals are ``independent," and any allocation that is as close to balanced across the signals as possible is $t$-optimal. It is thus possible to find $t$-optimal division vectors that evolve monotonically. Theorem \ref{propSeparable} generalizes Example 2 to a \emph{class} of environments in which $n(t)$ evolve monotonically, and the optimal rule is myopic from period 1. 

\begin{figure}[h] 
\begin{center}
\includegraphics[scale=0.5]{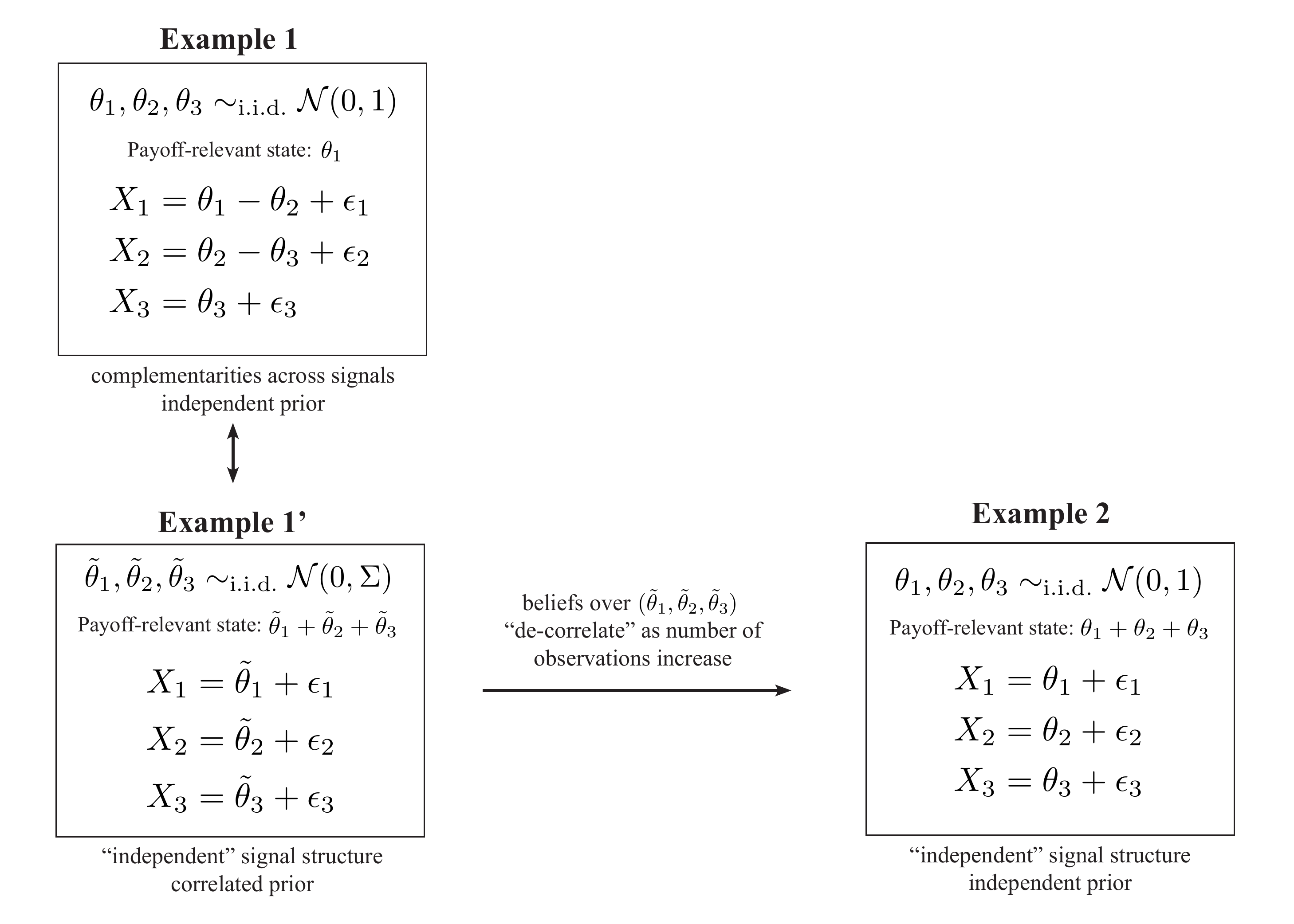}
\caption{}\label{fig:intuition}
\end{center}
\end{figure}
In general, the division vectors $n(t)$ need not be monotone, as shown in Example 1 of Figure \ref{fig:intuition}. Because signals $X_2$ and $X_3$ are strong complements (observation of either increases the value of observation of the other), we have that $n(5)=(4,1,0)$ but $n(6)=(3,2,1)$; that is, after four signal acquisitions, the best next signal to acquire is $X_1$, but the best pair of signals to acquire is $\{X_2,X_3\}$. The optimal allocation of six signals is not achievable from the best allocation of five signals.\footnote{To see these are the $t$-optimal divisions, we calculate that at $t = 5$, $f(4,1,0) = \frac{11}{23} < \frac{14}{29} = f(3,1,1) = f(3,2,0)$. Whereas at $t = 6$, $f(3,2,1) = \frac{5}{11} < \frac{17}{37} = f(4,1,1) = f(4,2,0)$.} 

An important step in our proofs is to show that environments which start off like Example 1 necessarily become ``like" Example 2 (as observations of each signal accumulate). To facilitate this comparison, rewrite Example 1 in the following way: Define a new set of states $\tilde{\theta}_1=\theta_1-\theta_2$, $\tilde{\theta}_2=\theta_2-\theta_3$, and $\tilde{\theta}_3 = \theta_3$. The original prior over $(\theta_1,\theta_2,\theta_3)$ defines a new prior over $(\tilde{\theta}_1,\tilde{\theta}_2,\tilde{\theta}_3)$, and the original payoff-relevant state can be re-expressed in the new states as $\theta_1= \tilde{\theta}_1+\tilde{\theta}_2+\tilde{\theta}_3$. The signal structure in Example 1' is the same as in Example 2, with the crucial difference that the prior is correlated in Example 1' and independent in Example 2. Given sufficiently many observations of each signal in Example 1', however, the DM's posterior beliefs over $(\tilde{\theta}_1, \dots, \tilde{\theta}_K)$ become \emph{almost independent}. That is, not only does learning about each $\tilde{\theta}_k$ occur, but the states $\tilde{\theta}_k$ ``de-correlate." Thus, eventually the signals can be viewed as approximately independent.

The above heuristic statements about de-correlation roughly follow from a Bayesian version of the Central Limit Theorem. We establish a technical lemma (Lemma \ref{lemmSecondDer} in Appendix \ref{appxPrelim}) that strengthens this with a comparison of the value of different signals. Specifically, we characterize the externality that observation of a given signal $X_i$ has on the marginal value of future observations. We show that the effect of observing $X_i$ on the value of future observations of $X_i$ (eventually) far outweighs its effect on future observations of any $X_j$, $j\neq i$. That is, the only effect that observation of  $X_i$ can have on the ranking of signals is by reducing the position of signal $i$. This property is immediate when the covariance matrix is the identity (as in Example 2), and holds also when the covariance matrix is close to diagonal (almost independent). At late periods we have a setting much like Example 2, and the problem is ``near-separable."

Now finally, observe that the transformation we used to rewrite Example 1 as Example 1' was not special. Indeed, we can rewrite any signal structure in the following way: For each signal $k$, define a new state $\tilde{\theta}_k=\langle c_k,\theta\rangle$, so that the signal $X_k$ is simply $\tilde{\theta}_k$ plus independent Gaussian noise. Under Assumption \ref{assumptionIdentifiability} (no redundant signals), the payoff-relevant state can be rewritten as a unique linear combination of the transformed states $\tilde{\theta}_1, \dots, \tilde{\theta}_K$, and the original prior defines a new prior over $(\tilde{\theta}_1,\dots,\tilde{\theta}_K)$. The same assumption allows us to show each signal is sampled infinitely often. Hence ``de-correlation" necessarily occurs.

The arguments above form the core of our proofs of Theorem \ref{propBatch} and Theorem \ref{propGenericEventual}. They establish that $n(t)$ eventually evolves approximately monotonically.\footnote{See Appendix \ref{appxExample2} for an example in which $n(t)$ fails to be monotone even when we restrict to arbitrarily late periods.} We introduce two different conditions that allow us to strengthen this to (eventual) \emph{exact} optimality of myopic information acquisition.

Our first approach is to allow for acquisition of a larger number of signals each period. We show that in transitioning from the $t$-optimal division to the $(t+d)$-optimal division (where $t$ is sufficiently large relative to $d$), if some signal count decreases (failing sequentiality), then every other signal count increases by at most one. Specifically, taking $d = K-1$, we can show that $n_i(t+K-1) \geq n_i(t)$ for every signal $i$ at every period $t \geq T$, with $T$ a large number depending on the informational environment. Thus, given block size $B \geq \max\{K-1, T\}$, the division vectors $n(Bt)$ are attainable using a sequential rule from period $1$. Applying the dynamic Blackwell lemma mentioned earlier, we have that the optimal strategy immediately follows $n(Bt)$, myopically. This is the intuition behind Theorem \ref{propBatch}.

Our second approach quantifies the ``typicality" of failures of monotonicity of $n(t)$. (Below, assume $B=1$ for illustration.) We show  that at late periods $t$, these failures occur for a purely technical reason: The vectors $n(t)$ eventually seek to approximate an optimal limiting frequency $\lambda \in \Delta^{K-1}$ across signals, but do so under an integer constraint; and the best integer approximations to $\lambda t$ may not be monotone.\footnote{This is indeed the case for Example 2, see Appendix \ref{appxExample2} for details.} A key lemma demonstrates these integer approximations do (eventually) evolve monotonically for a ``measure-1" set of coefficient matrices $C$.\footnote{The lemma follows from results in Diophantine approximation theory, which studies the extent to which an arbitrary real number can be approximated by rational numbers.} Thus, if the optimal strategy coincides with $n(t)$ at some late period $t$, it will follow these $t$-optimal divisions afterwards.

The last step is to verify that the optimal strategy will coincide with $n(t)$. We argue that if this were not true, then the DM could ``deviate toward the $t$-optimal divisions" and (in generic environments) improve the posterior variance at every period,  contradicting optimality of the original strategy. Hence generically, the optimal strategy will eventually coincide with $n(t)$ and subsequently follow it. This yields Theorem \ref{propGenericEventual}.

\subsection{Precision vs.\ Correlation} \label{sec:precisionCorrelation}
With sufficiently many observations, the DM's beliefs simultaneously become more precise and less correlated, and these two effects are confounded in our main results. It is tempting to think that Theorem \ref{propBatch} (or Theorem \ref{propGenericEventual}) follows from the eventual precision of beliefs. However, as our discussion above suggests, the key feature is not how precise beliefs are, but how correlated they are. Specifically, the block size $B$ needed in Theorem \ref{propBatch} depends on how many observations are required for the transformed states $\tilde{\theta}_1, \dots, \tilde{\theta}_K$ to ``de-correlate,'' at which point complementarities across signals are weak. 

Below we make this formal with a bound on $B$. To state our result, we define transformed states $\tilde{\theta}_k = \frac{1}{\sigma_k}\langle c_k, \theta \rangle$ (dividing through by $\sigma_k$ normalizes all error variances to $1$), and let $\tilde{V}$ denote the prior covariance matrix over these transformed states. The payoff-relevant state $\theta_1$ can be rewritten as a linear combination of the transformed states: $\theta_1 = \langle w, \tilde{\theta} \rangle$ for some fixed payoff weight vector $w \in \mathbb{R}^{K}$. In the following result we assume for simplicity that $w = \mathbf{1}$ is the vector of ones, although our analysis can be easily adapted to any $w$. 

\begin{proposition}\label{propBoundOnB}
Let $R$ denote the operator norm of the matrix $(\tilde{V})^{-1}$.\footnote{The operator norm of a matrix $M$ is defined as $\Vert M \Vert_{op} = \sup \left\{ \frac{\Vert Mx \Vert }{\Vert x \Vert} \, : \, x \in \mathbb{R}^K \mbox{ with } x\neq \bold{0}\right\}$.} Suppose $w = \mathbf{1}$, then $B \geq 8(R+1)K^{1.5}$ is sufficient for Theorem \ref{propBatch} to hold. 
\end{proposition}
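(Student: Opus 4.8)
The goal is to prove that $B \geq 8(R+1)K^{1.5}$ suffices for the myopic rule to be optimal from period 1, where $R = \|\tilde{V}^{-1}\|_{op}$. Based on the intuition in Section \ref{sec:intuition}, optimality of the myopic rule follows (via the dynamic Blackwell lemma) from the existence of monotone $t$-optimal division vectors achievable sequentially in blocks of size $B$. The key structural fact I would target is the following: if $t$ is large enough that the transformed states have sufficiently de-correlated, then passing from the $t$-optimal division $n(t)$ to the $(t+d)$-optimal division decreases no coordinate by more than what is recovered within $d=K-1$ further steps; concretely, $n_i(t+K-1)\geq n_i(t)$ for all $i$. The quantitative content of the proposition is that ``$t$ large enough'' can be replaced by ``$t \geq $ (something like) $(R+1)$ up to dimension factors,'' and that one block of size $B\geq 8(R+1)K^{1.5}$ already clears this threshold.

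\medskip

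\textbf{Plan of attack.} First I would write the posterior variance function $f(q_1,\dots,q_K)$ in terms of the transformed states. After $q_i$ observations of the normalized signal $\tilde\theta_i$, the posterior precision matrix over $\tilde\theta$ is $\tilde{V}^{-1} + \diag(q_1,\dots,q_K)$, so that with payoff weight $w=\mathbf{1}$,
\[
f(q_1,\dots,q_K) = \mathbf{1}'\bigl(\tilde{V}^{-1} + \diag(q)\bigr)^{-1}\mathbf{1}.
\]
This is the object to analyze. The marginal value of one more observation of signal $i$ is the discrete decrement $f(q) - f(q+e_i)$, and the complementarity between signals $i$ and $j$ is captured by the mixed second difference. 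I would compute these using the matrix identity $\frac{\partial}{\partial q_i}(A+\diag(q))^{-1} = -(A+\diag(q))^{-1} e_i e_i' (A+\diag(q))^{-1}$, which expresses the first and second derivatives of $f$ in terms of entries of $M(q) := (\tilde{V}^{-1}+\diag(q))^{-1}$. The point is that the diagonal-dominance of $M(q)$ improves as the $q_i$ grow: when every $q_i$ is large relative to $R$, $M(q)$ is close to $\diag(1/q_i)$, its off-diagonal entries are of lower order, and the cross-signal complementarities become negligible relative to the own-signal (substitution) effect. This is precisely the content appealed to in the text as Lemma \ref{lemmSecondDer}, and I would invoke (or reprove a quantitative version of) it here.

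\medskip

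\textbf{From de-correlation to the bound.} The substantive step is to turn the qualitative statement ``complementarities wash out'' into the explicit threshold $8(R+1)K^{1.5}$. I would proceed by: (i) showing that any $t$-optimal division is reasonably balanced, i.e.\ no coordinate $n_i(t)$ is too small relative to $t/K$, since leaving some $\tilde\theta_i$ poorly observed keeps a large term $\approx 1/q_i$ in $f$ that a balanced reallocation would remove — here $R$ enters because $\tilde{V}^{-1}$ shifts the effective precision by at most $R$; (ii) bounding the off-diagonal entries of $M(n(t))$ by a Neumann-series argument: writing $\tilde{V}^{-1}+\diag(q) = \diag(q)^{1/2}(I + \diag(q)^{-1/2}\tilde{V}^{-1}\diag(q)^{-1/2})\diag(q)^{1/2}$ and expanding, the correction term has operator norm at most $R/\min_i q_i$, which is small once $\min_i q_i \gtrsim R$; (iii) comparing the magnitude of the mixed second difference (the complementarity) to the own second difference (the substitution), and showing the latter dominates once $B$ exceeds the stated threshold, so that observing signal $i$ only lowers the marginal value of signal $i$ and never raises signal $i$ above another signal. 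The factors $K^{1.5}$ and the constant $8$ come from combining the $K$-fold balancedness loss with the $\sqrt{K}$ in passing between $\ell_2$ operator norms and the $\ell_1$-type payoff functional $\mathbf{1}'(\cdot)\mathbf{1}$, plus slack to absorb the integer (rounding) effects over a block of $B$ steps.

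\medskip

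\textbf{Main obstacle.} I expect the hardest part to be step (iii): controlling the discrete (integer) second differences uniformly and tracking constants so that the final threshold is genuinely $8(R+1)K^{1.5}$ rather than merely ``some function of $R$ and $K$.'' The derivative-based estimates give clean bounds on the continuous relaxation of $f$, but the actual argument needs the \emph{discrete} mixed and own differences, and one must ensure the continuous-to-discrete error (and the rounding needed to keep all $q_i$ integers and the division achievable within a single block of $B$) does not overwhelm the de-correlation gain. Establishing the balancedness lower bound on $\min_i n_i(t)$ with the right dependence on $R$ and $K$, and then verifying that $\min_i n_i(t) \gtrsim R/\sqrt{K}$ holds once $t \geq B$, is where the interplay between the spectral quantity $R$ and the combinatorial dimension factor $K^{1.5}$ must be handled carefully; this is the crux of obtaining an explicit, clean constant.
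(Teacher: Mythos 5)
Your setup is right (working in the transformed model with $f(q)=\mathbf{1}'(\tilde V^{-1}+\diag(q))^{-1}\mathbf{1}$, and the recognition that the content of the proposition is a quantitative ``balancedness'' statement about $t$-optimal divisions), but the proposal has a genuine gap: the step you yourself flag as the crux is exactly the step the proof must supply, and the route you plan for it --- comparing discrete own second differences to mixed second differences \`a la Lemma \ref{lemmSecondDer}, then tracking constants through a Neumann expansion --- is not what produces the bound $8(R+1)K^{1.5}$, and it is far from clear it could without substantially more work. The paper's argument never touches second derivatives. It proves a single sharp concentration estimate (Lemma \ref{lemmFreqSharp}): for $t\ge 8(R+1)K^{1.5}$, every coordinate satisfies $\lvert n_i(t)-t/K\rvert\le 4(R+1)\sqrt K$. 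This is obtained purely from the \emph{first-order} optimality conditions. Writing $\gamma=(\tilde V+E)^{-1}\tilde V\mathbf{1}$ with $E=\diag(1/q_i)$, one has $\partial_i f=-\gamma_i^2/q_i^2$ and the identity $\mathbf{1}-\gamma=\tilde V^{-1}E\gamma$, whence $\sum_i(1-\gamma_i)^2\le R^2\sum_j\gamma_j^2/q_j^2$; combining this with the $t$-optimality inequalities $\gamma_1^2/q_1^2\ge\gamma_j^2/(q_j(q_j+1))$ pins every $\gamma_i$ near $1$ and hence every $n_i(t)$ near $t/K$. This is where $R$ enters --- through the displayed operator-norm inequality, not through a vague ``shift of effective precision.''

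Once that concentration bound is in hand, the proposition is immediate arithmetic and none of your steps (ii)--(iii) is needed: with $B\ge 8(R+1)K^{1.5}$ the per-block increment satisfies $B/K\ge 8(R+1)\sqrt K = 2\cdot 4(R+1)\sqrt K$, so $n_i(B(t+1))\ge B(t+1)/K-4(R+1)\sqrt K\ge Bt/K+4(R+1)\sqrt K\ge n_i(Bt)$, i.e.\ $n(Bt)$ is coordinatewise monotone, and Corollary \ref{corDynamicBlackwell} delivers optimality of the myopic rule from period~1. So the missing idea is twofold: (a) the concentration lemma must be proved via the $\gamma$-vector and the norm of $\tilde V^{-1}$ rather than asserted as ``balancedness,'' and (b) you do not need the de-correlation/cross-partial machinery at all for this proposition --- monotonicity across blocks follows directly once the deviation of $n_i(t)$ from $t/K$ is bounded by half the per-block increment. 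As written, your proposal defers precisely the part that constitutes the proof.
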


Notice that this bound is increasing in the norm of $\tilde{V}^{-1}$. To interpret this, suppose first that we adjust the precision of the DM's prior beliefs over $\tilde{\theta_1}, \dots, \tilde{\theta}_K$ but fix the degree of correlation, for example by scaling $\tilde{V}$ by a factor less than $1$. Then the norm of $\tilde{V}^{-1}$ increases, and a larger number of signals $B$ is needed. This is because a more precise prior can be understood as ``re-scaling" the state space by shrinking all states towards zero. Since signal noise is not correspondingly rescaled, each signal now reveals less about the states, and de-correlation takes longer. 

In contrast, suppose we hold prior precision fixed and increase the degree of prior correlation. This would correspond to fixing the diagonal entries of $\tilde{V}$ and increasing the off-diagonal entries, so that the variances about individual states are unchanged but their covariances become larger in magnitude. Then the entire matrix $\tilde{V}$ becomes closer to being singular, the norm of its inverse increases and a larger $B$ is required. That is, a more correlated prior requires more observations to de-correlate.

Finally, we highlight that Proposition \ref{propBoundOnB} only provides a sufficient condition on the block size for the myopic rule to be optimal. However, the aforementioned comparative statics hold not just for the $B$ that we identify, but also for the smallest $B$ that produces the result. Indeed, these comparative statics are sharp in the continuous-time limit of our model (see Section \ref{ex:contTime}), where doubling the prior precision would also double the capacity $B$ needed for Theorem \ref{propBatch} to be true.  

To summarize, optimality of myopic information acquisition obtains quickly when prior beliefs are \emph{imprecise} and \emph{not too correlated}. Our results suggest that despite the amount of (residual) uncertainty in these situations, there is not much conflict between short-run and long-run information acquisition incentives. 

\subsection{How Important is Normality?} \label{sec:normality}

\emph{De-correlation.}  The key part of our argument is that signals eventually de-correlate.  This de-correlation derives from a Bayesian version of the Central Limit Theorem, and does not rely on special properties of normality. Consider a more general setting with signals $X_i = \tilde{\theta}_i + \epsilon_i$, where the noise term $\epsilon_i$ has an arbitrary distribution with zero mean and finite variance. Then, we have that the suitably normalized posterior distribution over $(\tilde{\theta}_1,\dots, \tilde{\theta}_K)$ converges towards a standard normal distribution (so that signals are approximately independent). We thus expect that given sufficiently many observations, our previous comparisons on the value of information extend beyond normal signals. 

\smallskip

But if we drop normality, then our results weaken in the following ways. Specifically, in working with general signal distributions, we sacrifice the potential for immediate optimality of the myopic rule, and also the generality to all intertemporal decisions.

\smallskip

\emph{Immediate Optimality of the Myopic Rule.} For normal signals, we established a $T$ such that given $T$ observations of each signal, the posterior covariance matrix (over the transformed states) is almost independent. Notably, this bound holds uniformly across all histories of signal realizations, thanks to the fact that posterior variances do not depend on signal realizations under normality. As mentioned above, we can use a Bayesian Central Limit Theorem to argue a similar property for other signal distributions. The difference is that the CLT gives us (near) independence 
\emph{almost surely}, so that at every period $t$, there is still positive probability (albeit vanishing as $t$ increases) that the normalized posterior covariance matrix is very different from the identity. This precludes us from demonstrating a block size $B$ given which the optimal rule would be myopic \emph{from period 1} (Theorem  \ref{propBatch}). For general signal distributions, we thus conjecture that \emph{almost surely} the optimal rule is eventually myopic, but do not know what conditions would produce immediate optimality of the myopic rule. 

\smallskip

\emph{General Intertemporal Payoffs.} The place where we rely most heavily on normality is the statement that our results hold for \emph{all} payoff criteria that depend only on $\theta_1$ (and actions). Indeed, when payoff-relevant uncertainty is one-dimensional (as it is here), then all normal signals can be Blackwell-ordered based on their posterior variances. We use this fact in Section \ref{sec:defineMyopic} when we define the myopic rule to maximize reduction in posterior variance. We use this fact again in Section \ref{sec:intuition} when we define the $t$-optimal divisions $n(t)$ without explicit reference to the payoff function.

Finally, while the above uses of normality are concerned with static decisions (i.e.\ taking an action once), we also need normality to be able to compare \emph{signal sequences}. Generalizing \cite{Greenshtein}, we show that the ranking of sequences of normal signals is the same whether we consider the class of static decision problems or the broader class of intertemporal decisions. This equivalence does not hold in general; see \cite{Greenshtein} for a counterexample involving Bernoulli signals.

\section{Games with Dynamic Information Acquisition} \label{games}

We now extend our results to a multi-player setting in which individuals privately acquire information before playing a normal-form game at a random (and exogenously determined) end date.\footnote{\cite{Reinganum} considers a similar multi-agent model with private information acquisition (specifically, firms engaging in R\&D before competing in oligopoly). Her model is further developed by \cite{Taylor} within the context of research tournaments. However, these papers assume perfectly revealing signals and are thus distinguished from our setting.} 

There are $N$ players (indexed by $i$), each of whom has access to a set of $K$ signals 
\[
X^i_k= \langle c^i_k, \bold{\theta}^{i} \rangle + \epsilon^i_k.
\]
The state vector $\theta^i = (\theta^i_1, \theta^i_2, \dots, \theta^i_K)$ represents persistent unknown states particular to player $i$. Noise terms $\epsilon_k^{it}$ are (normalized to) standard normal  and independent across signals, players and time. 

In each period up to and including the final period, each player $i$ acquires $B$ independent observations of his signals described above, possibly obtaining multiple (independent) realizations of the same signal. Signal choices and their realizations are both \emph{privately observed}. The final period is determined by an exogenous distribution $\pi$. At this period, agents play a one-shot incomplete information game, where each player $i$'s payoff $u_i(a,\omega)$ depends on actions $a = (a^1, \dots, a^N)$ in addition to a payoff-relevant state $\omega$. 

We require that the players share a common prior over all states ($\omega$ and the player-specific state vectors $(\theta^i)_{1 \leq i \leq N}$) with the following \emph{conditional independence} property: For each player $i$, conditional on the value of $\theta^i_1$, both the payoff-relevant state $\omega$ and also the \emph{other} players' unknown states $(\theta^j)_{j \neq i}$ are conditionally independent from player $i$'s own states $\theta^i$.\footnote{Note that we do not impose conditional independence \emph{between} $\omega$ and the other players' states.} This ensures that no player $i$ infers anything about $\omega$ or about any other player $j$'s information beyond what he (player $i$) learns about $\theta^i_1$, which makes $\theta^i_1$ the only state of interest for player $i$.\footnote{Conditional independence is imposed on players' prior beliefs. However, this implies conditional independence for subsequent posterior beliefs; given the value of $\theta^i_1$, each signal is simply a linear combination of player $i$'s other states plus noise, 
thus conditional independence is preserved under updating.}

For concreteness, we provide examples (adapted from \cite{LambertOstrovskyPanov}) that do and do not satisfy conditional independence.\footnote{Example \ref{ex:satisfiesCI} is based on Example OA.3 in \cite{LambertOstrovskyPanov}. Example \ref{ex:failsCI} is based on their Example 1.}

\begin{example}[Satisfies Conditional Independence] \label{ex:satisfiesCI} In addition to the payoff-relevant state $\omega$, there is a common unknown state $\xi$, and two player-specific unknown states $b_1$ and $b_2$. All states are independent. Player 1 has access to  observations of $\omega + \rho_1 \xi + b_1 +\epsilon_1^1$ (where $\rho_1$ is a constant) and  $b_1+\epsilon_2^1$. Player 2 has access to observations of $\omega + \rho_2 \xi + b_2+\epsilon_1^2$ (where $\rho_2$ is a constant) and $b_2+\epsilon_2^2$. To see that this example satisfies Conditional Independence, define $\theta_1^1 = \omega + \rho_1\xi$ and $\theta^2_1=\omega + \rho_2 \xi$.
\end{example}

\begin{example}[Fails Conditional Independence] \label{ex:failsCI} The payoff-relevant state is $\omega$, and there is additionally a common unknown state $\xi$. These states are independent. Player 1 has access to noisy observations of $\omega + \xi$ only. Player 2 has access to noisy observations of both $\omega$ and $\xi$. Because both states $\omega$ and $\xi$ covary with $\omega + \xi$, there is no way to define the second player's ``state of interest" $\theta_1^2$ that would satisfy Conditional Independence. 
\end{example}

We maintain Assumption \ref{assumptionIdentifiability}, so that signals are non-redundant (players must observe all the signals available to them in order to learn $\theta^i_1$). The following result generalizes Theorem \ref{propBatch} and Theorem \ref{propSeparable}. (Although we do not state it here, Theorem \ref{propGenericEventual} also extends.)

\begin{corollary} \label{corDominant} 
Suppose $B$ is sufficiently large or the informational environment is separable. Then there exists a Nash equilibrium of this model where each player acquires information myopically.
\end{corollary}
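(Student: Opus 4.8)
The plan is to reduce the multi-player equilibrium existence problem to $N$ independent single-player problems, each of which is governed by Theorem \ref{propBatch} or Theorem \ref{propSeparable}, and then to verify that the profile of individually myopic strategies constitutes a Nash equilibrium. The central structural fact I would exploit is the Conditional Independence property: for each player $i$, once we condition on $\theta^i_1$, the payoff-relevant state $\omega$ and the other players' states $(\theta^j)_{j\neq i}$ are independent of player $i$'s own states $\theta^i$. Since all of player $i$'s signals $X^i_k = \langle c^i_k, \theta^i\rangle + \epsilon^i_k$ are functions of $\theta^i$ alone (plus independent noise), this means the \emph{only} thing player $i$ can ever learn about $\omega$ or about his opponents' information, through any sequence of his own signal acquisitions, is mediated entirely through what he learns about the scalar $\theta^i_1$. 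As the paper notes, this conditional independence is preserved under Bayesian updating, so it holds at every history.

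First I would fix an arbitrary strategy profile of the \emph{opponents} $j \neq i$ and argue that player $i$'s best response decouples into (a) a pure decision problem and (b) an information-acquisition problem that depends on the history only through the posterior over $\theta^i_1$. Concretely, holding opponents' strategies fixed, player $i$'s interim expected payoff from any action given his information can, by Conditional Independence, be written as a function of his action and his posterior belief about $\theta^i_1$ only (integrating out $\omega$ and $(\theta^j)_{j\neq i}$ using the conditionally independent prior, which is unaffected by player $i$'s signals beyond their effect on beliefs about $\theta^i_1$). Thus player $i$ faces exactly a single-agent dynamic information-acquisition problem of the form studied in Section \ref{results}: a payoff that depends on his action and on the one-dimensional payoff-relevant state $\theta^i_1$, with $K$ non-redundant normal signals. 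The induced payoff function $U^i$ is some arbitrary function of $\theta^i_1$ (and the action sequence), which is precisely the generality under which Theorems \ref{propBatch} and \ref{propSeparable} apply.

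Next I would invoke the single-player theorems. Under the hypothesis that $B$ is sufficiently large (applying Theorem \ref{propBatch}, with the bound from Proposition \ref{propBoundOnB} applied to player $i$'s own informational environment $(V^i, C^i, \{(\sigma^i_k)^2\})$) or that player $i$'s informational environment is separable (applying Theorem \ref{propSeparable}), player $i$ has an optimal information-acquisition strategy that is myopic, i.e.\ that minimizes each period's posterior variance about $\theta^i_1$. Crucially, because posterior variances under normality do not depend on signal realizations, this myopic strategy is history-independent and deterministic, and its optimality holds against \emph{any} fixed behavior of the opponents — the opponents' play affects only player $i$'s optimal \emph{action} map, not the value or optimality of his myopic signal path. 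I would then let each player adopt his myopic information-acquisition rule together with his best-response action map given the others' equilibrium actions; to close the equilibrium I need to exhibit a fixed point in the action (decision) strategies, which exists because the signal paths are fixed and the residual game is a standard one-shot Bayesian game played at the random final period (one may appeal to a standard equilibrium existence result for the terminal-period game, or note that since signal acquisition is pinned down independently of opponents, only the terminal action profile must form a Nash equilibrium of the induced incomplete-information game).

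The main obstacle, and the step deserving the most care, is establishing the decoupling claim rigorously: namely that player $i$'s \emph{signal-acquisition} incentives are genuinely independent of the opponents' strategies and reduce to minimizing posterior variance about $\theta^i_1$. This requires showing that (i) no signal player $i$ can acquire conveys information about $\omega$ or $(\theta^j)_{j\neq i}$ except through $\theta^i_1$ — which follows from Conditional Independence and its preservation under updating — and (ii) that, as a consequence, player $i$'s continuation value is monotone in the Blackwell informativeness of his belief about $\theta^i_1$, so that the dynamic Blackwell lemma and the $t$-optimal division machinery of Section \ref{sec:intuition} apply verbatim to $U^i$. The subtlety is that the opponents' actions enter player $i$'s payoff, so one must confirm that integrating them out (against the conditionally independent prior) still yields a well-defined payoff function of $\theta^i_1$ and the action sequence alone, to which the single-player theorems apply; the Conditional Independence assumption is exactly what guarantees this, and Examples \ref{ex:satisfiesCI} and \ref{ex:failsCI} delineate when it does and does not hold.
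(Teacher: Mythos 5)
Your proposal is correct and follows essentially the same route the paper takes: reduce player $i$'s problem to a single-agent problem via Conditional Independence (which makes the induced payoff a function of the action sequence and the scalar $\theta^i_1$ alone), apply Theorem \ref{propBatch} or Theorem \ref{propSeparable}, and observe that the resulting myopic, history-independent signal path is a best response against arbitrary opponent strategies, i.e.\ dominant in the sense the paper describes. Your additional care about closing the fixed point in the terminal-period decision strategies is consistent with (and slightly more explicit than) the paper's treatment.
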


\noindent In fact, we show that the myopic information acquisition strategy is \emph{dominant} in the following sense: For arbitrary opponent strategies, player $i$'s best response involves acquiring signals myopically. 

In Appendix \ref{appxGames}, we apply the above corollary to extend results from \cite{HellwigVeldkamp} and \cite{LambertOstrovskyPanov} to a setting with sequential information acquisition.

\section{Extensions} \label{extensions}

\subsection{Endogenous Learning Intensities}\label{ex:intensity}
The main model imposes an exogenous capacity constraint of $B$ signals per period. Suppose now that in each period $t$, the DM can choose to observe any number $N_t \in \mathbb{Z}_{+}$ of signal realizations (which are then optimally allocated across signals). The DM incurs a flow cost of information acquisition, modeled as $\kappa(N_t)$ for some increasing cost function $\kappa(\cdot)$ with $\kappa(0) = 0$. This framework embeds our main model if we define $\kappa(N) = 0$ for $N \leq B$ and $\kappa(N) = \infty$ for $N > B$. 

We assume that the DM's payoff is $U(a_1, a_2, \dots; \theta_1) - \sum_{t} \delta^{t-1} \cdot \kappa(N_t)$ for some discount factor $\delta$.\footnote{Our analysis can accommodate more general payoff functions of the form $U(N_1, a_1, N_2, a_2, \dots; \theta_1)$.} For the special case of endogenous stopping, the payoff function simplifies to
\[
\delta^{\tau} \cdot u(a_\tau; \theta_1) - \sum_{t = 1}^{\tau} \delta^{t-1} \cdot \kappa(N_t)
\]
whenever the DM stops after $\tau$ periods. This is a discrete-time generalization of the framework proposed in \cite{MoscariniSmith}, although our focus is on allocation of the signals instead of choice of intensity level.\footnote{\citet{MoscariniSmith} has a single state and a single signal $(K = 1)$, so the DM chooses only the learning intensities $N_t$. Unlike \citet{MoscariniSmith}, we do not characterize the optimal sequence of intensity choices $(N_t)_{t\geq 1}$, but instead show how this problem can be separated from allocation of those observations across different kinds of sources.} 

Theorems \ref{propBatch} and \ref{propSeparable} generalize to this setting: 
\begin{corollary} \label{corIntensity} 
Suppose $B$ is sufficiently large or the informational environment is separable. Then, even with endogenous learning intensities, the DM has an optimal strategy that chooses signals myopically.
\end{corollary}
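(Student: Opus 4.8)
\section*{Proof proposal for Corollary \ref{corIntensity}}

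The plan is to exploit the fact that in this extended model the flow cost $\sum_t \delta^{t-1}\kappa(N_t)$ depends only on the sequence of intensities $(N_t)$, and not at all on how the $N_t$ observations are allocated across the $K$ sources, nor on the realized signals. This cleanly separates the two decisions: the intensity sequence governs the cost, while the allocation governs only the information content (and hence the achievable decision payoffs). I would therefore first fix an arbitrary intensity sequence $(N_t)$ --- which, since posterior variances and costs are both realization-independent and there is no feedback from actions, may be taken deterministic --- and argue that myopic allocation of these observations is optimal; I would then re-optimize over $(N_t)$ with the myopic allocation substituted in.

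For a fixed intensity sequence, let $M_t = \sum_{s \le t} N_s$ denote the cumulative number of observations through period $t$. Conditional on $(N_t)$ the allocation problem is exactly the main model but with time-varying block sizes, so the dynamic Blackwell lemma applies: a sequential allocation is optimal for every decision problem depending on $\theta_1$ if and only if it attains the lowest attainable posterior variance $f(n(M_t))$ in every period. Because a sequential rule can only add observations, the myopic (greedy) rule attains $f(n(M_t))$ at every $t$ precisely when the cumulative optimal divisions are nondecreasing along the realized counts, i.e.\ $n(M_t) \ge n(M_{t-1})$ componentwise for all $t$. Thus the whole statement reduces to verifying monotonicity of the division vectors along $(M_t)$, and the two hypotheses enter exactly here.

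In a separable environment the division vectors $n(\cdot)$ increase monotonically at single-observation resolution, so $n(M_t) \ge n(M_{t-1})$ holds for \emph{any} intensity sequence whatsoever; the separable case is therefore immediate, and the optimum over $(N_t)$ is attained jointly with myopic allocation. For the large-$B$ case I would invoke the batch-monotonicity lemma underlying Theorem \ref{propBatch}: once the cumulative count passes the de-correlation threshold, adding a block whose size is bounded cannot decrease any coordinate of $n(\cdot)$, so $n(M_t)\ge n(M_{t-1})$ holds from that point on, while the very first batch trivially attains $n(M_1)$ starting from the zero allocation. Feeding the resulting (realization-independent) value back into the cost-bearing intensity problem then yields an optimal policy that allocates myopically.

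The main obstacle is the large-$B$ case, because the increments $N_t$ are now endogenous and may be small or irregular, whereas the batch-monotonicity lemma only guarantees $n(M_t) \ge n(M_{t-1})$ once the base count $M_{t-1}$ is large relative to the increment $N_t$. The delicate periods are thus the early ones, before de-correlation has set in, where $n(\cdot)$ can genuinely fail to be monotone. I would need to argue that it is without loss to take the optimal policy to acquire an initial batch large enough to clear the threshold --- after which all subsequent bounded increments preserve monotonicity --- or equivalently that a non-monotone reshuffling of early counts can never strictly improve on acquiring those same totals in a monotone (hence myopically implementable) order. Establishing that this reduction costs nothing, rather than merely that myopia is optimal \emph{conditional} on large early batches, is where the real work lies; the separable case avoids the difficulty entirely, since there monotonicity holds at every resolution.
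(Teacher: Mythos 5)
Your overall architecture---the flow cost depends only on the intensities $(N_t)$ while the allocation affects only the informational content, so reallocating observations myopically is a ``free'' improvement and the whole question reduces to componentwise monotonicity of the $t$-optimal divisions $n(\cdot)$ along the cumulative counts $M_t$---is the right one, and it is what the paper (which states this corollary without a written proof) intends via Lemma \ref{lemmDynamicBlackwell}. The separable branch is essentially complete. But there are two genuine gaps. First, the claim that the intensity sequence ``may be taken deterministic'' is false in general: although posterior variances and costs are realization-independent, the \emph{value} of further information is not---in a stopping problem, say, the DM optimally samples more intensely when the posterior mean of $\theta_1$ sits near an indifference point---so the optimal $N_t$ can and will condition on realized signals. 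Your two-step plan (optimize the allocation for each deterministic $(N_t)$, then re-optimize over $(N_t)$) therefore only establishes optimality within a strictly smaller class of strategies. The repair is to apply Lemma \ref{lemmDynamicBlackwell} directly to an arbitrary history-dependent strategy $S$: take $\tilde{G}$ to map each realized division sequence to $(n(M_0),\dots,n(M_T))$, which is consistent exactly when $n(M_t)\geq n(M_{t-1})$ along realized paths; the imitation construction then yields a $\tilde{S}$ whose induced distribution over intensity sequences---hence whose expected cost---is identical to that of $S$, while its beliefs Blackwell-dominate at every period.

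Second, the large-$B$ branch is, as you acknowledge, not actually proved, and the missing ingredient is how the hypothesis must be read for the claim to hold: ``$B$ sufficiently large'' has to be a restriction on $\kappa$, namely that the first $B$ observations each period are free. Since additional free observations are weakly valuable in every decision problem (Lemma \ref{lemmVar} plus Lemma \ref{lemmDynamicBlackwell}), it is then without loss to restrict to strategies with $N_t\geq B$ in every period, and this eliminates the small, irregular increments you worry about: $n(M_1)$ is trivially attainable from the empty history; $M_1\geq B\geq T$ clears the de-correlation threshold; and every subsequent increment $N_t\geq B$ is a nonnegative integer combination of $K-1$ and $K$ (true for any integer at least $(K-2)(K-1)$), so chaining the inequalities $n(t+K-1)\geq n(t)$ and $n(t+K)\geq n(t)$ from Appendix \ref{appxBatch} gives $n(M_t)\geq n(M_{t-1})$ for all $t$. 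Without some such restriction, a single-observation increment at a late period would require unit-resolution monotonicity of $n(\cdot)$, which Appendix \ref{appxExample2} shows can fail even asymptotically---so your instinct that the real work lies here was correct, but the work consists in pinning down the hypothesis and the without-loss reduction rather than in a new estimate.
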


In the above corollary, ``myopic acquisition" means the following: In any period $t$, given (endogenous) intensity choice $N_t$, the optimal acquisitions are the $N_t$ signals that minimize posterior variance about $\theta_1$. We emphasize that while myopic signal choices are optimal, myopic intensity choices need not be. However, knowing that the signal choices must follow the myopic path provides a simplifying first step towards the characterization of optimal intensity levels.

Generic eventual myopia (Theorem \ref{propGenericEventual}) also extends, but we omit the details.

\subsection{Multiple Payoff-Relevant States} \label{ex:multiStates}
In the main model, the DM's payoff function depends on a one-dimensional state. Our results do not extend in general to payoff functions that depend on the full state vector $(\theta_1,\dots,\theta_K)$. Loosely, this is because the signals can no longer be Blackwell ordered; thus, even the statement that the signal which maximally reduces posterior variance is best for \emph{static} decision problems has no analogue when multiple states are payoff-relevant. 

However, our results do extend for a class of prediction problems. Specifically, suppose that at an exogenous end date (determined by an arbitrary distribution over periods), the DM is asked to predict the state vector $\theta$. At this time, he receives a payoff of 
\[
-(a- \theta)'\mathbf{W}(a-\theta),
\]
where $\mathbf{W}$ is a given positive semi-definite matrix and $a \in \mathbb{R}^K$ is the DM's prediction. 

In such a setting, our main results and their proofs extend essentially without modification. To see this, note that when $\mathbf{W}$ is diagonal, the DM simply minimizes a \emph{weighted sum} of posterior variances about multiple states. Generalizing Lemma \ref{lemmSecondDer} in Appendix \ref{appxPrelim}, we can show that any such objective function exhibits ``eventual near-separability," which is sufficient to derive Theorem \ref{propBatch} and Theorem \ref{propGenericEventual}. When $\mathbf{W}$ is not diagonal, we can use the spectral theorem to write the DM's objective function as a weighted sum of posterior variances about some linearly-transformed states. Our proofs still carry through. 

\subsection{Continuous Time} \label{ex:contTime}
In a working paper, we analyze a continuous-time version of our problem. In that model, the DM has $B$ units of attention in total at every point in time. He chooses attention levels $\beta_1(t), \dots, \beta_K(t)$ subject to $\beta_i(t) \geq 0$ and $\sum_{i} \beta_i(t) \leq B$, and then observes $K$ diffusion processes $X_1, \dots, X_K$, whose evolutions are affected by the attention rates in the following way:
\[
dX_i(t) = \beta_i(t) \cdot \langle c_i, \theta \rangle ~dt + \sqrt{\beta_i(t)} ~d B_i,
\]
where each $B_i$ is an independent standard Brownian motion. This formulation can be seen as a limit of our discrete-time model in the current paper, where we take period length to zero and also ``divide" the signals to hold constant the amount of information that can be gathered every instant.

 In short, all results from this paper extend (and occasionally can be strengthened): the optimal rule is eventually myopic in \emph{all} informational environments (thus dropping the generic qualifier in Theorem \ref{propGenericEventual}); additionally, we provide more permissive sufficient conditions on the informational environment under which an optimal strategy is myopic from period $1$. 
We refer the reader to the working paper for more detail.

\section{Related Literature} \label{relatedLit}

Besides the references mentioned in the introduction, our setting is related to a recent literature \citep{Bubeck2009, Russo} regarding ``best-arm identification" in a multi-armed bandit setting: A DM samples for a number of periods before selecting an arm and receiving its payoff. In Appendix \ref{appxK=2}, we characterize the optimal information acquisition strategy for the case of two states ($K=2$), which exactly applies to the problem of identifying the better of two correlated normal arms. However, due to our assumption of an one-dimensional payoff-relevant state, we are not able to handle more than two arms.\footnote{With two arms, the DM only cares about the difference in their expected payoffs. 
Choosing among more than two arms would involve \emph{multi-dimensional payoff uncertainty} and \emph{a decision problem that is not prediction}. As we discussed in \ref{ex:multiStates}, the lack of a complete Blackwell ordering limits the generalization of our argument. Incidentally, in related sequential search settings, \cite{Sanjurjo}, \cite{KeVillas-Boas} and \cite{ChickFrazier} also highlight the challenge of characterizing the optimal strategy once there are at least three alternatives.} 

We note that correlation is the key feature of our setting, and are not aware of many papers that study correlated bandits, either under the classical framework or under best-arm identification (see \cite{Rothschild}, \cite{Keener} and \cite{Mersereau2009} for a few stylized cases). 

Our results on the comparison of sequential normal experiments (see the discussion in Section \ref{discussion}, and results in Appendix \ref{appxBlackwell}) generalize the main result in \cite{Greenshtein}. \cite{Greenshtein} compares two \emph{deterministic (i.e.\ history-independent)} sequences of signals, where each signal is $\theta_1$ plus \emph{independent} normal noise. His Theorem 3.1 implies that the former sequence is Blackwell-dominant if and only if its cumulative precision is higher at every time. Note that this statement does not refer to the prior beliefs, but if we impose a normal prior on $\theta_1$, then higher cumulative precision is equivalent to lower posterior variance. Thus, the result of \cite{Greenshtein} coincides with ours when $\theta_1$ is the only persistent state, and when all signals are independent conditional on $\theta_1$. Our setting features additional correlation across different signals through the persistent (payoff-irrelevant) states $\theta_2, \dots, \theta_K$. Consequently, the dynamic Blackwell comparison in our model depends on prior beliefs.\footnote{This is already the case for static comparisons, since as the prior beliefs vary, it is not always the same signal that leads to the lowest posterior variance about $\theta_1$.} This feature, together with the endogenous choice of signals (which may be history-dependent), complicates our problem relative to \cite{Greenshtein}.

Finally, our work is closely related to \emph{optimal design}, a field initiated by the the early work of \cite{Robbins1952} (see \cite{Chernoff1972} for a survey). Specifically, the problem of one-shot allocation of $t$ signals (our $t$-optimal criterion in Section \ref{discussion}) is equivalent to a Bayesian optimal design problem with respect to the ``$c$-optimality criterion'', which seeks to minimize the variance of an unknown parameter. Our analysis is however focused on dynamics, and we demonstrate here the optimality of ``greedy design" for a broad class of (intertemporal) objectives.

\section{Conclusion} \label{conclusion}

A DM learns about a payoff-relevant state by sequentially sampling (batches of) signals from flexibly correlated Gaussian sources. Under conditions that we provide, myopic information acquisition is optimal and robust across all possible payoff functions. Generically, the optimal strategy eventually acquires signals myopically. These results are robust to extension to multi-player settings, to endogenous choice of the number of signals to acquire each period, and to multi-dimensional uncertainty for certain payoff functions. 

We conclude with a re-interpretation of the main setting. Suppose there is a sequence of short-lived decision-makers indexed by time $t$, each of whom acquires information and then takes an action $a_t$ to maximize some private objective $u_t(a_t,\theta_1)$. All information acquisition is public.\footnote{This separates our model from classic social learning frameworks \citep{Banerjee,Hirshleifer}, where decision-makers only observe coarse summary statistics of past information acquisitions.} A social planner has (an arbitrary) objective function $U(a_1,a_2,\dots;\theta_1)$; thus, his incentives are misaligned with those of the short-lived decision makers. Our main results demonstrate conditions under which this mis-alignment is of no consequence. If each DM acquires sufficiently many signals, or if the environment is separable, then each DM will acquire exactly the information that the social planner would have wanted. Generically, the social planner will not be able to improve (at late periods) upon the information that has been aggregated so far. We generalize this qualitative insight in our companion piece \citet{LiangMu2017} and demonstrate also how it can fail.

\clearpage

\section{Appendix}

\subsection{Preliminary Results} \label{appxPrelim}

\subsubsection{Posterior Variance Function}
We begin by presenting basic results that are used throughout the appendix. The following lemma characterizes the posterior variance function $f$ mentioned in the main text, which maps signal counts to the DM's posterior variance about the payoff-relevant state $\theta_1$. 

\begin{lemma}\label{lemmVar}
Given prior covariance matrix $V^0$ and $q_i$ observations of each signal $i$, the DM's posterior variance about $\theta_1$ is given by\footnote{When $M$ is a matrix, we let $M_{ij}$ denote its $(i,j)$-th entry.}
\begin{equation}\label{eq:f}
f(q_1, \dots, q_K) = [V^0- V^0 C'\Sigma^{-1} CV^0]_{11} 
\end{equation}
where $\Sigma = CV^0C' + D^{-1}$ and $D = \diag\left(\frac{q_1}{\sigma_1^2}, \dots, \frac{q_K}{\sigma_K^2}\right)$. The function $f$ is decreasing and convex in each $q_i$ whenever these arguments take non-negative extended real values: $q_i \in \overline{\mathbb{R}_{+}} = \mathbb{R}_{+} \cup \{+\infty\}$.
\end{lemma}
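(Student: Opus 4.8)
The plan is to proceed in three stages: first reduce the $q_i$ repeated observations of each source to a single effective Gaussian observation, then apply the standard linear-Gaussian conditioning formula to obtain the closed form (\ref{eq:f}), and finally establish monotonicity and convexity by isolating the dependence on one coordinate via the Sherman--Morrison formula.

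For the reduction, I would use that in a Gaussian location model the $q_i$ i.i.d.\ observations of $X_i = \langle c_i, \theta\rangle + \epsilon_i$ are informationally equivalent to their sample mean $\bar X_i \sim \mathcal{N}(\langle c_i, \theta\rangle,\, \sigma_i^2/q_i)$. Stacking these gives one vector observation $\bar X = C\theta + \tilde\epsilon$ with $\tilde\epsilon \sim \mathcal{N}(0, D^{-1})$ independent of $\theta$, where $D = \diag(q_1/\sigma_1^2, \dots, q_K/\sigma_K^2)$. Since $(\theta, \bar X)$ is jointly Gaussian with cross-covariance $V^0 C'$ and $\Var(\bar X) = CV^0C' + D^{-1} =: \Sigma$, the conditional covariance formula yields $V^0 - V^0 C' \Sigma^{-1} C V^0$ for the posterior covariance of $\theta$; reading off the $(1,1)$ entry gives (\ref{eq:f}).

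For the properties, I would pass to the precision form $V^0 - V^0 C' \Sigma^{-1} C V^0 = \big((V^0)^{-1} + C' D C\big)^{-1}$ via the Woodbury identity, noting $C'DC = \sum_k (q_k/\sigma_k^2)\, c_k c_k'$ and that $(V^0)^{-1}$ is positive definite, so the matrix being inverted is always invertible. To vary a single $q_i$, write $M = N + t\, c_i c_i'$ with $t = q_i/\sigma_i^2$ and $N = (V^0)^{-1} + \sum_{k\neq i}(q_k/\sigma_k^2) c_k c_k'$ constant in $t$; Sherman--Morrison then gives $f = e_1' N^{-1} e_1 - \frac{t\beta^2}{1 + t\gamma}$, where $\beta = e_1' N^{-1} c_i$ and $\gamma = c_i' N^{-1} c_i \geq 0$. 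Setting $h(t) = t\beta^2/(1+t\gamma)$, one computes $h' = \beta^2/(1+t\gamma)^2 \geq 0$ and $h'' = -2\beta^2\gamma/(1+t\gamma)^3 \leq 0$, so $f = e_1'N^{-1}e_1 - h(t)$ is decreasing and convex in $t$, hence in $q_i$.

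The main obstacle is handling the extended domain $q_i \in \overline{\mathbb{R}_+}$ and the degenerate cases cleanly. When $q_i \to \infty$ the term $h(t)$ increases to its finite supremum $\beta^2/\gamma$ (when $\gamma > 0$), so $f$ extends continuously and remains decreasing and convex; the cases $c_i = 0$ (then $f$ does not depend on $q_i$) and $\beta = 0$ are trivially convex, and $\gamma > 0$ whenever $c_i \neq 0$ because $N^{-1}$ is positive definite. I would also verify that the $\Sigma$-form (\ref{eq:f}) stays valid at the boundary: $\Sigma = CV^0C' + D^{-1}$ remains positive definite as any $D^{-1}_{ii} \to 0$, since $CV^0C'$ is positive definite under the full-rank Assumption~\ref{assumptionIdentifiability}, so no division-by-zero arises.
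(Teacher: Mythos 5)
Your proposal is correct, and while the derivation of the closed form~(\ref{eq:f}) (sufficiency of the sample means, joint-Gaussian conditioning) is the same as the paper's one-line appeal to the conditional variance formula, your treatment of monotonicity and convexity takes a genuinely different route. The paper works at the matrix level: monotonicity follows from operator monotonicity of the inverse in the Loewner order, and convexity is proved as \emph{midpoint convexity of the multivariate function} $f(q_1,\dots,q_K)$ by writing $\Sigma^{-1}=J-J(J+D)^{-1}J$ with $J=(CV^0C')^{-1}$ and invoking the AM--HM inequality for positive definite matrices (Ando). You instead pass to the precision form $\bigl((V^0)^{-1}+C'DC\bigr)^{-1}$ and apply Sherman--Morrison to the rank-one perturbation in $q_i$, reducing everything to the explicit scalar function $h(t)=t\beta^2/(1+t\gamma)$, whose sign computations are elementary. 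Your route is more concrete and handles the endpoints $q_i=0$ and $q_i=+\infty$ transparently, and it yields the exact one-dimensional functional form that is implicitly behind the first-derivative formula~(\ref{eq:firstDer}) used later. What it does \emph{not} deliver is joint convexity of $f$ on $\overline{\mathbb{R}_+}^K$: the AM--HM argument gives convexity along arbitrary line segments, whereas Sherman--Morrison only gives convexity coordinate-by-coordinate. The lemma as stated claims only convexity ``in each $q_i$,'' so your proof covers the statement, but be aware that the paper later leans on the stronger global convexity (e.g., in the proof of Lemma~\ref{lemmNotStuck}, where a vanishing gradient is upgraded to a global minimum); if you wanted that, you would need to supplement your argument, for instance by checking convexity along an arbitrary direction or by reverting to the paper's matrix inequality.
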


\begin{proof}
The expression (\ref{eq:f}) comes directly from the conditional variance formula for multivariate Gaussian distributions. To prove $\frac{\partial f}{\partial q_i} \leq 0$, consider the partial order $\succeq$ on positive semi-definite matrices so that $A \succeq B$ if and only if $A - B$ is positive semi-definite. As $q_i$ increases, the matrices $D^{-1}$ and $\Sigma$ decrease in this order. Thus $\Sigma^{-1}$ increases in this order, which implies that $V^0- V^0 C'\Sigma^{-1} CV^0$ decreases in this order. In particular, the diagonal entries of $V^0- V^0 C'\Sigma^{-1} CV^0$ are uniformly smaller, so that $f$ becomes smaller. Intuitively, more information always improves the decision-maker's estimates. 

To prove $f$ is convex, it suffices to prove $f$ is \textit{midpoint-convex} since the function is clearly continuous. Take $q_1, \dots, q_K$, $r_1, \dots, r_K \in \overline{\mathbb{R}_{+}}$ and let $s_i = \frac{q_i + r_i}{2}$.\footnote{We allow the function $f$ to take $+\infty$ as arguments. This relaxation does not affect the properties of $f$, and it is convenient for our future analysis.} Define the corresponding diagonal matrices to be $D_q$, $D_r$, $D_s$. We need to show $f(q_1, \dots, q_K) + f(r_1, \dots, r_K) \geq 2f(s_1, \dots, s_K)$. For this, we first use the Woodbury inversion formula to write
\[ \Sigma^{-1} = (CV^0C' + D^{-1})^{-1} = J - J(J+D)^{-1}J, \]
with $J = (CV^0C')^{-1}$. Plugging this back into (\ref{eq:f}), we see that it suffices to show the following matrix order:
\[\frac{(J+D_q)^{-1} + (J+D_r)^{-1}}{2} \succeq (J+D_s)^{-1}. \]
Inverting both sides, we need to show
$2 \left((J+D_q)^{-1} + (J+D_r)^{-1}\right)^{-1} \preceq J+D_s. $
From definition, $D_q+ D_r = \diag(\frac{q_1+r_1}{\sigma_{1}^2}, \dots, \frac{q_K+r_K}{\sigma_{K}^2}) = 2D_s.$ Thus the above follows from the AM-HM inequality for positive definite matrices, see for instance \cite{Ando}.
\end{proof}

\subsubsection{The Matrix $Q_i$}
Let us define for each $1 \leq i \leq K$, 
\begin{equation}\label{eq:Q}
Q_i = C^{-1}\Delta_{ii}C'^{-1}
\end{equation}
where $\Delta_{ii}$ is the matrix with `1' in the $(i,i)$-th entry, and zeros elsewhere. We note that $[Q_i]_{11} = ([C^{-1}]_{1i})^2$, which is strictly positive under Assumption \ref{assumptionIdentifiability}. These matrices $Q_i$ will be repeatedly used in our proofs. 

\subsubsection{Order Difference Lemma}
Here we establish the asymptotic order for the second derivatives of $f$. 
\begin{lemma}\label{lemmSecondDer} 
As $q_1, \dots, q_K \to \infty$, $\frac{\partial^2 f}{\partial q_i^2}$ is positive with order $\frac{1}{q_i^3}$, whereas $\frac{\partial^2 f}{\partial q_i \partial q_j}$ has order at most $\frac{1}{q_i^2q_j^2}$ for any $j \neq i$. Formally, there is a positive constant $L$ depending on the informational environment, such that $\frac{\partial^2 f}{\partial q_i^2} \geq \frac{1}{Lq_i^3}$ and $\lvert \frac{\partial^2 f}{\partial q_i \partial q_j} \rvert \leq \frac{L}{q_i^2q_j^2}$. 
\end{lemma}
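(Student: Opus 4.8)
The plan is to reduce everything to a clean closed form for the second derivatives and then read off the orders from a large-$q$ expansion of the posterior covariance matrix. First I would rewrite the posterior variance using the precision (information) matrix representation rather than the Woodbury form in Lemma \ref{lemmVar}: the posterior covariance matrix is $\Omega = (P_0 + C'DC)^{-1}$ with $P_0 = (V^0)^{-1}$ and $D = \diag(q_1/\sigma_1^2, \dots, q_K/\sigma_K^2)$, so that $f = e_1'\Omega e_1$. Writing $G = P_0 + C'DC$, we have $\partial G/\partial q_i = \frac{1}{\sigma_i^2} c_i c_i'$, and the identity $\partial_{q_i}\Omega = -\Omega(\partial_{q_i}G)\Omega$ gives $\partial_{q_i} f = -\frac{1}{\sigma_i^2}\alpha_i^2$, where $\alpha_i := e_1'\Omega c_i$. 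Differentiating once more and setting $\beta_{ij} := c_i'\Omega c_j$ yields the exact formula
\[
\frac{\partial^2 f}{\partial q_i \partial q_j} = \frac{2}{\sigma_i^2\sigma_j^2}\,\alpha_i\,\alpha_j\,\beta_{ij}
\qquad (1 \le i,j \le K),
\]
valid for all $q$. The diagonal case $i = j$ is automatically nonnegative since $\beta_{ii} = c_i'\Omega c_i > 0$, which also re-derives convexity from Lemma \ref{lemmVar}.

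The orders then follow from a perturbation expansion of $\Omega$ as all $q_i \to \infty$. Writing $M = C'DC$, we have $M^{-1} = C^{-1}D^{-1}C'^{-1}$ with $D^{-1} = \diag(\sigma_1^2/q_1, \dots, \sigma_K^2/q_K)$, and the Neumann-type expansion $\Omega = (M + P_0)^{-1} = M^{-1} - M^{-1}P_0 M^{-1} + O(\Vert M^{-1}\Vert_{op}^3)$, valid once $q$ is large enough that $\Vert M^{-1}P_0\Vert_{op} < 1$. Using the key identity $C'^{-1}c_j = e_j$, the leading terms are $\alpha_i = \frac{\sigma_i^2}{q_i}[C^{-1}]_{1i} + O(q^{-2})$ and $\beta_{ii} = \frac{\sigma_i^2}{q_i} + O(q^{-2})$, both of exact order $1/q_i$; by Assumption \ref{assumptionIdentifiability}, $[C^{-1}]_{1i} \neq 0$, so $\alpha_i$ does not degenerate. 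Substituting into the diagonal formula gives $\frac{\partial^2 f}{\partial q_i^2} = \frac{2\sigma_i^2 [C^{-1}]_{1i}^2}{q_i^3} + O(q_i^{-4}) = \frac{2\sigma_i^2 [Q_i]_{11}}{q_i^3}(1 + o(1))$, positive of order $1/q_i^3$, which furnishes the lower bound $\frac{1}{Lq_i^3}$ for a suitable $L$.

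The one genuinely delicate point is the cross term. At leading order $\beta_{ij} = c_i'M^{-1}c_j = (D^{-1})_{ij} = 0$ for $i \neq j$, so the leading contribution cancels exactly and the size of $\beta_{ij}$ is governed entirely by the first correction. I would compute this correction from the second term of the expansion: $\beta_{ij} = -c_i' M^{-1}P_0 M^{-1}c_j + O(q^{-3}) = -\frac{\sigma_i^2\sigma_j^2}{q_iq_j}[C'^{-1}P_0 C^{-1}]_{ij} + O(q^{-3})$, which is of order $1/(q_iq_j)$. Combining with $\alpha_i,\alpha_j$ of orders $1/q_i,1/q_j$ then gives $\frac{\partial^2 f}{\partial q_i\partial q_j}$ of order $\frac{1}{q_i^2 q_j^2}$, exactly the claimed off-diagonal bound. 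The main obstacle is therefore not the leading-order bookkeeping but making the remainders uniform: I must bound the error terms in the Neumann expansion uniformly over all large $q$ so that $L$ depends only on the informational environment $(V^0, C, \{\sigma_i^2\})$ and not on the particular counts. This uniformity follows by controlling $\Vert M^{-1}P_0 \Vert_{op} \le \Vert P_0\Vert_{op}\,\Vert M^{-1}\Vert_{op}$ together with $\Vert M^{-1}\Vert_{op} \to 0$, but it has to be tracked through each of the three factors $\alpha_i,\alpha_j,\beta_{ij}$ so that the two-sided estimates hold simultaneously for all sufficiently large $q$.
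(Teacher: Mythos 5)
Your proposal is correct, and it reaches the conclusion by a genuinely different decomposition than the paper. The paper stays in the ``data-space'' Woodbury form $f = [V^0 - V^0C'\Sigma^{-1}CV^0]_{11}$ with $\Sigma = CV^0C' + D^{-1}$, and extracts the orders from the structure of the perturbation itself: $\partial_i \Sigma = -\tfrac{\sigma_i^2}{q_i^2}\Delta_{ii}$ and $\partial_{ij}\Sigma = \mathbf{0}$ for $j \neq i$, so the cross-partial of $\Sigma^{-1}$ can only be built from products of two first-order terms (each $O(1/q^2)$), while the diagonal picks up the $\partial_{ii}\Sigma = \tfrac{2\sigma_i^2}{q_i^3}\Delta_{ii}$ contribution whose coefficient converges to $[Q_i]_{11} > 0$. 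You instead work in the precision form $\Omega = (P_0 + C'DC)^{-1}$ and derive the exact identity $\partial_{ij} f = \tfrac{2}{\sigma_i^2\sigma_j^2}\,\alpha_i\alpha_j\beta_{ij}$ with $\alpha_i = e_1'\Omega c_i$, $\beta_{ij} = c_i'\Omega c_j$; I verified this (using $\partial_{q_j}\alpha_i = -\tfrac{1}{\sigma_j^2}\alpha_j\beta_{ij}$) and it is right, as are the leading-order computations $\alpha_i \sim \tfrac{\sigma_i^2}{q_i}[C^{-1}]_{1i}$, $\beta_{ii}\sim \tfrac{\sigma_i^2}{q_i}$, and the off-diagonal cancellation $c_i'M^{-1}c_j = [D^{-1}]_{ij} = 0$, which is the precision-side avatar of the paper's $\partial_{ij}\Sigma = \mathbf{0}$. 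Your route buys an exact, all-$q$ product formula that makes the sign of $\partial_{ii}f$ (and indeed of the cross-partial, namely $\mathrm{sign}(\alpha_i\alpha_j\beta_{ij})$) transparent and isolates the order gap in the single scalar $\beta_{ij}$; the paper's route buys consistency with the rest of the appendix, where the same objects $\Sigma^{-1}\Delta_{ii}\Sigma^{-1} \to Q_i$ are reused in the proofs of Theorem \ref{propBatch} and Lemma \ref{lemmFirstDerApprox}. Your closing remark on uniformity of the remainders is the right thing to worry about, and your resolution via $\Vert M^{-1}P_0\Vert_{op} \to 0$ is adequate: the relative errors are $O(1/\min_j q_j)$, so a single $L$ depending only on $(V^0, C, \{\sigma_i^2\})$ works once all counts are large, matching the paper's implicit restriction to paths along which every $q_i \to \infty$.
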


\noindent To interpret, the second derivative $\partial^2 f/\partial q_i^2$ is the effect of observing signal $i$ on the marginal value of the next observation of signal $i$. Our lemma says that this second derivative is always eventually positive, so that each observation of signal $i$ makes the next observation of signal $i$ less valuable. The cross-partial $\partial^2 f/\partial q_i \partial q_j$ is the effect of observing signal $i$ on the marginal value of the next observation of a different signal $j$, and its sign is ambiguous. 

The key content of the lemma is that regardless of the sign of the cross partial, it is always of lower order compared to the second derivative. In words, the effect of observing a signal on the marginal value of other signals (as quantified by the cross-partial) is eventually second-order to its effect on the marginal value of further realizations of the same signal (as quantified by the second derivative). This is true for any signal path in which the signal counts $q_1, \dots, q_K$ go to infinity proportionally, which we will justify later.

\begin{proof}
Recall from Lemma \ref{lemmVar} that
$
f(q_1, \dots, q_K) = [V^0- V^0 C'\Sigma^{-1} CV^0]_{11}
$
and therefore
\begin{equation}\label{eq:der1}
\frac{\partial^2 f}{\partial q_i \partial q_j} = [\partial_{ij}(V^0 - V^0C'\Sigma^{-1} CV^0)]_{11} \quad \quad \quad \quad \frac{\partial^2 f}{\partial q_i^2} = [\partial_{ii}(V^0 - V^0C'\Sigma^{-1} CV^0)]_{11}
\end{equation}

\noindent Using properties of matrix derivatives,
\begin{equation}\label{eq:der2}
\partial_{ii} (\Sigma^{-1}) = \Sigma^{-1} (\partial_i \Sigma) \Sigma^{-1} (\partial_i \Sigma) \Sigma^{-1} -  \Sigma^{-1} (\partial_{ii} \Sigma ) \Sigma^{-1} +  \Sigma^{-1} (\partial_i  \Sigma)\Sigma^{-1} (\partial_i \Sigma) \Sigma^{-1}.
\end{equation}
The relevant derivatives of the covariance matrix $\Sigma$ are
\[
\partial_{i} \Sigma = -\frac{\sigma_{i}^2}{q_i^2} \Delta_{ii} \quad \quad \quad \quad \partial_{ii} \Sigma = \frac{2\sigma_{i}^2}{q_i^3} \Delta_{ii} 
\]

\noindent Plugging these into (\ref{eq:der2}), we obtain 
$
\partial_{ii} (\Sigma^{-1}) = -\frac{2\sigma_{i}^2}{q_i^3} (\Sigma^{-1} \Delta_{ii} \Sigma^{-1}) + O\left(\frac{1}{q_i^4}\right). 
$
Thus by (\ref{eq:der1}), 
\begin{equation}\label{eq:der3}
\frac{\partial^2 f}{\partial q_i^2} = \left[- V^0C' \cdot \frac{\partial^2 (\Sigma^{-1})}{\partial q_i^2} \cdot CV^0\right]_{11}  = \frac{2\sigma_{i}^2}{q_i^3} \cdot \left[V^0C' \Sigma^{-1} \Delta_{ii} \Sigma^{-1} CV^0\right]_{11} + O\left(\frac{1}{q_i^4}\right).
\end{equation}
As $q_1, \dots, q_k \to \infty$, $\Sigma \to CV^0C'$ which is symmetric and non-singular. Thus the matrix $V^0C' \Sigma^{-1} \Delta_{ii} \Sigma^{-1} CV^0$ converges to the matrix $Q_i$ defined earlier in (\ref{eq:Q}). From (\ref{eq:der3}) and $[Q_i]_{11} > 0$, we conclude that $\frac{\partial^2 f}{\partial q_i^2}$ is positive with order $\frac{1}{q_i^3}$. Similarly, for $i \neq j$, we have 
\[
\partial_{ij} (\Sigma^{-1}) = \Sigma^{-1} (\partial_j \Sigma) \Sigma^{-1} (\partial_i \Sigma) \Sigma^{-1} -  \Sigma^{-1} (\partial_{ij} \Sigma ) \Sigma^{-1} +  \Sigma^{-1} (\partial_i  \Sigma)\Sigma^{-1} (\partial_j \Sigma) \Sigma^{-1}.
\]
The relevant derivatives of the covariance matrix $\Sigma$ are 
\[
\partial_{i} \Sigma = -\frac{\sigma_{i}^2}{q_i^2} \Delta_{ii} \quad \quad  \partial_{j} \Sigma = -\frac{\sigma_{j}^2}{q_j^2} \Delta_{jj} \quad \quad \partial_{ij} \Sigma = \bold{0}
\]
From this it follows that $\partial_{ij} (\Sigma^{-1}) = O\left(\frac{1}{q_i^2q_j^2}\right)$. The same holds for $\frac{\partial^2 f}{\partial q_i \partial q_j}$ because of (\ref{eq:der1}), completing the proof of the lemma.  
\end{proof}

\subsection{Dynamic Blackwell Comparison}\label{appxBlackwell}

\subsubsection{The Lemma}
This subsection establishes a dynamic version of Blackwell dominance for sequences of normal signals. As an overview, we first generalize \cite{Greenshtein} and show that a \emph{deterministic} (i.e. history-independent) signal sequence yields higher expected payoff than another in every intertemporal decision problem if (and only if) the former sequence induces lower posterior variances about $\theta_1$ at every period. This will be a corollary of the lemma below, which also covers strategies that may condition on signal realizations. 

We introduce some notation: Since $\theta_1$ is the only payoff-relevant state, the DM in our model only needs to remember the expected value of $\theta_1$ and the covariance matrix over all of the states (that is, expected values of the other states do not matter). Thus, we can summarize any history of beliefs by $h^T = (\mu^0_1, V^0; \dots, \mu^T_1, V^T)$, with $\mu^t_1$ representing the posterior expected value of $\theta_1$ after $t$ periods and $V^t$ the posterior covariance matrix. Since the posterior covariance matrix is a function of signal counts, we can also keep track of the evolution of posterior covariance matrices by a sequence of division vectors. That is, we will write the history as $h^T = (\mu^0_1, d(0); \dots; \mu^T_1, d(T))$, where each $d(t) = (d_1(t), \dots, d_K(t))$ counts the number of each signal acquired by time $t$. We can then view any information acquisition strategy $S$ as a mapping from such sequences of expected values and division vectors to signal choices. 

Consider a mapping $\tilde{G}$ from possible sequences of divisions to these sequences themselves: For each $(d(0), \dots, d(T))$, $\tilde{G}$ maps to another sequence $(\tilde{d}(0), \dots, \tilde{d}(T))$, subject to the following ``consistency" requirements. First, $\sum_{i}\tilde{d}_i(t) = t$, meaning that each $\tilde{d}(t)$ must be a possible division at time $t$. Second, $\tilde{d}_i(t) \geq \tilde{d}_i(t-1)$, meaning that the sequence $\tilde{d}$ can be attained via a sequential sampling rule. Lastly, we require
\[
(\tilde{d}(0), \dots, \tilde{d}(T-1)) = \tilde{G}(d(0), \dots, d(T-1))
\]
so that nesting sequences are mapped to nesting sequences. 

The following lemma says that if $d(\cdot)$ represents the division vectors under an information acquisition strategy $S$, and if $\tilde{G}$ is a consistent mapping that uniformly reduces the posterior variance, then we can find another information acquisition strategy $\tilde{S}$ whose division vectors are given by $\tilde{d}(\cdot)$. Moreover, our construction ensures that $\tilde{S}$ leads to more dispersed posterior beliefs than $S$ at every period, so that in any decision problem, acquiring signals according to $\tilde{S}$ is weakly better than $S$ (when actions are taken optimally). 

\begin{lemma}\label{lemmDynamicBlackwell}
Fix any information acquisition strategy $S$ and any consistent mapping $\tilde{G}$ defined above. Suppose that for every sequence of divisions $(d(0), \dots, d(T))$ realized under $S$, it holds that 
\[
f(\tilde{d}(T)) \leq f(d(T)).
\]
Then there exists deviation strategy $\tilde{S}$ such that, at every period $T$, any history $h^T = (\mu^0_1, d(0); \dots; \mu^T_1, d(T))$ under $S$ can be ``associated with'' a distribution of histories $\tilde{h}^T = (\nu^0_1, \tilde{d}(0); \dots; \nu^T_1, \tilde{d}(T))$ with the following properties:
\begin{enumerate}
\item the probability of $h^T$ occurring under $S$ is the same as the probability of its associated $\tilde{h}^T$ (integrated with respect to the probability of ``association") occurring under $\tilde{S}$; 

\item the total probability that any $\tilde{h}^T$ is associated to (integrated with respect to different possible $h^T$) is $1$; 

\item under the association, the distribution of $\nu^t_1$ is normal with mean $\mu^t_1$ and variance $f(d(t)) - f(\tilde{d}(t))$ for each $t$. 
\end{enumerate}
Consequently, for any decision strategy $A$, there exists another decision strategy $\tilde{A}$ such that the expected payoff under $(\tilde{S}, \tilde{A})$ is no less than the expected payoff under $(S, A)$. 
\end{lemma}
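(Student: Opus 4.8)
The plan is to read the three enumerated properties as the assertion that the association is a Markov kernel from $S$-histories $h^T$ to $\tilde S$-histories $\tilde h^T$ that pushes the law of $h^T$ (under $S$) forward onto the law of $\tilde h^T$ (under $\tilde S$) — in other words, a coupling of the two belief processes. Properties 1 and 2 guarantee that the two marginals of this coupling are exactly the $(S,\cdot)$ and $(\tilde S,\cdot)$ distributions, while Property 3 pins down the conditional law of the finer posterior mean $\nu^T_1$ given the coarser history $h^T$ as $\mathcal{N}(\mu^T_1,\, f(d(T)) - f(\tilde d(T)))$. Since $f(\tilde d(T)) \le f(d(T))$, this exhibits $\nu^T_1$ as a mean-preserving spread of $\mu^T_1$, which is precisely the signature of $\tilde S$ being Blackwell-more-informative about $\theta_1$ at every period.

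The step I would isolate as the crux is a garbling fact in the \emph{reverse} direction: under the coupling, the coarse history $h^T$ is conditionally independent of $\theta_1$ given the fine history $\tilde h^T$. I would verify this inside the jointly Gaussian triple $(\theta_1, \mu^T_1, \nu^T_1)$. Recalling that $\theta_1 \mid h^T \sim \mathcal{N}(\mu^T_1, f(d(T)))$ and $\theta_1 \mid \tilde h^T \sim \mathcal{N}(\nu^T_1, f(\tilde d(T)))$, and combining these with the relation $\nu^T_1 = \mu^T_1 + \eta$ (where $\eta$ has variance $f(d(T)) - f(\tilde d(T))$ and, by the homoskedasticity in Property 3, is independent of $\mu^T_1$), a direct partial-covariance computation shows that the conditional covariance of $\mu^T_1$ and $\theta_1$ given $\nu^T_1$ vanishes. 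Because $\nu^T_1$ is a sufficient statistic for $\theta_1$ within $\tilde h^T$, this upgrades to $h^T \perp \theta_1 \mid \tilde h^T$: the coarse belief is a pure garbling of the fine belief, carrying no information about the state beyond what $\tilde h^T$ already contains.

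Given this, I would define the replicating decision strategy $\tilde A$ as follows: at each history $\tilde h^T$, the $\tilde S$-DM draws a ``pseudo-history'' $h^T$ from the reverse kernel $P(h^T \mid \tilde h^T)$ determined by the coupling, and plays the action $A(h^T)$ that the original strategy would have prescribed. This is measurable with respect to $\tilde h^T$ alone (the DM never observes $\theta_1$), and the nesting requirement built into $\tilde G$ lets the pseudo-histories be drawn progressively and consistently across periods, so the whole pseudo-path $(h^1, h^2, \dots)$ is jointly distributed exactly as a genuine $S$-path. By the conditional-independence fact, the simulated joint law $\int P(\theta_1 \mid \tilde h^T)\,P(h^T \mid \tilde h^T)\,dP_{\tilde S}(\tilde h^T)$ collapses to $P_S(\theta_1, h^T)$, so the joint law of $(\theta_1; a_1, a_2, \dots)$ under $(\tilde S, \tilde A)$ coincides with its law under $(S, A)$. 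Hence $\mathbb{E}[U(a_1, a_2, \dots; \theta_1)]$ is identical under the two regimes, and since this replication is merely one admissible strategy, an optimal $\tilde A$ yields expected payoff no less than that of $(S,A)$.

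The main obstacle I anticipate is the dynamic rather than static nature of the claim. A single-period garbling argument is routine, but $U(a_1, a_2, \dots; \theta_1)$ depends on the entire action path, so the simulation of $h^T$ must be time-consistent: the period-$t$ pseudo-history must be a prefix of the period-$(t+1)$ pseudo-history, and the family of reverse kernels must assemble into a \emph{single} coupling of the full processes $(\theta_1, (h^t)_t)$ and $(\theta_1, (\tilde h^t)_t)$ whose one-step conditionals are those of Property 3. I would discharge this by building the coupling recursively, using the $\tilde d(\cdot)$-nesting enforced by $\tilde G$ together with the fact that, under normality, the posterior-mean process is a martingale whose conditional variances are deterministic functions of the division vectors; this is what glues the per-period garbling facts into a path-level garbling, after which the payoff comparison is immediate.
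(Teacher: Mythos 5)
Your proposal is correct and follows essentially the same route as the paper's proof: a recursively constructed coupling of the two belief processes (the paper's ``imitation'' device is exactly your reverse-kernel drawing of pseudo-histories), the observation that the deterministic conditional variances of the Gaussian posterior-mean martingale glue the per-period comparisons into a path-level one, and a decision strategy $\tilde{A}$ that simulates an $S$-history and plays $A$ on it. The only difference is presentational: you make explicit the conditional-independence (garbling) computation $h^T \perp \theta_1 \mid \tilde{h}^T$ that the paper leaves implicit behind the statement that the $\tilde{S}$-beliefs Blackwell-dominate.
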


To interpret, the first two properties require that the association relation is a \emph{Markov kernel} between histories under $S$ and histories under $\tilde{S}$; this enables us to compare payoffs under $\tilde{S}$ to those under $S$. The third property guarantees that the alternative strategy $\tilde{S}$ is more informative than $S$. 

We note the following corollary, which is obtained from the previous lemma by considering a constant mapping $\tilde{G}$. 
\begin{corollary}\label{corDynamicBlackwell}
Define the $t$-optimal division vectors as in Section \ref{sec:intuition}. Suppose each coordinate of $n(Bt)$ increases in $t$. Then it is optimal for the DM to achieve $n(Bt)$ at every period. 
\end{corollary}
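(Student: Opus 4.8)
The plan is to apply Lemma~\ref{lemmDynamicBlackwell} with a \emph{constant} consistent mapping. I would define $\tilde{G}$ to send every realizable sequence of divisions $(d(0), \dots, d(T))$ (under any strategy $S$ obeying the block constraint $B$) to the single fixed sequence $(\tilde{d}(0), \dots, \tilde{d}(T)) = (n(0), n(B), n(2B), \dots, n(BT))$, where $n(\cdot)$ denotes the $t$-optimal division vectors of Section~\ref{sec:intuition}. The entire content of the corollary is then to verify that this $\tilde{G}$ satisfies the hypotheses of the lemma, after which the conclusion is immediate.

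First I would check that $\tilde{G}$ is consistent in the three required senses. The feasibility-of-division condition holds because $n(Bt)$ is by construction a division of the $Bt$ observations accumulated by period $t$. The monotonicity condition $\tilde{d}_i(t) \ge \tilde{d}_i(t-1)$ is precisely the hypothesis that each coordinate of $n(Bt)$ increases in $t$; this is exactly what makes the target counts attainable by a sequential block-$B$ rule, since passing from period $t-1$ to period $t$ adds $n_i(Bt) - n_i(B(t-1)) \ge 0$ observations of each signal $i$, for a total of exactly $B$ new observations. The nesting condition holds trivially because $\tilde{G}$ is constant, so its image on any truncated input equals the corresponding truncation of its image. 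Next I would verify the variance-reduction hypothesis $f(\tilde{d}(T)) \le f(d(T))$: here $\tilde{d}(T) = n(BT)$ is by definition a minimizer of $f$ over all divisions of $BT$ total observations, and any $d(T)$ realized under a block-$B$ strategy satisfies $\sum_i d_i(T) = BT$, so it is one of the competitors in that minimization and $f(n(BT)) \le f(d(T))$ follows. This holds along every realized sequence, as required.

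With both hypotheses in hand, Lemma~\ref{lemmDynamicBlackwell} yields, for any strategy $S$ and decision strategy $A$, a deviation $\tilde{S}$ whose division vectors are exactly $n(B\cdot)$ and a decision strategy $\tilde{A}$ with weakly higher expected payoff. Since $S$ was arbitrary, the strategy achieving $n(Bt)$ at every period weakly dominates every alternative in every decision problem, hence is optimal; and because posterior variances are realization-independent under normality, all sequential block-$B$ rules hitting these counts induce the same posterior-variance path and are equivalent, so ``achieving $n(Bt)$'' is unambiguous. There is no real obstacle here, as the heavy lifting is done by Lemma~\ref{lemmDynamicBlackwell}; the only points demanding care are matching the block-constrained feasible set of divisions to the set over which $n(BT)$ is optimal, and recognizing that the monotonicity hypothesis is exactly the hinge that converts a static sequence of optimal divisions into a dynamically implementable one.
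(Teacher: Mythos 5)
Your proof is correct and takes exactly the route the paper intends: the paper derives this corollary in one line by ``considering a constant mapping $\tilde{G}$'' in Lemma~\ref{lemmDynamicBlackwell}, and your writeup simply fills in the verification of consistency (feasibility, monotonicity from the hypothesis, trivial nesting) and of the variance-reduction condition via $t$-optimality of $n(BT)$ over all divisions summing to $BT$. Nothing further is needed.
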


\subsubsection{Proof of Lemma \ref{lemmDynamicBlackwell}}
We construct $\tilde{S}$ iteratively as follows. In the first period, consider the signal choice under $S$ (given the null history). This signal leads to the division $d(1)$. Let $\tilde{S}$ observe the unique signal that would achieve the division $\tilde{d}(1)$.

After the first observation, the DM's \emph{distribution of posterior beliefs} about $\theta_1$ under $S$ is 
$\theta_1 \sim \mathcal{N}(\mu^1_1, f(d(1)))$ with $\mu^1_1$ a normal random variable with mean $\mu^0_1$ and variance $ f(\mathbf{0}) - f(d(1))$. By comparison, the distribution of posterior beliefs under $\tilde{S}$ is $\theta_1 \sim \mathcal{N}(\nu^1_1, f(\tilde{d}(1)))$ with $\nu^1_1$ drawn from $\mathcal{N}(\mu^0_1, f(\mathbf{0}) - f(\tilde{d}(1)))$. Since $f(\tilde{d}(1)) \leq f(d(1))$, the latter distribution of beliefs (under $\tilde{S}$) is more informative a la Blackwell. Thus, we can associate each belief $\theta_1 \sim \mathcal{N}(\mu^1_1, f(d(1)))$ under $S$ with a more informative distribution of beliefs $\mathcal{N}(\nu^1_1, f(\tilde{d}(1)))$ under $\tilde{S}$. To be more specific, for fixed $\mu^1_1$, the associated $\nu^1_1$ is distributed normally with mean  $\mu^1_1$ and variance $f(d(1)) - f(\tilde{d}(1)))$. Thus by construction, all three properties are satisfied at period $1$. To facilitate the discussion below, we say this distribution of beliefs under $\tilde{S}$ ``imitates" the belief $(\mu^1_1, f(d(1)))$ under $S$. 

In the second period, the deviation strategy $\tilde{S}$ takes the current belief $(\nu^1_1, f(\tilde{d}(1)))$ and randomly selects some $\mu^1_1$ (with conditional probabilities under the Markov kernel) to ``imitate." That is, given any selection of $\mu^1_1$, find the signal that $S$ would observe in period $2$ given belief $(\mu^1_1, f(d(1)))$. This signal choice under $S$ leads to the division sequence $(d(0), d(1), d(2))$, which is mapped to $(\tilde{d}(0), \tilde{d}(1), \tilde{d}(2))$. Naturally, we let $\tilde{S}$ observe the signal that would lead to the division $\tilde{d}(2)$. Such a signal is well-defined due to our consistency requirements on $\tilde{G}$. 

To proceed with the analysis, let us fix $\mu^1_1$ and study the distribution of posterior beliefs about $\theta_1$ after two observations. Under $S$, the distribution of posterior beliefs is $\theta_1 \sim \mathcal{N}(\mu^2_1, f(d(2)))$ with $\mu^2_1$ normally distributed with mean $\mu^1_1$ and variance $f(d(1)) - f(d(2))$. While under $\tilde{S}$, the distribution of posterior beliefs is $\theta_1 \sim (\nu^2_1, f(\tilde{d}(2)))$ with $\nu^2_1$ drawn from $\mathcal{N}(\mu^{1}_{1}, f(d(1)) - f(\tilde{d}(2)))$.\footnote{Here we use the following technical result: suppose the DM is endowed with \emph{a distribution of prior beliefs} $\theta \sim \mathcal{N}(\mu, V)$, with $\mu_1$ normally distributed with mean $y$ and variance $\sigma^2$, then upon observing signal $i$ and performing Bayesian updating, his distribution of posterior beliefs is $\theta \sim \mathcal{N}(\hat{\mu}, \hat{V})$, with $\hat{\mu}_1$ normally distributed with mean $y$ and variance $\sigma^2 + [V - \hat{V}]_{11}$. This is proved by noting that the DM's distribution of beliefs about $\theta_1$ must integrate to the same ex-ante distribution of $\theta_1$.} 

Since $f(\tilde{d}(2)) \leq f(d(2))$, the distribution of beliefs under $\tilde{S}$ Blackwell-dominates the distribution under $S$, for each $\mu^{1}_{1}$. We can thus associate each history $(\mu^1_1, d(1); \mu^2_1, d(2))$ under $S$ with a distribution of histories $(\nu^1_1, \tilde{d}(1); \nu^2_1, \tilde{d}(2))$ under $\tilde{S}$, such that the corresponding beliefs under $\tilde{S}$ are more informative at both periods. Repeating this procedure completes the construction of $\tilde{S}$, which satisfies all three properties stated in the lemma. 

Finally, suppose $A$ is any decision strategy that maps histories to actions. We need to find $\tilde{A}$ such that the pair $(\tilde{S}, \tilde{A})$ does no worse than $(S, A)$. This is straightforward given what we have done: at any history $\tilde{h}^{T}$ under $\tilde{S}$, let $\tilde{h}^T$ randomly select $h^T$ to imitate, and define $\tilde{A}(\tilde{h}^T) = A(h^T)$. Then we see that a DM who follows the decision strategy $A$ obtains the same payoff along any belief history $h$ as another DM who uses the decision strategy $\tilde{A}$ and faces the distribution of belief histories $\tilde{h}$. Integrating over $h$, we have shown that $(\tilde{S}, \tilde{A})$ achieves the same payoff as $(S, A)$. The lemma is proved.

\subsection{Proof of Theorem \ref{propBatch} (Large Block of Signals)} \label{appxBatch}
By Corollary \ref{corDynamicBlackwell}, it suffices to show that for sufficiently large $B$, each coordinate $n(Bt)$ is increasing in $t$. To do this, we first argue that the signal counts grow to infinity (roughly) proportionally. In more detail, define 
\begin{equation}\label{eq:lambda}
\lambda_i = \frac{\lvert [C^{-1}]_{1i} \rvert \cdot \sigma_i}{ \sum_{j=1}^{K} \lvert [C^{-1}]_{1j} \rvert \cdot \sigma_j }.
\end{equation}
Then we will show that for each signal $i$, $n_i(t) - \lambda_i \cdot t$ remains bounded even as $t \to \infty$.

Indeed, we must at least have $n_i(t) \to \infty$; otherwise the posterior variance $f(n(t))$ would be bounded away from zero, which would contradict the optimality of $n(t)$ since $f(t/K, \dots, t/K) \to 0$. Additionally, we compute from (\ref{eq:f}) that
\begin{equation}\label{eq:firstDer}
\partial_i f(n(t)) = -\frac{\sigma_i^2}{n_i^2} \cdot [V^0C' \Sigma^{-1} \Delta_{ii} \Sigma^{-1} CV^0]_{11}. 
\end{equation}

As each $n_i \to \infty$, the matrix $\Sigma = CV^0C' + D^{-1}$ (see Lemma \ref{lemmVar}) converges to $CV^0C'$. So $V^0C' \Sigma^{-1} \Delta_{ii} \Sigma^{-1} CV^0$ converges to the matrix $Q_i$ defined in (\ref{eq:Q}). It follows from (\ref{eq:firstDer}) that $\partial_i f \sim \frac{-\sigma_i^2}{n_i^2} \cdot [Q_i]_{11}$ (ratio converges to $1$). Since a $t$-optimal division must satisfy $\partial_i f \sim \partial_j f$ (because we are doing discrete optimization, $\partial_i f$ and $\partial_j f$ need only be approximately equal), we deduce that $n_i$ and $n_j$ must grow proportionally. Using $[Q_i]_{11} = ([C^{-1}]_{1i})^2$, we have $n_i(t) \sim \lambda_i t$. 

Next, note that because $n_i(t) \sim \lambda_i t$, $\Sigma = CV^0C' + D^{-1} = CV^0C' + O(\frac{1}{t})$.\footnote{``Big O" notation has the usual meaning.} Thus in fact $V^0C' \Sigma^{-1} \Delta_{ii} \Sigma^{-1} CV^0$ converges to $Q_i$ at the rate of $\frac{1}{t}$. From (\ref{eq:firstDer}), we obtain $\partial_i f = \frac{-\sigma_{i}^2 \cdot [Q_i]_{11} + O(\frac{1}{t})}{n_i^2}$. $t$-optimality gives us the first-order condition $\partial_i f = \partial_j f + O(\frac{1}{t^3})$.\footnote{That is, error terms due to discreteness are no larger than $\frac{1}{t^3}$. We omit the details.} So 
\[
\frac{\lambda_i^2 + O(\frac{1}{t})}{n_i^2} = \frac{\lambda_j^2 + O(\frac{1}{t})}{n_j^2}.
\]
This is equivalent to $\lambda_i^2 n_j^2 - \lambda_j^2 n_i^2 = O(t)$, which yields $\lambda_i n_j - \lambda_j n_i = O(1)$ after factorization. Hence $n_i(t) = \lambda_i \cdot t + O(1)$ as we claimed. 

Having completed this asymptotic characterization of the $t$-optimal division vectors, we will now show that $n(t+K-1) \geq n(t)$ (in each coordinate) whenever $t$ is sufficiently large. Theorem \ref{propBatch} will follow once this is proved.\footnote{To be fully rigorous, this only proves Theorem \ref{propBatch} when $B$ is sufficiently large and is a multiple of $K-1$. However, we can similarly show $n(t+K) \geq n(t)$ for sufficiently large $t$. The two inequalities $n(t+K-1) \geq n(t)$ and $n(t+K) \geq n(t)$ together are sufficient to deduce Theorem \ref{propBatch} for all large $B$.}

Suppose for the sake of contradiction that $n_1(t+K-1) \leq n_1(t) - 1$. Note we have $\sum_{i = 1}^{K} (n_i(t+K-1) - n_i(t)) = K-1$. So $\sum_{i = 2}^{K} (n_i(t+K-1) - n_i(t)) \geq K$, and we can without loss of generality assume $n_2(t+K-1) \geq n_2(t) + 2$. To summarize, when transitioning from $t$-optimality to $t+K-1$-optimality, signal $1$ is acquired at least once less and signal $2$ at least twice more. Below we will obtain a contradiction by arguing that at period $t + K - 1$, the posterior variance could be further reduced by observing signal $1$ once more and signal $2$ once less. 

Indeed, let us write $n_i = n_i(t)$ and $\tilde{n}_i = n_i(t+K-1)$. Then $t$-optimality of $n(t)$ gives us 
\[
f(n_1 - 1, n_2 + 1, \dots, n_K) \geq f(n_1, n_2, \dots, n_K).
\]
With a slight abuse of notation, we let $\partial_i f$ to denote the \emph{discrete} partial derivative of $f$: $\partial_i f (q) = f(q_i+1, q_{-i}) - f(q)$. Then the above display is equivalent to 
\begin{equation}\label{eq:ineq1}
\partial_2 f(n_1-1, n_2, \dots, n_K) \geq \partial_1 f(n_1-1, n_2, \dots, n_K).
\end{equation}
We claim this implies the following:
\begin{equation}\label{eq:ineq2}
\partial_2 f(\tilde{n}_1, \tilde{n}_2-1, \dots, \tilde{n}_K) > \partial_1 f(\tilde{n}_1, \tilde{n}_2-1, \dots, \tilde{n}_K).
\end{equation}
This would lead to 
\[
f(\tilde{n}_1, \tilde{n}_2, \dots, \tilde{n}_K) > f(\tilde{n}_1+1, \tilde{n}_2-1, \dots, \tilde{n}_K),
\]
which would be our desired contradiction. 

It remains to show (\ref{eq:ineq1})$\implies$(\ref{eq:ineq2}). By assumption, we have $\tilde{n}_1 \leq n_1 - 1$, $\tilde{n}_2 \geq n_2 + 2$ and the difference between any $\tilde{n}_j$ and $n_j$ is bounded uniformly over $t$. Thus the LHS of (\ref{eq:ineq2}) exceeds the LHS of (\ref{eq:ineq1}) by (at least) a second derivative $\partial_{22}$ minus a finite number of cross partial derivatives $\partial_{2j}$. By Lemma \ref{lemmSecondDer}, this difference on the LHS is positive with order $\frac{1}{t^3}$. The difference between the RHS of (\ref{eq:ineq2}) and the RHS of (\ref{eq:ineq1}) can be positive or negative, but either way it has order $O(\frac{1}{t^4})$. This shows (\ref{eq:ineq2}) is a consequence of (\ref{eq:ineq1}), and the theorem follows.

\subsection{Proof of Theorem \ref{propSeparable} (Separable Environments)} \label{appxSeparable}

Suppose the informational environment is separable. We will show $n(t)$ increases in $t$, which implies the theorem via Corollary \ref{corDynamicBlackwell}. 

Note that in a separable environment, the definition of $t$-optimality reduces to:
\[
n(t) =(n_1(t), \dots, n_K(t))\in \argmin_{(q_1,\dots,q_K) : q_i \in \mathbb{Z}_+, \sum_{i=1}^{K} q_i=t} \sum_{i = 1}^{K} g_i(q_i) 
\]
where $g_1, \dots, g_K$ are convex functions. 

In this setting, the myopic information acquisition strategy sequentially chooses the signal $i$ that minimizes the difference $g_i(q_i+1) - g_i(q_i)$, given the current division vector $q$. But since the $g$-functions are convex, the outcome under the myopic strategy coincides with $t$-optimality at every period $t$ (This is fairly well known, and it can be proved quickly by induction.) Hence $n(t)$ can and will be sequentially attained.

\subsection{Preparation for the Proof of Theorem \ref{propGenericEventual}} \label{appxDynamic}

\subsubsection{Switch Deviations}
We now introduce preliminary results that will be used to show that the optimal rule eventually proceeds myopically in generic environments. Relative to the proofs of Theorems \ref{propBatch} and \ref{propSeparable}, the new difficulty that arises is that in general, the optimal information acquisition strategy conditions on signal realizations. As a result, the induced division vectors $d(\cdot)$ are stochastic, and we will need the full power of our dynamic Blackwell lemma.

In what follows, we will apply Lemma \ref{lemmDynamicBlackwell} using a particular class of mappings $\tilde{G}$.

\begin{definition}
Fix a particular sequence of divisions $(d^*(0), d^*(1), \dots, d^*(t_0))$. Let $i$ be the signal observed in period $t_0$ and $j$ be any other signal. An \emph{$(i,j)$-switch} mapping $\tilde{G}$ specifies the following: 
\begin{enumerate}
\item Suppose $T < t_0$ or $d(t) \neq d^*(t)$ for some $t \leq t_0$, then let $\tilde{G}(d(0), \dots, d(T))$ be itself. 

\item Otherwise $T \geq t_0$ and $d(t) = d^*(t), \forall t \leq t_0$. If $d_j(T) = d_j(t_0)$, then let $\tilde{d}(T) = (d_i(T) - 1, d_j(T) + 1, d_{-ij}(T))$. If $d_j(T) > d_j(t_0)$, then let $\tilde{d}(T) = d(T)$. 
\end{enumerate}
\end{definition}

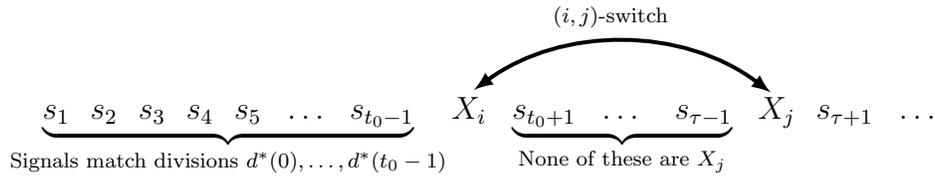
\begin{figure}[h]
\centering
\ifx\du\undefined
  \newlength{\du}
\fi
\setlength{\du}{15\unitlength}
\begin{tikzpicture}
\pgftransformxscale{1.000000}
\pgftransformyscale{-1.000000}
\definecolor{dialinecolor}{rgb}{0.000000, 0.000000, 0.000000}
\pgfsetstrokecolor{dialinecolor}
\definecolor{dialinecolor}{rgb}{1.000000, 1.000000, 1.000000}
\pgfsetfillcolor{dialinecolor}
\pgfsetlinewidth{0.100000\du}
\pgfsetdash{}{0pt}
\pgfsetdash{}{0pt}
\pgfsetbuttcap
{
\definecolor{dialinecolor}{rgb}{0.000000, 0.000000, 0.000000}
\pgfsetfillcolor{dialinecolor}
\pgfsetarrowsstart{latex}
\pgfsetarrowsend{latex}
\definecolor{dialinecolor}{rgb}{0.000000, 0.000000, 0.000000}
\pgfsetstrokecolor{dialinecolor}
\pgfpathmoveto{\pgfpoint{16.80062\du}{8.900047\du}}
\pgfpatharc{307}{234}{6.300000\du and 6.300000\du}
\pgfusepath{stroke}
}
\definecolor{dialinecolor}{rgb}{0.000000, 0.000000, 0.000000}
\pgfsetstrokecolor{dialinecolor}
\node[anchor=west] at (-2.70000\du,10.000000\du){$\underbrace{s_1~~s_2~~s_3~~s_4~~s_5~~\ldots~~s_{t_0-1}}_{\text{Signals match divisions $d^*(0),\ldots,d^*(t_0-1)$}}X_i~~\underbrace{s_{t_0+1}~~\ldots~~~s_{\tau-1}}_{\text{None of these are $X_j$}}~~X_j~~s_{\tau+1}~~\ldots$};
\definecolor{dialinecolor}{rgb}{0.000000, 0.000000, 0.000000}
\pgfsetstrokecolor{dialinecolor}
\node[anchor=west] at (11.00000\du,7.000000\du){\scriptsize{$(i,j)$-switch}};
\end{tikzpicture}
\caption{Pictorial representation of an $(i,j)$-switch based on a sequence of divisions $d^*(0),\ldots,d^*(t_0)$.}
\end{figure}

Let us interpret this definition by relating to the resulting deviation strategy $\tilde{S}$ constructed in Lemma \ref{lemmDynamicBlackwell}. The first case above says that $\tilde{S}$ only deviates when the history of divisions is $d^*(0), \dots, d^*(t_0-1)$ and $S$ is about to observe signal $i$ in period $t_0$. The second case says that $\tilde{S}$ dictates observing signal $j$ instead at that history; subsequently, $\tilde{S}$ observes the same signal as $S$ (at the imitated belief) until the first period at which $S$ is about to observe signal $j$. If that period exists, the deviation strategy $\tilde{S}$ \emph{switches back} to observing signal $i$ and coincides with $S$ afterwards. 

The benefit of these ``switch deviations" is that their posterior variances can be easily compared to the original strategy. Specifically, $\tilde{d}(t) = d(t)$ except at those histories that begin with $d^*(0), d^*(1), \dots, d^*(t_0-1)$ (and before signal $j$ is observed again under $S$). At such histories, the posterior variance is strictly lower under $\tilde{S}$ if and only if 
\[
f(d_i(t) - 1, d_j(t) + 1, d_{-ij}(t)) < f(d(t)). 
\]
Using (absolute values of) the discrete partial derivatives, we can rewrite this condition as
\begin{equation}\label{eq:comparePartials}\tag{*}
\lvert \partial_i f(d_i(t)-1, d_j(t), d_{-ij}(t)) \rvert < \lvert \partial_j f(d_i(t)-1, d_j(t), d_{-ij}(t)) \rvert.
\end{equation}

We can thus obtain the following corollary:
\begin{corollary}\label{corSwitch}
Suppose we can find a history of divisions $d(0), \dots, d(t_0)$ realized under $S$ such that $d_i(t_0) = d_i(t_0-1) + 1$ and moreover (\ref{eq:comparePartials}) holds for all divisions $d(t)$ with $d_j(t) = d_j(t_0)$ and $d_k(t) \geq d_k(t_0), \forall k$. Then the switch deviation $\tilde{S}$ constructed above improves upon $S$.
\end{corollary}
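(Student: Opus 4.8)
The plan is to read this corollary as a direct specialization of the dynamic Blackwell comparison (Lemma \ref{lemmDynamicBlackwell}) to the consistent mapping $\tilde{G}$ given by the $(i,j)$-switch. Once $\tilde{G}$ is confirmed to be a consistent mapping, the entire task collapses to verifying the single hypothesis of that lemma, namely that $f(\tilde{d}(T)) \leq f(d(T))$ for every sequence of divisions $(d(0), \dots, d(T))$ realized under $S$. I would first check consistency (the only nontrivial requirement being coordinatewise monotonicity of $\tilde{d}(\cdot)$): before $t_0$ the mapping is the identity; at $t_0$ the increment $e_i$ chosen by $S$ is replaced by $e_j$; throughout the active region signal $j$ is never sampled under $S$ (that region ends precisely when $S$ next observes $j$), so each step there is an increment $e_k$ with $k\neq j$ inherited verbatim by $\tilde{d}$; and the terminal ``switch-back'' step repays the deficit with an increment $e_i$. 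Hence $\tilde{d}$ is non-decreasing and every step is a single admissible observation.

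Next I would split the realized sequences into inactive and active histories and check the variance inequality on each. On inactive histories — those not beginning with $d^*(0),\dots,d^*(t_0-1)$, or those in the active branch after $S$ has resampled $j$ (so $d_j(T) > d_j(t_0)$) — the mapping is the identity, $\tilde{d}(T) = d(T)$, and the inequality holds with equality. On active histories, $\tilde{d}(T) = (d_i(T)-1, d_j(T)+1, d_{-ij}(T))$, and the discussion preceding the corollary already records the algebraic identity
\[ f(\tilde{d}(T)) - f(d(T)) = \lvert \partial_i f(d_i(T)-1, d_j(T), d_{-ij}(T)) \rvert - \lvert \partial_j f(d_i(T)-1, d_j(T), d_{-ij}(T)) \rvert, \]
where I use that $f$ is decreasing (Lemma \ref{lemmVar}) to replace the signed discrete derivatives by their absolute values. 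Thus $f(\tilde{d}(T)) < f(d(T))$ is exactly equivalent to (\ref{eq:comparePartials}), so no new estimate is required here.

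The crux is then the bookkeeping step of matching the active histories to the index set in the hypothesis. A history reaches the active branch only if $d(s) = d^*(s)$ for all $s\leq t_0$ and $d_j(T) = d_j(t_0)$; since counts are non-decreasing from $d^*(t_0)$ onward, every such $d(T)$ automatically satisfies $d_k(T)\geq d_k(t_0)$ for all $k$ — precisely the divisions named in the corollary's assumption. Hence (\ref{eq:comparePartials}) holds on every active history, yielding $f(\tilde{d}(T)) < f(d(T))$ there and equality elsewhere, so the hypothesis of Lemma \ref{lemmDynamicBlackwell} is met and the lemma delivers $\tilde{S}$ together with a decision strategy whose payoff weakly dominates $(S,A)$ for every decision problem. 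For the strict improvement, I would observe that the fixed sequence $d^*(0),\dots,d^*(t_0)$ is realized under $S$ with positive probability, so the active histories carry positive probability; by property 3 of the lemma the belief under $\tilde{S}$ is there a genuine mean-preserving spread of the belief under $S$ (added variance $f(d(t)) - f(\tilde{d}(t)) > 0$), making $\tilde{S}$ strictly Blackwell-more-informative on a positive-probability event, which for the decision problems of interest (generically, where strictly more precise beliefs give strictly higher expected payoff) turns the weak payoff inequality strict.

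I expect the main obstacle to be organizational rather than analytical: getting the two index sets to coincide — the set of histories on which the switch is genuinely ``active'' versus the set $\{\,d(t): d_j(t)=d_j(t_0),\ d_k(t)\geq d_k(t_0)\ \forall k\,\}$ over which (\ref{eq:comparePartials}) is hypothesized — and verifying the monotonicity of $\tilde{d}$ across the switch-back so that Lemma \ref{lemmDynamicBlackwell} is legitimately applicable. The analytic heart, the equivalence between (\ref{eq:comparePartials}) and the variance reduction, is already in hand from the preceding discussion.
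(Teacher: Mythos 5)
Your proposal is correct and follows essentially the same route as the paper: the corollary is exactly the specialization of Lemma \ref{lemmDynamicBlackwell} to the $(i,j)$-switch mapping, with the active/inactive case split and the identity $f(\tilde{d}(T))-f(d(T))=\lvert\partial_i f\rvert-\lvert\partial_j f\rvert$ at the shifted base point doing all the work, just as in the discussion preceding the statement. The only cosmetic difference is in how ``improves upon'' is made strict: you appeal to decision problems where more precise beliefs strictly raise payoffs, whereas the paper sidesteps this by later selecting $S$ to be an optimal strategy that is additionally un-dominated in posterior variance, so that a variance-dominating deviation already yields the needed contradiction.
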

Note that the condition $d_j(t) = d_j(t_0)$ captures the fact that $\tilde{d}(t)$ differs from $d(t)$ only until signal $j$ is chosen again by $S$. Meanwhile, $d_k(t) \geq d_k(t_0), \forall k$ holds because we only compare posterior variances after $t_0$ periods.

\subsubsection{Asymptotic Characterization of Optimal Strategy}
Below we will use the contrapositive of Corollary \ref{corSwitch} to argue that if $S$ is the optimal information acquisition strategy, then we cannot find any history of realized divisions such that (\ref{eq:comparePartials}) always holds. Technically speaking, we might worry that although $\tilde{S}$ strictly improves upon $S$ in terms of posterior variances, it might achieve the same expected payoff as $S$ (for instance, when the DM faces a constant payoff function). Nonetheless, by Zorn's lemma we can choose $S$ to be an optimal strategy that is additionally ``un-dominated" in terms of posterior variances. With that choice, the deviation $\tilde{S}$ cannot exist, and our arguments remain valid. (Note that our theorems only state that the DM \emph{has an optimal strategy} \dots)

To illustrate, we now derive the asymptotic signal proportions for the optimal information acquisition strategy $S$. 
\begin{lemma}\label{lemmDynamicFreq}
Suppose $S$ is the optimal information acquisition strategy, and $d(\cdot)$ is its induced divisions. Let $\lambda_k$ be defined as in (\ref{eq:lambda}). In generic informational environments, the difference $d_k(T) - \lambda_k \cdot T$ remains bounded as $T \to \infty$, for any realized division $d(T)$ and each signal $k$. 
\end{lemma}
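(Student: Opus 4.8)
The plan is to transport the asymptotic analysis of the static $t$-optimal divisions from Appendix \ref{appxBatch} into the dynamic, path-dependent setting, replacing the static first-order condition (which there followed from exact minimization of $f$) with a ``dynamic first-order condition'' extracted from switch-deviation optimality (Corollary \ref{corSwitch}). Throughout I would work with the posterior-variance-undominated optimal strategy $S$ selected via Zorn's lemma, so that no switch deviation can strictly lower posterior variance at every period. Write $g_i := |\partial_i f|$ for the marginal value of an additional observation of signal $i$, and $a_i := \sigma_i\,|[C^{-1}]_{1i}|$, so that $\lambda_i = a_i/\sum_j a_j$ by (\ref{eq:lambda}). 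The target $d_k(T) = \lambda_k T + O(1)$ will then follow from three ingredients: all counts grow to infinity, the marginal values $g_i$ equalize along every realized path, and a factorization identical to the one in Appendix \ref{appxBatch}. I would first establish that along every realized path each $d_k(T)\to\infty$: under Assumption \ref{assumptionIdentifiability}, $\theta_1$ is a linear combination of all the transformed states $\langle c_k,\theta\rangle$ with nonzero weights, so neglecting a signal leaves $f(d(T))$ bounded away from $0$, and Lemma \ref{lemmDynamicBlackwell} applied to a mapping that occasionally inserts the neglected signal would strictly improve posterior variance, contradicting undominatedness.

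The core step is to derive \emph{approximate greediness}. Fix signals $i\neq j$ and a large period $t_0$ at which $S$ chooses $i$, and apply the $(i,j)$-switch of Corollary \ref{corSwitch}. Condition (\ref{eq:comparePartials}) at the anchor period reads $g_i(d(t_0-1)) < g_j(d(t_0-1))$, and over the ensuing ``window'' (until $S$ next chooses $j$, during which $d_j$ is frozen) the switched comparison evaluates $g_i$ against $g_j$ at divisions whose $i$-count and other-than-$j$ counts only weakly increase. By Lemma \ref{lemmSecondDer}, once counts are large and grow proportionally, each extra observation moves $g_i$ through its own second derivative (order $1/d_i^3$) while affecting $g_j$ only through cross-partials (order $1/(d_i^2 d_j^2)$), which are of strictly lower order; hence across a window of bounded length the sign of $g_i - g_j$ cannot flip back. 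Thus a strict initial deficiency $g_i < g_j$ would persist and make the switch a strict improvement, which is impossible. This forces $g_i(d(t_0-1)) \geq g_j(d(t_0-1)) - O(1/T^3)$ whenever $i$ is chosen, and symmetrically for $j$, so along every realized path all marginal values coincide up to lower-order terms.

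It then remains to combine the asymptotics with the equalization. From (\ref{eq:firstDer}) and $\Sigma = CV^0C' + O(1/T)$ (valid once the counts grow proportionally, using $[Q_i]_{11} = ([C^{-1}]_{1i})^2$ from (\ref{eq:Q})), one gets $g_i = (a_i^2 + O(1/T))/d_i^2$. Substituting into $g_i = g_j + O(1/T^3)$ and clearing denominators yields $a_i^2 d_j^2 - a_j^2 d_i^2 = O(T)$; factoring as $(a_i d_j - a_j d_i)(a_i d_j + a_j d_i)$ and using $a_i d_j + a_j d_i = \Theta(T)$ gives $a_i d_j - a_j d_i = O(1)$, i.e.\ $d_k/a_k$ is common across $k$ up to an additive $O(1)$. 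Together with $\sum_k d_k(T) = T$ this is exactly $d_k(T) = \lambda_k T + O(1)$, as claimed.

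The main obstacle is the interplay between window lengths and the order comparison of Lemma \ref{lemmSecondDer}: that lemma makes own-second-derivatives dominate cross-partials only when counts grow proportionally (so that $d_j \gg \sqrt{d_i}$), yet proportionality is precisely what is being proved. I would resolve this by a bootstrap, first establishing crude two-sided frequency bounds (each signal takes a positive fraction, via the self-correcting mechanism that a frozen count has eventually largest marginal value and so must be selected), which already guarantee proportional growth, and only then invoking the sharp comparison. The same approximate greediness, once bootstrapped, bounds the window lengths by $O(1)$, keeping the accumulated cross-partial error at $O(1/T^3)$. I invoke genericity of $C$ only to ensure the relevant switch comparisons are strict, ruling out knife-edge ties so that the undominated selection is genuinely pinned down. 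Finally, I would emphasize that \emph{exact} greediness is false: complementarities at small divisions (as in Example~1, where observing a third signal lifts the value of $X_2$ above $X_1$) flip the within-window sign and defeat the switch, which is exactly why the conclusion is asymptotic proportionality rather than myopia from period~$1$.
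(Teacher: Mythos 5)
Your proposal is correct and follows essentially the same route as the paper: establish divergence of all counts via switch deviations and generic strict positivity of marginal values, then use the first-derivative asymptotics from (\ref{eq:firstDer}) together with Corollary \ref{corSwitch} and Lemma \ref{lemmSecondDer} to equalize marginal values, bootstrap from rough proportionality to the refined $O(1/t)$ estimate, and conclude via the factorization $a_i^2 d_j^2 - a_j^2 d_i^2 = O(T)$. The only cosmetic difference is your ``approximate greediness'' framing and the attendant worry about window lengths, which the paper sidesteps by evaluating $|\partial_i f|$ and $|\partial_j f|$ directly at every division with $d_j$ frozen and the other counts weakly increased, rather than tracking incremental cross-partial drift from the anchor period.
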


\begin{proof}
For this proof, we only need the informational environment to be such that each signal has \emph{strictly positive marginal value}. That is, for any signal $k$ and any possible division $q$, we require
\[
f(q_k + 1, q_{-k}) < f(q).
\]
This is ``generically" satisfied because any equality $f(q_k + 1, q_{-k}) = f(q)$ would impose a non-trivial polynomial equation over the signal linear coefficients, and the number of such constraints is at most countable. 

Under this genericity assumption, let us first show $d_k(T) \to \infty$ holds for each signal $k$, and the speed of divergence depends only on the informational environment. For contradiction, suppose this is not true. Then we can find a sequence of histories $\{h^{T_m}\}$ such that $T_m \to \infty$ but $d_1(T_m)$ remains bounded (these histories need not nest one another). By passing to a subsequence, we may assume $q_k = \lim_{m \to \infty} d_k(T_m)$ exists for every signal $k$, where this limit may be infinity. Define $I$ to be the non-empty subset of signals (not including signal $1$) with $q_k = \infty$. Furthermore, we assume that the signal observed in the last period of each of these histories $h^{T_m}$ is the same signal $i$. We also assume $i \in I$; otherwise just truncate the histories by finitely many periods.  

Take any signal $j \notin I$ (for instance, $j = 1$ works). Choose $T_m$ sufficiently large and consider the $(i,j)$-switch deviation $\tilde{S}$ that deviates from $h^{T_m}$ by observing signal $j$ instead of $i$ in period $T_m$. 
We will verify (\ref{eq:comparePartials}) for all possible divisions $d(t)$ with $d_j(t) = d_j(T_m)$ and $d_i(t) \geq d_i(T_m)$, which will contradict the optimality of $S$ via Corollary \ref{corSwitch}. Indeed, note that as $T_m \to \infty$, $d_i(T_m) \to \infty$ because $i \in I$. Since $d_i(t) \geq d_i(T_m)$, the LHS of (\ref{eq:comparePartials}) approaches zero as $T_m$ increases. By comparison, the RHS of (\ref{eq:comparePartials}) is bounded away from zero because $d_j(t) = d_j(T_m)$ is bounded, and we assume each signal has strictly positive marginal value. Hence (\ref{eq:comparePartials}) holds and we have shown that $d(T) \to \infty$ in each coordinate. 

Next, from (\ref{eq:firstDer}), we have the following approximations for the partial derivatives:
\[
\lvert \partial_i f(d_i(t)-1, d_j(t), d_{-ij}(t)) \rvert \sim \frac{\sigma_i^2 \cdot [Q_i]_{11}}{d_i(t)^2} \qquad \lvert \partial_j f(d_i(t)-1, d_j(t), d_{-ij}(t)) \rvert \sim \frac{\sigma_j^2 \cdot [Q_j]_{11}}{d_j(t)^2}.
\]
If $\limsup_{t_0 \to \infty} \frac{d_i(t_0)}{d_j(t_0)} > \frac{\lambda_i}{\lambda_j}$ (recall that $\lambda_i$ is proportional to $\sigma_i \cdot \sqrt{[Q_i]_{11}}$), then the above estimates would imply (\ref{eq:comparePartials}) whenever $d_i(t) \geq d_i(t_0)$ (because $t \geq t_0$) and $d_j(t) = d_j(t_0)$. That would contradict the optimality of $S$. Hence, $\limsup_{t_0 \to \infty} \frac{d_i(t_0)}{d_j(t_0)} \leq \frac{\lambda_i}{\lambda_j}$ for every pair of signals $i$ and $j$. It follows that $d_k(t_0) \sim \lambda_k \cdot t_0, \forall k$. 

Once these asymptotic proportions are proved, we know that the matrix $\Sigma = CV^0C' + D^{-1}$ converges to $CV^0C'$ at the rate of $\frac{1}{t}$. By (\ref{eq:firstDer}), we can deduce more precise approximations:
\[
\lvert \partial_i f(d_i(t)-1, \cdots) \rvert = \frac{\sigma_i^2 \cdot [Q_i]_{11} + O(\frac{1}{t})}{d_i(t)^2} \qquad \lvert \partial_j f(d_i(t)-1, \cdots) \rvert = \frac{\sigma_j^2 \cdot [Q_j]_{11} + O(\frac{1}{t})}{d_j(t)^2}.
\]
If $\frac{d_i(t_0)}{d_j(t_0)} > \frac{\lambda_i}{\lambda_j} + O(\frac{1}{t_0})$, then these refined estimates would again imply (\ref{eq:comparePartials}) whenever $d_i(t) \geq d_i(t_0)$ and $d_j(t) = d_j(t_0)$. To avoid the resulting contradiction, we must have $\frac{d_i(t_0)}{d_j(t_0)} \leq \frac{\lambda_i}{\lambda_j} + O(\frac{1}{t_0})$
for every signal pair. This enables us to conclude $d_k(t_0) = \lambda_k \cdot t_0 + O(1)$ as desired. 
\end{proof}

\subsection{Proof of Theorem \ref{propGenericEventual} (Generic Eventual Myopia)}\label{appxGenericEventual}

\subsubsection{Outline of the Proof}
To guide the reader through this appendix, we begin by outlining the proof of the theorem, which is broken down into several steps. Throughout, we focus on the case of $B = 1$ (one signal each period), but our proof easily extends to arbitrary $B$. We will first show a simpler (and weaker) result that, in generic environments, the number of periods in which the optimal strategy differs from the $t$-optimal division has natural density $1$. Our proof of this result is based on the observation that if equivalence does not hold at some time $t$, there must be \emph{two different divisions} over signals for which the resulting posterior variances about $\theta_1$ are within $O(\frac{1}{t^4})$ from each other. This leads to a Diophantine approximation inequality, which we can show only occurs at a vanishing fraction of periods $t$. 

To improve the result and demonstrate equivalence at \emph{all late periods}, we show that the number of ``exceptional periods" $t$ is generically finite if there are \emph{three different divisions} over signals whose posterior variances are within $O(\frac{1}{t^4})$ from each other. This allows us to conclude that in generic environments, the $t$-optimal divisions eventually monotonically increase in $t$. 

In such environments, $t$-optimality can be achieved at every late period. Thus, whenever $t$-optimality obtains in \emph{some} late period, it will be sustained in all future periods. Since we have already established that the optimal strategy achieves $t$-optimality infinitely often, we conclude equivalence at all large $t$. 

We highlight that in this appendix, we use a slightly different notion of ``generic" where we fix the signal coefficient matrix $C$ and instead (randomly) vary the signal variances $\{\sigma_i^2\}$. This concept implies (and is stronger than) the previous genericity concept defined on $C$. 

\subsubsection{Equivalence at Almost All Times}
We begin by proving a weaker result, that the optimal strategy induces the $t$-optimal division $n(t)$ at almost all periods $t$. 

\begin{proposition}\label{propGenericAlmostAll}
Suppose the informational environment $(V^0, C, \{\sigma_{i}^2\})$ has the property that for any $i \neq j$, the ratio $\frac{\lambda_i}{\lambda_j}$ is an irrational number. Then, at a set of times with natural density $1$,\footnote{Formally, for any set of positive integers $A$, let $A(N)$ count the number of integers in $A$ no greater than $N$. Then we define the natural density of $A$ to be $\lim_{N \to \infty} \frac{A(N)}{N}$, when this limit exists.} $d(t) = n(t)$ (which is unique) holds for every decision problem. In particular, the optimal strategy induces a deterministic division vector at such times. 
\end{proposition}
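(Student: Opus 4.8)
The plan is to reduce the comparison of posterior variances among nearby divisions to a single quadratic form, and then to show that any failure of the equivalence $d(t)=n(t)$ would force a Diophantine near-coincidence that, under the stated irrationality hypothesis, occurs only on a density-zero set of periods. First I would pin down the geometry: by Lemma \ref{lemmDynamicFreq} the un-dominated optimal strategy $S$ induces divisions with $d_k(t)=\lambda_k t+O(1)$, and the same asymptotics hold for the static optimum $n(t)$ (as established in the proof of Theorem \ref{propBatch}), so every candidate division at time $t$ lies in a bounded box around the ray $\lambda t$. On this box I would develop a refined expansion of $f$. Integrating the first-order formula (\ref{eq:firstDer}), which gives $\partial_i f\sim -\sigma_i^2[Q_i]_{11}/q_i^2$ with $\sigma_i^2[Q_i]_{11}=\lambda_i^2\Lambda^2$ for $\Lambda=\sum_j|[C^{-1}]_{1j}|\sigma_j$, and using the order bound $|\partial^2 f/\partial q_i\partial q_j|=O(1/(q_i^2q_j^2))$ of Lemma \ref{lemmSecondDer} to discard cross terms, one obtains for $q$ summing to $t$ in this box
\[
f(q)=f_\infty+\frac{\Lambda^2}{t}+\frac{\Lambda^2}{t^3}\sum_{i=1}^{K}\frac{(q_i-\lambda_i t)^2}{\lambda_i}+O\!\left(\frac{1}{t^4}\right).
\]
Writing $W(q)=\sum_i(q_i-\lambda_i t)^2/\lambda_i$, the comparison of any two such divisions reduces, to the leading order $1/t^3$, to the comparison of $W$.

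Next I would run the switch-deviation argument in contrapositive form. Fix a period $t$ and suppose $d(t)\neq n(t)$ on some realized history; then some signal $i$ is over-sampled ($d_i(t)>n_i(t)$) and some $j$ under-sampled. Locating the last period at which $i$ was chosen and applying an $(i,j)$-switch, condition (\ref{eq:comparePartials}) reads $|\partial_i f|<|\partial_j f|$ at the relevant divisions, and by the expansion above the associated reduction in posterior variance is of order $1/t^3$. Unless that gain is cancelled to within the $O(1/t^4)$ discretization error, Corollary \ref{corSwitch} would contradict the un-dominatedness of $S$. Hence $d(t)\neq n(t)$ (and, analogously, non-uniqueness of $n(t)$) forces the existence of two distinct near-optimal divisions $q\neq q'$ with $|f(q)-f(q')|=O(1/t^4)$, equivalently $|W(q)-W(q')|=O(1/t)$.

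Finally I would convert this near-coincidence into a Diophantine inequality and eliminate it by equidistribution. Writing $q'=q+s$ with $s\neq\mathbf 0$ a bounded integer vector satisfying $\sum_i s_i=0$, and using integrality of the $q_i$ together with $\sum_i s_i=0$ to cancel the explicit $t$-terms, one finds
\[
W(q')-W(q)=\sum_{i=1}^{K}\frac{s_i\,(2q_i+s_i)}{\lambda_i},
\]
an integer combination with coefficients of size $\Theta(t)$ of the quantities $1/\lambda_i$. Demanding that this be $O(1/t)$ places the orbit $(\lambda_1 t,\dots,\lambda_K t)$ modulo $1$ within $O(1/t)$ of a measure-zero union of affine pieces of the torus; for the dominant single-unit swaps $s=e_i-e_j$ it is precisely the requirement that $\lambda_i/\lambda_j$ be approximated by the rational $(2n_i+1)/(2n_j-1)$ to within $O(1/t)$. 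Since the hypothesis makes every $\lambda_i/\lambda_j$ irrational, a Weyl-type equidistribution argument (equivalently, the sparsity of continued-fraction convergents) shows these near-ties happen only on a set of periods of natural density $0$. Thus at density-$1$ many $t$ there is no near-tie, $n(t)$ is the unique minimizer, the switch argument forces $d(t)=n(t)$ for every decision problem (the whole argument runs through posterior-variance dominance, hence is payoff-independent), and the induced division is deterministic.

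I expect the main obstacle to be the middle step: making the switch-deviation contradiction fully rigorous. The delicate points are the ``switch-back'' bookkeeping built into the definition of the $(i,j)$-switch\textemdash one must verify (\ref{eq:comparePartials}) \emph{simultaneously} at all intermediate divisions with $d_j=d_j(t_0)$, not merely at the single division $d(t)$\textemdash and the verification that the genuine order-$1/t^3$ gain from realigning an over-sampled toward an under-sampled signal is not swamped by the accumulated $O(1/t^4)$ error terms. Pinning down the precise ``bad set'' so that equidistribution cleanly delivers density $0$ is a second, more routine, difficulty.
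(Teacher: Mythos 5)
Your overall architecture matches the paper's: localize both $d(t)$ and $n(t)$ around $\lambda t$ via Lemma \ref{lemmDynamicFreq}, use the contrapositive of the switch-deviation corollary (Corollary \ref{corSwitch}) to show that $d(T)\neq n(T)$ forces two nearby divisions whose posterior variances differ by $O(1/T^4)$, convert that near-tie into a Diophantine inequality in $\lambda_i/\lambda_j$, and kill it by equidistribution. The switch-back bookkeeping you flag as the main obstacle is actually the routine part: it is handled by verifying (\ref{eq:comparePartials}) only on divisions with $d_j(t)=d_j(t_0)$ and $d_k(t)\geq d_k(t_0)$, and the boundedness of $t-t_0$ and $t_0-T$ follows from Lemma \ref{lemmDynamicFreq}.

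The genuine gap is in your expansion of $f$. You integrate the leading-order formula $\partial_i f\sim -\sigma_i^2[Q_i]_{11}/q_i^2$ and center the resulting quadratic form exactly at $\lambda_i t$. But since $\Sigma = CV^0C'+D^{-1}=CV^0C'+O(1/t)$, equation (\ref{eq:firstDer}) actually gives $\partial_i f = -\bigl(\sigma_i^2[Q_i]_{11}+O(1/t)\bigr)/q_i^2$, an additive error of order $1/t^3$ in each partial derivative with a coefficient that generically varies across $i$. When you compare $f(q)$ with $f(q+s)$ for a bounded integer vector $s$ with $\sum_i s_i=0$, this error contributes a term $\sum_i \kappa_i s_i/t^3$ that does not cancel on the hyperplane $\sum_i s_i=0$ and is of the \emph{same} order as the quadratic difference $W(q+s)-W(q)$ your argument rides on. So the displayed expansion, and hence the asserted equivalence between $|f(q)-f(q')|=O(1/t^4)$ and $|W(q)-W(q')|=O(1/t)$, is false as written, and the Diophantine condition you derive is not the one that characterizes the bad periods. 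This is exactly what the paper's Lemma \ref{lemmFirstDerApprox} is built to repair: it tracks the $1/t$ correction matrix $M_j$ in the expansion of $V^0C'\Sigma_0^{-1}\Delta_{jj}\Sigma_0^{-1}CV^0$ and absorbs it into a shift $a_j$ of the center, yielding $f(q_j,q_{-j})-f(q_j+1,q_{-j})=\sigma_j^2[Q_j]_{11}/(q_j-a_j)^2+O(1/t^4)$. The fix is harmless downstream, since re-centering only changes the constant that the fractional part of $(\lambda_2/\lambda_1)q_1$ must approximate and equidistribution is insensitive to that constant, but without it your density-zero conclusion does not follow from what you wrote. (A minor additional slip: the single-swap condition is approximation of $\lambda_i/\lambda_j$ by $(2q_i+1)/(2q_j-1)$ to within $O(1/t^2)$, not $O(1/t)$; this does not affect the conclusion.)
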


\begin{proof}[Proof of Proposition \ref{propGenericAlmostAll}]
Suppose that $d_1(T) \geq n_1(T)+1$ and $d_2(T) \leq n_2(T)-1$. Consider the last period $t_0 \leq T$ in which the optimal strategy observed signal $1$. Then 
\[
d_1(t_0) = d_1(T) \geq n_1(T) + 1; \quad d_2(t_0) \leq d_2(T) \leq n_2(T) - 1.
\]

Using the contrapositive of Corollary \ref{corSwitch} with the $(1,2)$-switch, we know that (\ref{eq:comparePartials}) cannot always hold. Thus there exists a division $d(t)$ such that the inequality (\ref{eq:comparePartials}) is reversed. That is, we can find a division $d(t)$ with $d_1(t) \geq d_1(t_0)$ and $d_2(t) = d_2(t_0)$ such that (getting rid of the absolute values)
\begin{equation}\label{eq:ineq3}
\partial_1 f(d_1(t)-1, d_2(t), d_{-12}(t)) \leq \partial_2 f(d_1(t)-1, d_2(t), d_{-12}(t)). 
\end{equation}
On the other hand, $t$-optimality of $n(t)$ gives us
\begin{equation}\label{eq:ineq4}
\partial_1 f(n_1(T), n_2(T)-1, n_{-12}(T)) \geq \partial_2 f(n_1(T), n_2(T)-1, n_{-12}(T)). 
\end{equation}

Note that $d_2(t) = d_2(t_0)$ implies $t - t_0$ is bounded (due to Lemma \ref{lemmDynamicFreq}). On the other hand, we have $d_1(t) = d_1(T_0)$ by construction ($t_0$ is the last period signal $1$ was observed). Hence $t_0 - T$ is also bounded. Combining both, we deduce $t - T$ must be bounded. Applying Lemma \ref{lemmDynamicFreq} again, we know that any difference $d_i(t) - n_i(T)$ is bounded. 

Now because $d_1(t) - 1 \geq d_1(t_0) - 1 \geq n_1(T)$, the LHS of (\ref{eq:ineq3}) has size at least the LHS of (\ref{eq:ineq4}) minus a finite number of cross partial derivatives $\partial_{1j}$. Similarly, the RHS of (\ref{eq:ineq3}) is at most bigger than the RHS of (\ref{eq:ineq3}) by a number of cross partials. Together with the order difference lemma, these imply that the only way (\ref{eq:ineq3}) and (\ref{eq:ineq4}) can both hold is if the two sides of (\ref{eq:ineq4}) differ by at most $O\left(\frac{1}{T^4}\right)$. 

To summarize: A necessary condition for $d_1(T) \geq n_1(T)+1$ and $d_2(T) \leq n_2(T) + 1$ to occur is that 
\begin{equation}\label{eq:tinyDiff1}
|f(n_1(T)+1, n_2(T)-1, \dots, n_K(T)) - f(n(T))| = O\left(\frac{1}{T^4}\right). 
\end{equation}
Hence, to prove the Proposition we only need to show that (\ref{eq:tinyDiff1}) holds at a set of times with natural density $0$. The following lemma proves exactly this property.
\end{proof}

\smallskip

\begin{lemma}\label{lemmTwoDivisions}
Suppose $\frac{\lambda_1}{\lambda_2}$ is an irrational number. For positive constants $c_0, c_1$, define $\mathcal{A}(c_0, c_1)$ to be the following set of positive integers: 
\begin{align*}
\{t: \exists \,\, q_1, q_2, \dots, q_K & \in \mathbb{Z}_{+}, s.t. ~|q_i - \lambda_i \cdot t| \leq c_0, \forall i \\
& ~ \wedge ~ |f(q_1, q_2+1, \dots, q_K) - f(q_1+1, q_2, \dots, q_K)| \leq c_1/t^4 \}. 
\end{align*}
Then $\mathcal{A}(c_0, c_1)$ has natural density zero.
\end{lemma}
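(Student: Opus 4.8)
The plan is to convert the analytic smallness condition defining $\mathcal{A}(c_0,c_1)$ into a Diophantine approximation inequality for the irrational number $\alpha := \lambda_1/\lambda_2$, and then invoke sparsity of good rational approximations. First I would rewrite the quantity in the definition as a difference of discrete partial derivatives: $f(q_1,q_2+1,\dots,q_K)-f(q_1+1,q_2,\dots,q_K)=\partial_2 f(q)-\partial_1 f(q)$, evaluated at the base point $q=(q_1,\dots,q_K)$. Using the first-derivative formula (\ref{eq:firstDer}) together with $\Sigma=CV^0C'+D^{-1}\to CV^0C'$, each partial behaves like $\partial_i f(q)\sim -\sigma_i^2[Q_i]_{11}/q_i^2$. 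The structural fact that drives everything is that, by the definition (\ref{eq:lambda}) of $\lambda_i$ and the identity $[Q_i]_{11}=([C^{-1}]_{1i})^2$ from (\ref{eq:Q}), one has $\sigma_i^2[Q_i]_{11}=\kappa\,\lambda_i^2$ with a constant $\kappa$ \emph{independent of} $i$. Hence, when $q_i\approx\lambda_i t$, the leading $1/t^2$ terms of $\partial_1 f$ and $\partial_2 f$ coincide and cancel in the difference.

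The main technical step is then a second-order expansion. Writing $q_i=\lambda_i t+\delta_i$ with $\delta_i=O(1)$ (by Lemma \ref{lemmDynamicFreq}, or simply by the constraint $|q_i-\lambda_i t|\le c_0$), I would expand $\Sigma^{-1}=(CV^0C')^{-1}+O(1/t)$ one order further and Taylor-expand $1/q_i^2=1/(\lambda_i t)^2\cdot(1-2\delta_i/(\lambda_i t)+O(1/t^2))$. Carrying this through, all the surviving order-$1/t^3$ contributions should collapse into an affine function of $q_1\lambda_2-q_2\lambda_1$:
\[
\partial_2 f(q)-\partial_1 f(q)=\frac{\rho}{t^3}\bigl(q_1\lambda_2-q_2\lambda_1-\beta\bigr)+O\!\left(\frac{1}{t^4}\right),
\]
where $\rho\neq 0$ is an environment constant and $\beta$ is a shift depending on the bounded offsets $\delta_3,\dots,\delta_K$, hence taking only finitely many values. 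Consequently the defining inequality $|\partial_2 f-\partial_1 f|\le c_1/t^4$ forces $|q_1\lambda_2-q_2\lambda_1-\beta|\le C/t$ for a constant $C$ depending on $(c_0,c_1)$ and the environment.

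Next I would translate this into a statement about approximating $\alpha$. Since $q_2=\lambda_2 t+O(1)$, dividing by $\lambda_2$ and absorbing constants gives $\mathrm{dist}(q_2\alpha-\gamma,\mathbb{Z})\le C'/q_2$, where $\gamma$ ranges over the finitely many values induced by $\beta$ and $q_1$ is forced to be the nearest integer to $q_2\alpha-\gamma$. Two bookkeeping facts complete the reduction: for each $t$ only $O(1)$ integer vectors $q$ satisfy $|q_i-\lambda_i t|\le c_0$, and conversely each admissible $q_2$ corresponds to $t$ in a window of bounded length (since $t\approx q_2/\lambda_2$). Hence
\[
\#\bigl(\mathcal{A}(c_0,c_1)\cap[1,N]\bigr)\le \mathrm{const}\cdot\#\bigl\{q\le \lambda_2 N+O(1):\ \mathrm{dist}(q\alpha-\gamma,\mathbb{Z})\le C'/q\ \text{for some admissible }\gamma\bigr\}.
\]

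Finally I would invoke Diophantine sparsity: for irrational $\alpha$ and any fixed real $\gamma$, the number of $q\le Q$ with $\mathrm{dist}(q\alpha-\gamma,\mathbb{Z})<C'/q$ is $O(\log Q)$. In the homogeneous case $\gamma=0$ this holds because (for large $q$) such $q$ give approximations $|\alpha-p/q|<C'/q^2$ that are best approximations, whose denominators are continued-fraction convergents and grow geometrically; the inhomogeneous case follows similarly from the three-distance theorem and equidistribution of $\{q\alpha\}$. Taking a finite union over the admissible shifts $\gamma$ preserves the $O(\log Q)$ bound, so $\#(\mathcal{A}\cap[1,N])=O(\log N)=o(N)$ and $\mathcal{A}(c_0,c_1)$ has natural density zero. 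I expect the main obstacle to be the second-order expansion: one must verify that the leading $1/t^2$ terms cancel exactly and that \emph{all} residual $1/t^3$ terms assemble into the single affine expression above with nonzero slope $\rho$, since only then does the $c_1/t^4$ threshold shrink to an $O(1/t)$-width Diophantine window; any uncancelled $1/t^3$ remainder still depending on $q_1,q_2$ would break the reduction.
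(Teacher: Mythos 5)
Your reduction to a Diophantine inequality follows the paper's own route almost exactly: the paper's Lemma \ref{lemmFirstDerApprox} establishes precisely the second-order expansion you identify as the main technical step, namely $f(q_j,q_{-j})-f(q_j+1,q_{-j})=\sigma_j^2[Q_j]_{11}/(q_j-a_j)^2+O(1/t^4)$ with a single environment constant $a_j$ (the offsets $\delta_3,\dots,\delta_K$ enter only at order $1/t^4$ because $\hat{\Sigma}-\Sigma_0=O(1/t^2)$, so your $\beta$ is in fact one fixed number rather than finitely many values), and your observation that $\sigma_i^2[Q_i]_{11}\propto\lambda_i^2$ is exactly why the leading $1/t^2$ terms cancel and the smallness condition collapses to $\lvert q_2-a_2-\tfrac{\lambda_2}{\lambda_1}(q_1-a_1)\rvert\le c_5/t$, which is the paper's inequality (\ref{eq:Dio1}). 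Up to that point your sketch is sound and matches the paper.

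The gap is in your final counting step. The claim that for an arbitrary irrational $\alpha$ the number of $q\le Q$ with $\mathrm{dist}(q\alpha-\gamma,\mathbb{Z})<C'/q$ is $O(\log Q)$ is false. First, solutions of $\lvert\alpha-p/q\rvert<C'/q^2$ with $C'\ge 1/2$ need not be continued-fraction convergents (Legendre's criterion requires the constant to be below $1/2$), so the ``denominators grow geometrically'' argument does not cover the full solution set. Second, and more decisively, the lemma assumes only that $\lambda_1/\lambda_2$ is irrational; if $\alpha$ happens to be Liouville-like, with some $q_0$ satisfying $\mathrm{dist}(q_0\alpha,\mathbb{Z})<q_0^{-100}$, then every multiple $q=mq_0$ with $m\le q_0^{49}$ satisfies $\mathrm{dist}(q\alpha,\mathbb{Z})\le m\,\mathrm{dist}(q_0\alpha,\mathbb{Z})<C'/q$, producing on the order of $Q^{0.98}$ solutions up to $Q=q_0^{50}$, vastly more than $\log Q$. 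Fortunately the lemma only needs natural density zero, and that weaker statement holds for every irrational $\alpha$: by the Weyl equidistribution theorem, for each fixed $\epsilon>0$ the set of $q$ with $\mathrm{dist}(q\alpha-\gamma,\mathbb{Z})\le\epsilon$ has density $2\epsilon$, and since your window $C'/q$ eventually falls below any such $\epsilon$, the solution set has upper density at most $2\epsilon$ for every $\epsilon$, hence density zero. This is exactly how the paper closes the argument; replacing your continued-fraction count with this equidistribution step (applied to each of your finitely many shifts $\gamma$) repairs the proof without changing anything else.
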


\begin{proof}[Proof of Lemma \ref{lemmTwoDivisions}]
The proof relies on the following technical result, which gives a precise approximation of the \emph{discrete partial derivatives} of $f$:
\begin{lemma}\label{lemmFirstDerApprox}
Fix the informational environment. There exists a constant $a_j$ such that 
\begin{equation}\label{eq:firstDerApprox}
f(q_j, q_{-j}) - f(q_j+1, q_{-j}) = \frac{\sigma_j^2 \cdot [Q_j]_{11}}{(q_j-a_j)^2} + O\left( \frac{1}{t^4}; c_0 \right)
\end{equation}
holds for all $q_1, \dots, q_K$ with $|q_i - \lambda_i t| \leq c_0, \forall i$. The notation $O\left( \frac{1}{t^4}; c_0 \right)$ means an upper bound of $\frac{L}{t^4}$, where the constant $L$ may depend on the informational environment as well as on $c_0$.\footnote{In applying Lemma \ref{lemmTwoDivisions} to prove Proposition \ref{propGenericAlmostAll}, $c_0$ is taken to be the bound on $n_i - \lambda_i \cdot t$.}
\end{lemma}

Assuming (\ref{eq:firstDerApprox}), we see that the condition 
\[
\left|f(q_1, q_2+1, \dots, q_K) - f(q_1+1, q_2, \dots, q_K)\right| \leq \frac{c_1}{t^4}
\]
implies $\left|\frac{\sigma_{1}^2 \cdot [Q_1]_{11}}{(q_1-a_1)^2} - \frac{\sigma_{2}^2 \cdot [Q_2]_{11}} {(q_2 - a_2)^2}\right| \leq \frac{c_2}{t^4}$ and thus $\left|(\frac{\lambda_1}{q_1-a_1})^2 - (\frac{\lambda_2}{q_2-a_2})^2\right| \leq \frac{c_3}{t^4}$ for some larger positive constants $c_2, c_3$. This further implies $\left|\frac{\lambda_1}{q_1-a_1} - \frac{\lambda_2}{q_2-a_2}\right| \leq \frac{c_4}{t^3}$, which reduces to 
\begin{equation}\label{eq:Dio1}
\left|q_2 - a_2 - \frac{\lambda_2}{\lambda_1}(q_1 - a_1)\right| \leq \frac{c_5}{t}. 
\end{equation}
This inequality says that the fractional part of $\frac{\lambda_2}{\lambda_1} q_1$ is very close to the fractional part of $\frac{\lambda_2}{\lambda_1} a_1 - a_2$. But since $\frac{\lambda_2}{\lambda_1}$ is an irrational number, the fractional part of $\frac{\lambda_2}{\lambda_1} q_1$ is ``equi-distributed'' in (0,1) as $q_1$ ranges in the positive integers.\footnote{The Equi-distribution Theorem states that for any irrational number $\alpha$ and any sub-interval $(a,b) \subset (0,1)$, the set of positive integers $n$ such that the fractional part of $\alpha n$ belongs to $(a,b)$ has natural density $b-a$. It is a special case of the Ergodic Theorem.} Thus the Diophantine approximation (\ref{eq:Dio1}) only has solution at a set of times $t$ with natural density $0$, proving Lemma \ref{lemmTwoDivisions}. Below we supply the technically involved proof of (\ref{eq:firstDerApprox}). 
\end{proof}

\smallskip

\begin{proof}[Proof of Lemma \ref{lemmFirstDerApprox}]
Fix $q_1, \dots, q_K$ and the signal $j$. Recall the diagonal matrix $D = \diag(\frac{q_1}{\sigma_1^2}, \dots, \frac{q_K}{\sigma_K^2})$. Consider any $\hat{q}_j \in [q_j, q_j + 1]$ and let $\hat{D}$ be the analogue of $D$ for the division $(\hat{q}_j, q_{-j})$. That is, $\hat{D} = D$ except that $[\hat{D}]_{jj} = \frac{\hat{q}_j}{\sigma_j^2}$. Let $\hat{\Sigma} = CV^0C' + \hat{D}^{-1}$. From (\ref{eq:firstDer}), we have 
\begin{equation}\label{eq:firstDerApprox1}
\partial_j f(\hat{q}_j, q_{-j}) = -\frac{\sigma_{j}^2}{\hat{q}_j^2} \cdot \left[V^0C' \hat{\Sigma}^{-1} \Delta_{jj} \hat{\Sigma}^{-1} CV^0\right]_{11}. 
\end{equation}
\emph{Here and later in this proof, $\partial_j f$ represents the usual continuous derivative rather than the discrete derivative.}

Let $D_0 = \diag\left(\frac{\lambda_1 t}{\sigma_1^2}, \dots, \frac{\lambda_K t}{\sigma_K^2}\right)$ and $\Sigma_0 = CV^0C' + D_0^{-1}$. For $|q_i - \lambda_i t| \leq c_0, \forall i$ we have $\hat{D} - D_0 = O(c_0)$, where the Big O notation applies entry-wise. It follows that 
\[
\hat{\Sigma} = CV^0C' + \hat{D}^{-1} = CV^0C' + D^{-1} + O(\frac{1}{t^2}; c_0) = \Sigma_0 + O(\frac{1}{t^2}; c_0). 
\]
Observe that the matrix inverse is a differentiable mapping at $\Sigma_0$ (which is $CV^0C' + D_0^{-1} \succeq CV^0C'$ and thus positive definite). Thus we have 
\[
\hat{\Sigma}^{-1} = \Sigma_0^{-1} +  O\left(\frac{1}{t^2}; c_0\right).
\]
Plugging this into (\ref{eq:firstDerApprox1}) and using $\hat{q}_j \sim \lambda_j t$, we obtain that 
\begin{equation}\label{eq:firstDerApprox2}
\partial_j f(\hat{q}_j, q_{-j}) = -\frac{\sigma_{j}^2}{\hat{q}_j^2} \cdot \left[V^0C' \Sigma_0^{-1} \Delta_{jj} \Sigma_0^{-1} CV^0\right]_{11} + O\left(\frac{1}{t^4}; c_0\right).
\end{equation}

Since $\Sigma_0 = CV^0C' + \frac{1}{t} \cdot \diag\left(\frac{\sigma_{1}^2}{\lambda_1}, \dots, \frac{\sigma_{K}^2}{\lambda_K}\right)$, we can apply Taylor expansion (to the matrix inverse map) and write
\begin{equation}\label{eq:Sigma0Inverse}
\Sigma_0^{-1} = (CV^0C')^{-1} - \frac{1}{t}(CV^0C')^{-1} \cdot \diag\left(\frac{\sigma_{1}^2}{\lambda_1}, \dots, \frac{\sigma_{K}^2}{\lambda_K}\right) \cdot (CV^0C')^{-1} + O\left(\frac{1}{t^2}\right).
\end{equation}
This implies
\begin{align}\label{eq:Taylor}
V^0C'\Sigma_0^{-1} \Delta_{jj} \Sigma_0^{-1} CV^0 &= V^0C' (CV^0C')^{-1} \Delta_{jj} (CV^0C')^{-1} CV^0 - \frac{M_j}{t} + O\left(\frac{1}{t^2}\right)\nonumber \\ 
&= Q_j - \frac{M_j}{t} + O\left(\frac{1}{t^2}\right),
\end{align}
where $M_j$ is a fixed $K \times K$ matrix depending only on the informational environment. For future use, we note that 
\begin{align}\label{eq:Mj}
M_j &= V^0C' (CV^0C')^{-1} \diag\left(\frac{\sigma_{1}^2}{\lambda_1},\dots, \frac{\sigma_{K}^2}{\lambda_K}\right) (CV^0C')^{-1} \Delta_{jj} (CV^0C')^{-1}CV^0 \nonumber \\
&\quad \quad \quad + V^0C' (CV^0C')^{-1} \Delta_{jj} (CV^0C')^{-1} \diag\left(\frac{\sigma_{1}^2}{\lambda_1},\dots, \frac{\sigma_{K}^2}{\lambda_K}\right) (CV^0C')^{-1} CV^0 \nonumber \\
&= C^{-1} \diag\left(\frac{\sigma_{1}^2}{\lambda_1},\dots, \frac{\sigma_{K}^2}{\lambda_K}\right) (CV^0C')^{-1} \Delta_{jj} C'^{-1} \nonumber \\
&\quad \quad \quad + C^{-1} \Delta_{jj} (CV^0C')^{-1}\diag\left(\frac{\sigma_{1}^2}{\lambda_1},\dots, \frac{\sigma_{K}^2}{\lambda_K}\right)C'^{-1}.
\end{align}

Using (\ref{eq:Taylor}), we can simplify (\ref{eq:firstDerApprox2}) to 
\begin{equation}\label{eq:firstDerApprox3}
\partial_j f(\hat{q}_j, q_{-j}) =- \frac{\sigma_{j}^2}{\hat{q}_j^2}\cdot  \left[Q_j - \frac{M_j}{t}\right]_{11} + O\left(\frac{1}{t^4};c_0\right). 
\end{equation}
Integrating this over $\hat{q}_j \in [q_j, q_j+1]$, we conclude that
\begin{equation}\label{eq:firstDerApprox4}
f(q_j, q_{-j}) - f(q_j+1, q_{-j}) = \frac{\sigma_{j}^2}{q_j(q_j+1)}\cdot  \left[Q_j - \frac{M_j}{t}\right]_{11} + O\left(\frac{1}{t^4};c_0\right).
\end{equation}
We set $a_j = -\left(\frac{\lambda_j \cdot [M_j]_{11}}{2[Q_j]_{11}} + \frac{1}{2}\right)$. Then
\[
\frac{\sigma_{j}^2}{q_j(q_j+1)}\cdot  \left[Q_j - \frac{M_j}{t}\right]_{11} = (\sigma_j^2 \cdot [Q_j]_{11}) \cdot \frac{1 + \frac{2a_j + 1}{\lambda_j t}}{q_j (q_j+1)} = \frac{\sigma_j^2 \cdot [Q_j]_{11}}{(q_j-a_j)^2} + O\left(\frac{1}{t^4}; c_0\right), 
\]
implying the desired approximation (\ref{eq:firstDerApprox}). The last equality above uses 
$\frac{1 + \frac{2a_j + 1}{\lambda_j t}}{q_j (q_j+1)} = \frac{1}{(q_j-a_j)^2} + O\left(\frac{1}{t^4}; c_0\right)$, which is because
\[
\frac{q_j (q_j+1)}{(q_j-a_j)^2} = 1 + \frac{2(a_j+1)}{q_j - a_j} + O\left(\frac{1}{(q_j-a_j)^2}\right) = 1 + \frac{2a_j + 1}{\lambda_j t} + O\left(\frac{1}{t^2}; c_0\right)
\]
dividing through by $q_j(q_j + 1)$. 
\end{proof}

\subsubsection{A Simultaneous Diophantine Approximation Problem}
The above Lemma \ref{lemmTwoDivisions} tells us that at most times $t$, there do not exist a \emph{pair} of divisions (differing minimally on two signal counts) that lead to posterior variances close to each other (with a difference of $\frac{c_1}{t^4}$). We obtain a stronger result if a \emph{triple} of such divisions were to exist.

\begin{lemma}\label{lemmThreeDivisions}
Fix $V^0$ and $C$, and let signal variances vary. For positive constants $c_0, c_1$, define $\mathcal{A}^*(c_0, c_1)$ to be the following set of positive integers:
\begin{align*}
\{t: \exists \,\, q_1, q_2, q_3, \dots, q_K & \in \mathbb{Z}_{+}, s.t. ~ |q_i - \lambda_i t| \leq c_0, \forall i \\
&  ~ \wedge ~ |f(q_1, q_2+1, q_3, \dots, q_K) - f(q_1+1, q_2, q_3, \dots, q_K)| \leq c_1/t^4 \\
& ~ \wedge ~ |f(q_1, q_2, q_3+1, \dots, q_K) - f(q_1+1, q_2, q_3, \dots, q_K)| \leq c_1/t^4 \}
\end{align*}
Then, for generic signal variances, $\mathcal{A}^*(c_0, c_1)$ has finite cardinality.
\end{lemma}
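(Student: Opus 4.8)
The plan is to reduce the two posterior-variance conditions defining $\mathcal{A}^*(c_0,c_1)$ to a pair of \emph{simultaneous} Diophantine inequalities, and then run a Borel--Cantelli argument in which the signal variances play the role of the random parameter. Exactly as in the proof of Lemma \ref{lemmTwoDivisions}, I would first rewrite each condition using the discrete-derivative approximation of Lemma \ref{lemmFirstDerApprox}. Writing $|\partial_j f|$ for the absolute discrete derivative, the two displayed conditions are $\big|\,|\partial_1 f(q)|-|\partial_2 f(q)|\,\big|\le c_1/t^4$ and $\big|\,|\partial_1 f(q)|-|\partial_3 f(q)|\,\big|\le c_1/t^4$ at the base division $q=(q_1,\dots,q_K)$. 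Since $\sigma_j^2[Q_j]_{11}=([C^{-1}]_{1j})^2\sigma_j^2$ is a common positive multiple of $\lambda_j^2$ (using (\ref{eq:lambda}) and $[Q_j]_{11}=([C^{-1}]_{1j})^2$), substituting (\ref{eq:firstDerApprox}) and taking square roots via $q_i-a_i\sim\lambda_i t$ turns both conditions into
\[
\Big|(q_2-a_2)-\tfrac{\lambda_2}{\lambda_1}(q_1-a_1)\Big|\le \tfrac{c_5}{t}, \qquad \Big|(q_3-a_3)-\tfrac{\lambda_3}{\lambda_1}(q_1-a_1)\Big|\le \tfrac{c_5}{t},
\]
for a constant $c_5$ depending on the environment and on $c_0,c_1$. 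This is the ``simultaneous Diophantine approximation problem'' of the section title: writing $\alpha_r:=\lambda_r/\lambda_1$, it requires $\alpha_2 q_1$ and $\alpha_3 q_1$ to lie \emph{both} within $O(1/t)=O(1/q_1)$ of fixed shifts of the integers $q_2,q_3$.

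Next I would set up genericity as flagged in the outline: fix $V^0$ and $C$ and treat $\{\sigma_i^2\}$ (restricted to a bounded box bounded away from $0$ and $\infty$, with a final countable union over boxes) as the random parameter. The reason a \emph{third} division forces finiteness, whereas Lemma \ref{lemmTwoDivisions} only gave density zero, is that one Diophantine condition yields an $O(1/q_1)$ measure bound (summing to $+\infty$), while two conditions will yield a \emph{summable} $O(1/q_1^2)$ bound. Concretely, on a slice where all variances except $\sigma_2,\sigma_3$ are fixed, the map $(\sigma_2,\sigma_3)\mapsto(\alpha_2,\alpha_3)$ is a diffeomorphism onto an open rectangle (each $\alpha_r$ is a positive scalar multiple of $\sigma_r$, with nonzero coefficient by Assumption \ref{assumptionIdentifiability}, since the normalizing sums cancel in the ratio), so it carries Lebesgue-null sets to Lebesgue-null sets; by Fubini it suffices to prove that for a.e.\ $(\alpha_2,\alpha_3)$ only finitely many $t$ satisfy the two inequalities.

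The core is a measure estimate for each fixed $q_1$. For fixed integers $q_2,q_3$ I would study the map $\Psi:(\alpha_2,\alpha_3)\mapsto\big((q_2-a_2)-\alpha_2(q_1-a_1),\,(q_3-a_3)-\alpha_3(q_1-a_1)\big)$, where $a_2,a_3$ are smooth and bounded (with bounded derivatives on the box) functions of $(\alpha_2,\alpha_3)$. Because the diagonal entries of its Jacobian are $-(q_1-a_1)\asymp -q_1$ while the off-diagonal entries are $O(1)$, the Jacobian determinant of $\Psi$ is $\asymp q_1^2$ for large $q_1$; hence $\Psi^{-1}$ of the box $[-c_5/t,c_5/t]^2$ (area $\asymp 1/t^2$) has area $\lesssim 1/(t^2q_1^2)\asymp 1/q_1^4$. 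The constraint $|q_i-\lambda_i t|\le c_0$ together with $\lambda_r t\asymp\alpha_r q_1$ and $\alpha_r$ bounded confines the relevant pairs $(q_2,q_3)$ to an $O(q_1)\times O(q_1)$ grid, so summing the per-cell bound gives total bad measure $O(q_1^2)\cdot O(1/q_1^4)=O(1/q_1^2)$ at each $q_1$. Since $\sum_{q_1}1/q_1^2<\infty$, Borel--Cantelli shows that for a.e.\ $(\alpha_2,\alpha_3)$ only finitely many $q_1$ are ``bad.'' Finally, each $t\in\mathcal{A}^*$ produces such a bad $q_1$ with $|q_1-\lambda_1 t|\le c_0$, and $t\mapsto q_1$ is boundedly many-to-one, so finitely many bad $q_1$ force $\mathcal{A}^*$ to be finite; Fubini over the remaining variance coordinates then gives finiteness for generic signal variances.

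The step I expect to be the main obstacle is the measure bound handling the \emph{coupling} between the two conditions: because $a_2$ and $a_3$ each depend on the full vector $\lambda$, hence on both $\alpha_2$ and $\alpha_3$, the bad set is not literally a product, and a naive marginal bound loses one power of $1/q_1$ and fails to be summable. Controlling this through the nearly-diagonal, uniformly non-degenerate Jacobian of $\Psi$ (so that the bad region is the preimage of a small box under a map with determinant $\asymp q_1^2$) is precisely what delivers the $O(1/q_1^2)$ bound needed for Borel--Cantelli; care is also required to verify the $O(q_1^2)$ count of candidate integer divisions and the boundedly-many-to-one passage from $q_1$ back to $t$.
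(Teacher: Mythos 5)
Your proposal is correct and follows essentially the same route as the paper: reduce both posterior-variance conditions to a pair of simultaneous Diophantine inequalities via Lemma \ref{lemmFirstDerApprox}, obtain an $O(1/q_1^4)$ measure bound on the bad set of signal variances for each candidate triple $(q_1,q_2,q_3)$, sum over the $O(q_1^2)$ admissible pairs $(q_2,q_3)$, and conclude by Borel--Cantelli together with the boundedly many-to-one passage from $t$ to $q_1$. The only cosmetic difference is that you derive the per-triple bound from the near-diagonal, determinant-$\asymp q_1^2$ Jacobian of the residual map in $(\alpha_2,\alpha_3)$, whereas the paper shows directly that any two variance vectors satisfying both inequalities must agree in $(\sigma_2,\sigma_3)$ up to $O(1/q_1^2)$; these yield the same estimate.
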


\begin{proof}
So far we have been dealing with fixed informational environments. However, a number of parameters defined above depend on the signal variances $\sigma = \{\sigma_i^2\}_{i=1}^{K}$. Specifically, while the matrix $Q_i = C^{-1} \Delta_{ii} C'^{-1}$ is independent of $\sigma$, the asymptotic proportions $\lambda_i \propto \sigma_i \cdot [Q_i]_{11}$ do vary with $\sigma$. In this proof, we write $\lambda_i(\sigma)$ to highlight this dependence. 

Next, we recall the matrix $M_j$ introduced earlier in (\ref{eq:Mj}). We note that for fixed matrices $V^0$ and $C$, each entry of $M_j(\sigma)$ is a fixed linear combination of $\frac{\sigma_1^2}{\lambda_1(\sigma)}, \dots, \frac{\sigma_K^2}{\lambda_K(\sigma)}$. 

Then, the parameter $a_j(\sigma)$ in (\ref{eq:firstDerApprox}) is given by (see the previous proof)
\begin{equation}\label{eq:aj}
a_j(\sigma) = - \frac{1}{2} -\frac{\lambda_j(\sigma) \cdot [M_j(\sigma)]_{11}}{2 [Q_j]_{11}} 
= - \frac{1}{2} +  \lambda_j(\sigma) \sum_{i=1}^{K} \tilde{b}_{ji}\frac{\sigma_i^2}{\lambda_i(\sigma)}
= - \frac{1}{2} + \sum_{i = 1}^{K} b_{ji} \sigma_i \sigma_j
\end{equation}
for some constants $\tilde{b}_{ji}, b_{ji}$ independent of $\sigma$. In the last equality above, we used the fact that $\frac{\lambda_j(\sigma)}{\lambda_i(\sigma)}$ equals a constant times $\frac{\sigma_j}{\sigma_i}$. 

Thus Lemma \ref{lemmFirstDerApprox} gives
\[
f(q_j, q_{-j}) - f(q_j+1, q_{-j}) = \frac{\sigma_j^2 \cdot [Q_j]_{11}}{(q_j-a_j(\sigma))^2} + O\left( \frac{1}{t^4}; c_0 \right)
\]
whenever $|q_i - \lambda_i(\sigma) \cdot t | \leq c_0, \forall i$. We comment that the Big O constant here may depend on $\sigma$. However, a single constant suffices if we restrict each $\sigma_i$ to be bounded above and bounded away from zero. Since measure-zero sets are closed under countable unions, this restriction does not affect the result we want to prove. 

By the above approximation, a necessary condition for $t \in \mathcal{A}^*(c_0, c_1)$ is that $q_1, q_2, q_3$ satisfy 
\begin{equation}\label{eq:Dio2}
\left|\left(q_2 - a_2(\sigma) \right)  - \frac{\eta \cdot \sigma_{2}}{\sigma_{1}} \left(q_1 - a_1(\sigma) \right)\right| \leq \frac{c_6}{q_1}
\end{equation}
as well as 
\begin{equation}\label{eq:Dio3}
\left|\left(q_3 - a_3(\sigma) \right)  - \frac{\kappa \cdot \sigma_{3}}{\sigma_{1}} \left(q_1 - a_1(\sigma) \right)\right| \leq \frac{c_6}{q_1}
\end{equation}
for some constant $c_6$ independent of $\sigma$ ($c_6$ may depend on $c_0, c_1$ stated in the lemma). The constant $\eta$ is given by $\eta = \sqrt{[Q_2]_{11} / [Q_1]_{11}}$, and similarly for $\kappa$. 

It remains to show that for generic $\sigma$, there are only finitely many positive integer triples $(q_1, q_2, q_3)$ satisfying the \emph{simultaneous Diophantine approximation} (\ref{eq:Dio2}) and (\ref{eq:Dio3}). To prove this, we assume that each $\sigma_i$ is i.i.d. drawn from the uniform distribution on $[\frac{1}{L}, L]$, where $L$ is a large constant. Denote by $F(q_1, q_2, q_3)$ the event that (\ref{eq:Dio2}) and (\ref{eq:Dio3}) hold simultaneously. We claim that there exists a constant $c_7$ such that $\mathbb{P}(F(q_1, q_2, q_3)) \leq \frac{c_7}{q_1^4}$ holds for all $q_1, q_2, q_3$. Since $F(q_1, q_2, q_3)$ cannot occur for $q_2, q_3 > c_8 q_1$, this claim will imply 
\begin{equation}\label{eq:BorelCantelli}
\sum_{q_1, q_2, q_3} \mathbb{P}(F(q_1,q_2,q_3)) < \sum_{q_1} \sum_{q_2, q_3 \leq c_8q_1} \frac{c_7}{q_1^4} < \sum_{q_1} \frac{c_7c_8^2}{q_1^2} < \infty.
\end{equation}
Generic finiteness of tuples $(q_1, q_2, q_3)$ will then follow from the Borel-Cantelli Lemma.\footnote{Because of the use of Borel-Cantelli Lemma, this proof (unlike Lemma \ref{lemmTwoDivisions} above) does not allow us to effectively determine for given $\sigma$ whether (\ref{eq:Dio2}) and (\ref{eq:Dio3}) only have finitely many integer solutions. Nonetheless, a modification of this proof does imply the following finite-time probabilistic statement: when $\sigma_1, \dots, \sigma_K$ are independently drawn, the probability that the optimality strategy coincides with $t$-optimality at every period $t \geq T$ is at least $1 - O(\frac{1}{T})$, where the constant involved only depends on the distribution of $\sigma$.}

To prove this claim, it suffices to show that if $\sigma = (\sigma_1, \sigma_2, \sigma_3, \sigma_4, \dots, \sigma_K)$ and $\sigma' = (\sigma_1, \sigma_2', \sigma_3', \sigma_4, \dots, \sigma_K)$ both satisfy (\ref{eq:Dio2}) and (\ref{eq:Dio3}), then $|\sigma_2 - \sigma_2'|, |\sigma_3 - \sigma_3'| \leq \frac{c}{q_1^2}$ for some constant $c$.\footnote{This implies that the probability of the event $F(q_1, q_2, q_3)$ conditional on any value of $\sigma_1, \sigma_4, \dots, \sigma_K$ is bounded by $\frac{c_7}{q_1^4}$, which is stronger than the claim.}
Without loss, we assume $|\sigma_2 - \sigma_2'| \geq |\sigma_3 - \sigma_3'|$. Using (\ref{eq:aj}), we can rewrite the condition (\ref{eq:Dio2}) as 
\[
\bigg| \underbrace{\left(q_2 + \frac{1}{2}\right) - \frac{\eta \cdot \sigma_{2}}{\sigma_{1}}\left(q_1 + \frac{1}{2}\right) }_{A}+ \underbrace{\sum_{i}\beta_i \sigma_2 \sigma_i}_{B} \bigg| \leq \frac{c_6}{q_1}
\]
for some constants $\beta_i$ independent of $\sigma$. A similar inequality holds at $\sigma'$: 
\[
\bigg| \underbrace{\left(q_2 + \frac{1}{2}\right) - \frac{\eta \cdot \sigma_{2}'}{\sigma_{1}}\left(q_1 + \frac{1}{2}\right)}_{A'} + \underbrace{\sum_{i}\beta_i \sigma_2' \sigma_i'}_{B'} \bigg| \leq \frac{c_6}{q_1}.
\]
It follows from the above two inequalities that $|A+B - A'-B'| \leq \frac{2c_6}{q_1}$. Furthermore, since $|A-A'| \leq |A+B - A'-B'| + |B-B'|$ (by triangle inequality), we deduce
\begin{equation}\label{eq:sigmaDifference}
\left|\frac{\eta \cdot (\sigma_{2}' - \sigma_2)}{\sigma_{1}}\cdot \left(q_1 + \frac{1}{2}\right) \right| \leq \frac{2c_6}{q_1} + \left| \sum_{i} \beta_i(\sigma_2'\sigma_i' - \sigma_2 \sigma_i) \right|.
\end{equation}
Because $\sigma_i' = \sigma_i$ for $i \neq 2, 3$, we have 
\begin{align*}
\left| \sum_{i} \beta_i(\sigma_2'\sigma_i' - \sigma_2 \sigma_i) \right| 
&= \left|  \sum_{i} \beta_i (\sigma_2' - \sigma_2) \sigma_i + \sum_{i} \beta_i \sigma_2'(\sigma_i'-\sigma_i) \right| \\
&= \left| \left(\sum_{i} \beta_i (\sigma_2' - \sigma_2) \sigma_i\right) + \beta_2 \sigma_2' (\sigma_2' - \sigma_2) + \beta_3 \sigma_2'(\sigma_3' - \sigma_3) \right| \\
& \leq (K+2)L \cdot \max_{i} \left| \beta_i \right| \cdot \left| \sigma_2' - \sigma_2 \right|.
\end{align*}
Plugging this estimate into (\ref{eq:sigmaDifference}), we obtain the desired result $|\sigma_2 - \sigma_2'| \leq \frac{c}{q_1^2}$. This completes the proof of the lemma. 
\end{proof}

\subsubsection{Monotonicity of $t$-Optimal Divisions}
We apply Lemma \ref{lemmThreeDivisions} to prove the eventual monotonicity of $t$-optimal divisions in generic informational environments. 
\begin{proposition}\label{propMonotone}
Fix $V^0$ and $C$. For generic signal variances $\{\sigma_i^2\}_{i=1}^{K}$, there exists $T_0$ such that for $t \geq T_0$, the $t$-optimal division $n(t)$ is unique, and it satisfies $n_i(t+1) \geq n_i(t), \forall i$. 
\end{proposition}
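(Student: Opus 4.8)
The plan is to show that the set of periods at which monotonicity fails is generically \emph{finite}, and that $t$-optimality is generically \emph{unique} at every period; the proposition then follows by taking $T_0$ past the last failure. Throughout I will use the asymptotic characterization $n_i(t)=\lambda_i t + O(1)$ established in Appendix~\ref{appxBatch}, so that every $t$-optimal division (and also $n(t+1)$) lies within a fixed distance $c_0$ of $\lambda t$, together with the sharp estimate of Lemma~\ref{lemmFirstDerApprox}: the discrete marginal value of signal $j$ at a division $q$ near $\lambda t$ is $|\partial_j f(q)| = \phi_j(q_j) + O(1/t^4)$, where $\phi_j(x):=\sigma_j^2[Q_j]_{11}/(x-a_j)^2$ is a smooth, strictly decreasing ``separable proxy.'' The conceptual point is that $f$ is \emph{approximately separable} in its marginal values; for a genuinely separable convex objective the optimal integer allocations are automatically monotone, so every failure of monotonicity must be charged to the $O(1/t^4)$ error terms.

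First I would record the order facts that drive every comparison: by Lemma~\ref{lemmSecondDer} (equivalently, by differentiating $\phi_j$), the own increment $\phi_j(x)-\phi_j(x+1)$ is of exact order $1/t^3$ when $x\sim\lambda_j t$, whereas all cross-effects and all approximation errors are $O(1/t^4)$. Write $n=n(t)$, $m=n(t+1)$, and suppose $m\not\ge n$ coordinatewise. A preliminary step is to show that for $t$ large every coordinate changes by at most one, i.e.\ $\|m-n\|_\infty\le 1$: if some coordinate $i$ had $m_i\ge n_i+2$ then, since the total increases by one, some $m_j\le n_j-1$, and combining the $t$-optimality inequality $f(n-e_j+e_i)\ge f(n)$ with the $(t+1)$-optimality inequality $f(m-e_i+e_j)\ge f(m)$, each rewritten through the proxy, would force $\phi_i(n_i)-\phi_i(n_i+1)=O(1/t^4)$, contradicting the $1/t^3$ lower bound (the same argument rules out a coordinate decreasing by two). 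Hence $m-n\in\{-1,0,1\}^K$ with $\sum_i(m_i-n_i)=1$, so the decrease set $N$ is nonempty and the increase set $P$ satisfies $|P|=|N|+1\ge 2$.

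Next I would extract a \emph{triple} near-tie. Fix $j\in N$ and two distinct $i,i'\in P$, so $m_j=n_j-1$, $m_i=n_i+1$, $m_{i'}=n_{i'}+1$. Applying $t$-optimality of $n$ to the swaps $n-e_j+e_i$ and $n-e_j+e_{i'}$, and rewriting via the proxy at the base point $n-e_j$, yields the lower bounds $\phi_j(n_j-1)\ge\phi_i(n_i)-O(1/t^4)$ and $\phi_j(n_j-1)\ge\phi_{i'}(n_{i'})-O(1/t^4)$; symmetrically, applying $(t+1)$-optimality of $m$ to the reverse swaps $m-e_i+e_j$ and $m-e_{i'}+e_j$ gives the matching upper bounds $\phi_i(n_i)\ge\phi_j(n_j-1)-O(1/t^4)$ and $\phi_{i'}(n_{i'})\ge\phi_j(n_j-1)-O(1/t^4)$. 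Thus the three proxy values $\phi_j(n_j-1),\phi_i(n_i),\phi_{i'}(n_{i'})$ are pairwise within $O(1/t^4)$. Taking the base division $q:=n-e_j$ (summing to $t-1$ and within $c_0$ of $\lambda t$) and using $|\partial_k f(q)|=\phi_k(q_k)+O(1/t^4)$, the three divisions $q+e_j,\,q+e_i,\,q+e_{i'}$ have pairwise posterior-variance differences $O(1/t^4)$ — precisely a membership condition in $\mathcal{A}^*(c_0,c_1)$ of Lemma~\ref{lemmThreeDivisions} for the triple $(j,i,i')$. Since that lemma gives generic finiteness of $\mathcal{A}^*$ for each fixed triple and there are only finitely many triples, the union is generically finite, so monotonicity fails only finitely often.

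For uniqueness I would follow the genericity style already used in the proof of Lemma~\ref{lemmDynamicFreq}: two distinct $t$-optimal divisions would require an \emph{exact} equality $f(n-e_j+e_i)=f(n)$; for fixed $(t,n,i,j)$ this is a nontrivial rational equation in the variances $\{\sigma_k^2\}$ (non-vanishing by inspecting the limits as individual $\sigma_k\to 0$), so its solution set has measure zero, and there are only countably many such equations. On the full-measure complement no such equality ever holds, so $n(t)$ is unique for every $t$; intersecting with the generic set of Lemma~\ref{lemmThreeDivisions} and setting $T_0$ one larger than the last failure period yields the claim. I would flag the extraction of a \emph{triple} rather than merely a \emph{pair} near-tie as the crux: an isolated near-coincidence of two marginal values can be absorbed without disturbing monotonicity and is only density-$1$ information (Lemma~\ref{lemmTwoDivisions}), whereas having at least two increasing coordinates opposite a decreasing one forces \emph{three} mutually near-tied marginals, engaging the stronger finiteness of Lemma~\ref{lemmThreeDivisions} and upgrading the conclusion from ``almost all $t$'' to ``all large $t$.'' The supporting $1/t^3$-versus-$1/t^4$ bookkeeping is routine given Lemmas~\ref{lemmSecondDer} and~\ref{lemmFirstDerApprox}, but must be tracked so that the uniform $O(1/t^4)$ constant (valid for $\sigma$ in a bounded box, as in Lemma~\ref{lemmThreeDivisions}) is never swamped.
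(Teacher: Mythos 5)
Your proposal is correct and follows essentially the same route as the paper's proof: bound the per-coordinate change by one via the order-difference lemma, observe that a monotonicity failure forces one decreasing and at least two increasing coordinates, extract the resulting \emph{triple} of $O(1/t^4)$ near-ties, and invoke Lemma~\ref{lemmThreeDivisions} for generic finiteness, with uniqueness handled by a countable-union-of-measure-zero argument. The only difference is presentational — you route the comparisons through the separable proxy $\phi_j$ of Lemma~\ref{lemmFirstDerApprox} where the paper works directly with discrete partials and cross-partials via Lemma~\ref{lemmSecondDer} — and your identification of the pair-versus-triple distinction as the crux matches the paper's logic exactly.
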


\begin{proof}
Uniqueness follows from the stronger fact that in generic informational environments, $f(q_1, \dots, q_K)$ differs from $f(q_1', \dots, q_K')$ whenever $q \neq q'$. Below we focus on monotonicity.

Using the order difference lemma, we can already deduce the difference $|n_i(t+1) - n_i(t)|$ is no more than $1$ 
at sufficiently late periods $t$. Suppose that $n_1(t+1) = n_1(t) - 1$. Then because $\sum_{i} (n_i(t+1) - n_i(t)) = 1$, we can without loss assume $n_2(t+1) = n_2(t) + 1$ and $n_3(t+1) = n_3(t) + 1$. 

For notational ease, write $n_i = n_i(t), n_i' = n_i(t+1)$. By $t$-optimality, we have
\[
f(n_1, n_2, n_3, \dots, n_K) \leq f(n_1-1, n_2+1, n_3, \dots, n_K) 
\]
\[
f(n_1', n_2', n_3', \dots, n_K') \leq f(n_1'+1, n_2'-1, n_3', \dots, n_K')
\]
These inequalities are equivalent to 
\begin{equation}\label{eq:ineq5}
\partial_2 f(n_1-1, n_2, n_3, \dots, n_K) \geq \partial_1 f(n_1-1, n_2, n_3, \dots, n_K)
\end{equation}
\begin{equation}\label{eq:ineq6}
\partial_2 f(n_1', n_2'-1, n_3', \dots, n_K') \leq \partial_1 f(n_1', n_2'-1, n_3', \dots, n_K')
\end{equation}
with $\partial_i f$ representing the \emph{discrete partial derivative}. 

Since $n_2' - 1 = n_2$, the LHS of (\ref{eq:ineq6}) is at least the LHS of (\ref{eq:ineq5}) minus a number of cross partials. Similarly, the RHS of (\ref{eq:ineq6}) is at most bigger than the RHS of (\ref{eq:ineq5}) by a number of cross partials. Thus the only way (\ref{eq:ineq5}) and (\ref{eq:ineq6}) can both hold is if the two sides of (\ref{eq:ineq5}) differ by no more than $O(\frac{1}{t^4})$. That is, for some absolute constant $c_1$,\footnote{As discussed in the proof of Lemma \ref{lemmThreeDivisions}, we can find a single constant $c_1$ that works for all $\sigma$ bounded above and bounded away from zero.} we have
\begin{equation}\label{eq:tinyDiff2}
|f(n_1 - 1, n_2+1, n_3, \dots, n_K) - f(n_1, n_2, n_3, \dots, n_K)| \leq \frac{c_1}{t^4}.
\end{equation}
An analogous argument yields
\begin{equation}\label{eq:tinyDiff3}
|f(n_1 - 1, n_2, n_3+1, \dots, n_K) - f(n_1, n_2, n_3, \dots, n_K)| \leq \frac{c_1}{t^4}.
\end{equation}
\noindent But now we can apply Lemma \ref{lemmThreeDivisions} to show that in generic environments, there are only finitely many integer tuples $(n_1, \dots, n_K)$ that satisfy both (\ref{eq:tinyDiff2}) and (\ref{eq:tinyDiff3}). This proves the result.
\end{proof}

\subsubsection{Completing the Proof of Theorem \ref{propGenericEventual}}
By Proposition \ref{propMonotone}, generically there exists $T_0$ such that $n(t)$ is monotonic in $t$ after $T_0$ periods. Thus, using our dynamic Blackwell lemma, if the DM achieves $t$-optimality at some period $t \geq T_0$, he will continue to do so in the future. By Proposition \ref{propGenericAlmostAll}, such a time $t$ does exist. This proves Theorem \ref{propGenericEventual}.

\subsection{Proof of Proposition \ref{propBoundOnB} (Bound on $B$)}

\subsubsection{Preliminary Estimates}
Throughout, we work with the linearly-transformed model, where each signal $X_i$ is simply $\tilde{\theta}_i$ plus standard Gaussian noise, and the DM's prior covariance matrix over the transformed states is $\tilde{V}$. Let $\gamma = \gamma(q_1, \dots, q_K)$ represent the following $K \times 1$ vector:
\begin{equation}\label{eq:gamma}
\gamma = (\tilde{V}+E)^{-1}\cdot \tilde{V} \cdot w
\end{equation}
with $E = \diag(\frac{1}{q_1}, \dots, \frac{1}{q_K})$. For $1 \leq i \leq K$, $\gamma_i$ denotes the $i$-th coordinate of $\gamma$. 

Here we re-derive the posterior variance function $f$, its (usual continuous) derivatives and second derivatives. Our formulae below take as primitives $\tilde{V}$ and $w$, but they are equivalent to those presented in Appendix \ref{appxPrelim} (for the original model). 

\begin{fact}[Posterior Variance] $f(q_1, \dots, q_K) = w'(\tilde{V} - \tilde{V}(\tilde{V}+E)^{-1}\tilde{V})w.$
\end{fact}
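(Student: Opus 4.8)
The plan is to derive the formula directly from Gaussian conditioning, after first reducing the repeated observations to a single sufficient statistic. In the transformed model each observation of signal $i$ is $\tilde{\theta}_i$ plus an independent standard normal error, so the $q_i$ i.i.d.\ observations of signal $i$ can be summarized by their sample average, a sufficient statistic of the form $\tilde{\theta}_i + \bar{\epsilon}_i$ with $\bar{\epsilon}_i \sim \mathcal{N}(0, 1/q_i)$ independent of $\tilde{\theta}$. Stacking across $i$, the DM's information is therefore equivalent to a single observation of the vector $Y = \tilde{\theta} + \eta$, where $\eta \sim \mathcal{N}(\mathbf{0}, E)$ with $E = \diag(1/q_1, \dots, 1/q_K)$, and $\eta$ is independent of the prior $\tilde{\theta} \sim \mathcal{N}(\tilde{\mu}, \tilde{V})$.

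Next I would invoke joint normality of $(\tilde{\theta}, Y)$. Since $\eta$ is independent of $\tilde{\theta}$, we have $\mathrm{Cov}(\tilde{\theta}, Y) = \mathrm{Var}(\tilde{\theta}) = \tilde{V}$ and $\mathrm{Var}(Y) = \tilde{V} + E$. The conditional-variance formula for multivariate normals then yields the posterior covariance matrix over the transformed states,
\[
\mathrm{Var}(\tilde{\theta} \mid Y) = \tilde{V} - \tilde{V}(\tilde{V} + E)^{-1}\tilde{V},
\]
which is, crucially, independent of the realized value of $Y$. Finally, because $\theta_1 = \langle w, \tilde{\theta} \rangle = w'\tilde{\theta}$ is a fixed linear combination, its posterior variance is the quadratic form $w'\,\mathrm{Var}(\tilde{\theta} \mid Y)\,w$, giving exactly $f(q_1, \dots, q_K) = w'\big(\tilde{V} - \tilde{V}(\tilde{V} + E)^{-1}\tilde{V}\big)w$.

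There is essentially no obstacle here; this is a routine application of Gaussian updating, and the only point worth checking is consistency with Lemma \ref{lemmVar}. That lemma is precisely the special case in which the signal matrix is the identity $C = \mathbf{I}_K$ and the signal variances are normalized to one (which is exactly the transformed model, so that $\Sigma = \tilde{V} + E$), and in which one reads off the $(1,1)$ entry, i.e.\ $w = e_1$. The quadratic form $w'(\cdot)w$ generalizes the entry $[\cdot]_{11} = e_1'(\cdot)e_1$ to an arbitrary payoff-weight vector $w$, as is needed once $\theta_1$ is re-expressed as a general linear combination of the transformed states. Alternatively, one could recover the same expression from Lemma \ref{lemmVar} itself via the change of variables $\tilde{\theta} = SC\theta$ with $S = \diag(1/\sigma_k)$ together with the Woodbury identity, but the direct derivation above is the cleaner route.
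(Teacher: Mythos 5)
Your derivation is correct and matches the route the paper intends: the paper omits the proof of this Fact as ``simple linear algebra,'' but its proof of Lemma \ref{lemmVar} appeals to exactly the same Gaussian conditional-variance formula, and your reduction to the sufficient statistic $Y = \tilde{\theta} + \eta$ with $\eta \sim \mathcal{N}(\mathbf{0}, E)$ plus the quadratic form $w'\,\mathrm{Var}(\tilde{\theta}\mid Y)\,w$ is the standard way to fill in those details. Your closing remark correctly identifies this Fact as the transformed-model specialization of Lemma \ref{lemmVar} (with $C=\mathbf{I}_K$, unit signal variances, and $[\cdot]_{11}$ replaced by $w'(\cdot)w$).
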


\begin{fact}[Partial Derivatives of Posterior Variance]
$
\partial_i f(q_1, \dots, q_K) = - \frac{1}{q_i^2} \cdot w' \tilde{V}(\tilde{V}+E)^{-1} \Delta_{ii} (\tilde{V}+E)^{-1} \tilde{V} w = - \frac{\gamma_i^2}{q_i^2}.
$
\end{fact}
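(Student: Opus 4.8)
The plan is to start from the closed form supplied by the preceding Posterior Variance fact, namely $f(q_1,\dots,q_K) = w'\tilde V w - w'\tilde V(\tilde V+E)^{-1}\tilde V w$, and simply differentiate in $q_i$. The constant term $w'\tilde V w$ drops out, and $\tilde V$ and $w$ do not depend on the signal counts, so the entire derivative is carried by the matrix inverse: $\partial_i f = -\,w'\tilde V\,[\partial_i(\tilde V+E)^{-1}]\,\tilde V w$. Thus everything reduces to computing $\partial_i(\tilde V+E)^{-1}$, after which the first displayed equality is immediate and the second is a rewriting.

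For the inverse I would invoke the standard matrix-derivative identity $\partial_i M^{-1} = -M^{-1}(\partial_i M)M^{-1}$ (the same identity used in the proof of Lemma \ref{lemmSecondDer}), applied to $M = \tilde V + E$. Since $\tilde V$ is constant and $E = \diag(1/q_1,\dots,1/q_K)$, only the $i$-th diagonal entry of $E$ varies with $q_i$, giving $\partial_i E = -\tfrac{1}{q_i^2}\Delta_{ii}$. Substituting yields $\partial_i(\tilde V+E)^{-1} = \tfrac{1}{q_i^2}(\tilde V+E)^{-1}\Delta_{ii}(\tilde V+E)^{-1}$, and plugging this back produces exactly $\partial_i f = -\tfrac{1}{q_i^2}\,w'\tilde V(\tilde V+E)^{-1}\Delta_{ii}(\tilde V+E)^{-1}\tilde V w$, the first asserted equality.

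The second equality is pure bookkeeping with the vector $\gamma = (\tilde V+E)^{-1}\tilde V w$ from \eqref{eq:gamma}. Because $\tilde V$ is a covariance matrix it is symmetric, hence so is $(\tilde V+E)^{-1}$, and therefore the row vector $w'\tilde V(\tilde V+E)^{-1}$ is precisely $\gamma'$. The quadratic form then collapses to $\gamma'\Delta_{ii}\gamma$, and since $\Delta_{ii}$ has its single nonzero entry (a $1$) in position $(i,i)$, this equals $\gamma_i^2$. Hence $\partial_i f = -\gamma_i^2/q_i^2$, as claimed.

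There is no genuine obstacle here; the statement is a one-line matrix differentiation. The only points that need care are getting the sign and the $q_i^{-2}$ factor in $\partial_i E = -\tfrac{1}{q_i^2}\Delta_{ii}$ correct, and explicitly invoking the symmetry of $\tilde V$ to identify $w'\tilde V(\tilde V+E)^{-1}$ with $\gamma'$ so that the contraction against $\Delta_{ii}$ isolates exactly the $i$-th coordinate of $\gamma$. I would also remark that this reproduces formula \eqref{eq:firstDer} of Appendix \ref{appxPrelim} after the change of variables to the transformed states, confirming the stated equivalence of the two sets of formulae.
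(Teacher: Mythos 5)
Your proof is correct and is exactly the computation the paper has in mind: the paper omits the details here ("simple linear algebra") but performs the identical manipulation for the untransformed model in Appendix \ref{appxPrelim} (equation (\ref{eq:firstDer})), using the same identity $\partial_i M^{-1} = -M^{-1}(\partial_i M)M^{-1}$ with $\partial_i E = -\tfrac{1}{q_i^2}\Delta_{ii}$ and the symmetry of $\tilde V$ to collapse the quadratic form to $\gamma_i^2$. Nothing to add.
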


\begin{fact}[Second-Order Partial Derivatives of Posterior Variance]
\begin{align*}
&\partial_{ii} f(q_1, \dots, q_K)~\nonumber\\
&~~~~=~ \frac{2\cdot w' \tilde{V}(\tilde{V}+E)^{-1} \Delta_{ii} (\tilde{V}+E)^{-1} \tilde{V} w}{q_i^3}  - \frac{2\cdot w' \tilde{V}(\tilde{V}+E)^{-1} \Delta_{ii} (\tilde{V}+E)^{-1} \Delta_{ii} (\tilde{V}+E)^{-1} \tilde{V} w}{q_i^4} \nonumber\\
&~~~~=~ \frac{2 \gamma_i^2}{q_i^3} \cdot \left(1 - \frac{[(\tilde{V}+E)^{-1}]_{ii}}{q_i}\right)
\end{align*}
\end{fact}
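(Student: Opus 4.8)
The plan is to obtain $\partial_{ii} f$ by differentiating the first-order formula
$\partial_i f = -\gamma_i^2/q_i^2 = -\frac{1}{q_i^2}\,w'\tilde{V}(\tilde{V}+E)^{-1}\Delta_{ii}(\tilde{V}+E)^{-1}\tilde{V}w$
one more time in $q_i$, treating $\tilde{V}$ and $w$ as constants and noting that the only $q_i$-dependence sits in $E$ through $\partial_i E = -\frac{1}{q_i^2}\Delta_{ii}$. Writing $G := (\tilde{V}+E)^{-1}$ for brevity, the single ingredient I need is the standard matrix-inverse derivative rule $\partial_i G = -G(\partial_i E)G = \frac{1}{q_i^2}\,G\Delta_{ii}G$.

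First I would apply the product rule to $\partial_i f = -q_i^{-2} A$ with $A := w'\tilde{V}G\Delta_{ii}G\tilde{V}w$. This splits $\partial_{ii} f$ into a term $\frac{2}{q_i^3}A$ coming from differentiating the prefactor $q_i^{-2}$, and a term $-q_i^{-2}\,\partial_i A$. To compute $\partial_i A$ I differentiate each of the two copies of $G$ inside $A$; since $\partial_i G$ carries its own factor $q_i^{-2}$, each of the two resulting terms equals $\frac{1}{q_i^2}w'\tilde{V}G\Delta_{ii}G\Delta_{ii}G\tilde{V}w$, giving $\partial_i A = \frac{2}{q_i^2}\,w'\tilde{V}G\Delta_{ii}G\Delta_{ii}G\tilde{V}w$. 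Substituting back reproduces the first displayed equality exactly.

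For the second equality I would use that $\tilde{V}$ and $G$ are symmetric and that $\Delta_{ii}=e_ie_i'$ is rank one. The rank-one factorization lets me read each matrix sandwich as a product of scalars: with $\gamma = G\tilde{V}w$ and $\gamma_i = e_i'G\tilde{V}w = w'\tilde{V}Ge_i$ (the two being equal by symmetry), the numerator of the first term becomes $(w'\tilde{V}Ge_i)(e_i'G\tilde{V}w)=\gamma_i^2$, and the numerator of the second becomes $(w'\tilde{V}Ge_i)(e_i'Ge_i)(e_i'G\tilde{V}w)=\gamma_i^2\,[G]_{ii}$, where $[G]_{ii}=[(\tilde{V}+E)^{-1}]_{ii}$. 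Collecting gives $\partial_{ii}f=\frac{2\gamma_i^2}{q_i^3}-\frac{2\gamma_i^2[G]_{ii}}{q_i^4}$, which factors into the claimed $\frac{2\gamma_i^2}{q_i^3}\bigl(1-\frac{[(\tilde{V}+E)^{-1}]_{ii}}{q_i}\bigr)$.

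Since everything reduces to an explicit differentiation, there is no genuine conceptual obstacle; the only place to be careful is the bookkeeping — in particular tracking that $\partial_i A$ acquires a factor of $2$ from the two copies of $G$ together with an extra $q_i^{-2}$ from $\partial_i G$, and invoking symmetry of $\tilde{V}$ and $G$ when identifying $w'\tilde{V}Ge_i$ with $\gamma_i$ so that the rank-one collapse produces precisely $\gamma_i^2$ and $\gamma_i^2[G]_{ii}$.
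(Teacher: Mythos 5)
Your computation is correct: differentiating the first-order formula via $\partial_i(\tilde{V}+E)^{-1}=\frac{1}{q_i^2}(\tilde{V}+E)^{-1}\Delta_{ii}(\tilde{V}+E)^{-1}$ and then collapsing the rank-one sandwiches with $\Delta_{ii}=e_ie_i'$ reproduces both displayed equalities, and the bookkeeping (the factor of $2$ from the two copies of $G$, the extra $q_i^{-2}$, and the identification $w'\tilde{V}Ge_i=\gamma_i$ by symmetry) all checks out. The paper omits this proof as ``simple linear algebra,'' and your argument is exactly the intended routine verification.
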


\begin{fact}[Cross-Partial Derivatives of Posterior Variance]
\begin{align*}
\partial_{ij} f(q_1, \dots, q_K)
=~& \frac{-2}{q_i^2 q_j^2} \cdot w' \tilde{V}(\tilde{V}+E)^{-1} \Delta_{ii} (\tilde{V}+E)^{-1} \Delta_{jj} (\tilde{V}+E)^{-1} \tilde{V} w \nonumber\\
=~& \frac{-2 \gamma_i \gamma_j}{q_i^2 q_j^2} \cdot [(\tilde{V}+E)^{-1}]_{ij}.
\end{align*}
\end{fact}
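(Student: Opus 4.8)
The plan is a direct matrix-calculus computation that differentiates the already-established first-order partial derivative formula a second time. Write $P := (\tilde{V}+E)^{-1}$ for brevity, and recall that $E = \diag(1/q_1, \dots, 1/q_K)$, so that $\partial_j E = -\frac{1}{q_j^2}\Delta_{jj}$. The single indispensable rule is the derivative of a matrix inverse, $\partial_j P = -P(\partial_j E)P = \frac{1}{q_j^2}P\Delta_{jj}P$. Everything else is bookkeeping with the selector matrices $\Delta_{ii}$ and the symmetry of $\tilde{V}$ and $P$.

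First I would start from the middle expression in the Partial Derivatives fact,
\[
\partial_i f = -\tfrac{1}{q_i^2}\, w'\tilde{V}\, P\, \Delta_{ii}\, P\, \tilde{V}\, w,
\]
and differentiate it with respect to $q_j$ for $j \neq i$. Since the scalar prefactor $1/q_i^2$ does not depend on $q_j$, only the triple product $P\Delta_{ii}P$ is affected, and the product rule gives $\partial_j(P\Delta_{ii}P) = (\partial_j P)\Delta_{ii}P + P\Delta_{ii}(\partial_j P)$. Substituting $\partial_j P = \frac{1}{q_j^2}P\Delta_{jj}P$ then produces two terms sharing the common factor $\frac{1}{q_i^2 q_j^2}$.

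The final simplification uses the identities $P\tilde{V}w = \gamma$ and its transpose $w'\tilde{V}P = \gamma'$ (valid because $\tilde{V}$ and $P$ are symmetric), together with the fact that each $\Delta_{kk} = e_k e_k'$ acts as a coordinate selector. Reading the first term $\gamma'\Delta_{jj}P\Delta_{ii}\gamma$ from the outside in collapses it to $\gamma_i\gamma_j[P]_{ji}$, while the second term $\gamma'\Delta_{ii}P\Delta_{jj}\gamma$ collapses to $\gamma_i\gamma_j[P]_{ij}$; by symmetry of $P$ these are equal and add to $2\gamma_i\gamma_j[P]_{ij}$. This yields exactly
\[
\partial_{ij}f = -\frac{2\gamma_i\gamma_j}{q_i^2 q_j^2}\,[(\tilde{V}+E)^{-1}]_{ij},
\]
as claimed.

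There is no deep obstacle here: the computation is routine once the inverse-derivative rule is in hand. The one place requiring care is tracking how the selector matrices $\Delta_{kk}$ interact with the surrounding copies of $P$ when reducing the quartic matrix expression to scalars — it is easy to misplace an index and conflate $[P]_{ij}$ with $[P]_{ji}$ (harmless here by symmetry, but worth noting explicitly) or to drop one of the two product-rule terms, which would halve the constant. Verifying that both product-rule terms contribute identically, and hence accounting correctly for the factor of $2$, is the only substantive check.
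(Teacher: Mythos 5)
Your computation is correct and is exactly the ``simple linear algebra'' the paper omits: differentiating the first-partial formula via $\partial_j (\tilde{V}+E)^{-1} = \frac{1}{q_j^2}(\tilde{V}+E)^{-1}\Delta_{jj}(\tilde{V}+E)^{-1}$, applying the product rule, and collapsing with $\gamma = (\tilde{V}+E)^{-1}\tilde{V}w$ — the same matrix-derivative technique the paper uses for Lemma \ref{lemmSecondDer}. Your two product-rule terms are transposes of one another (hence equal as scalars), which recovers both the factor of $2$ and the paper's intermediate expression $\frac{-2}{q_i^2 q_j^2}\, w'\tilde{V}(\tilde{V}+E)^{-1}\Delta_{ii}(\tilde{V}+E)^{-1}\Delta_{jj}(\tilde{V}+E)^{-1}\tilde{V}w$, so nothing is missing.
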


All of the above facts can be proved by simple linear algebra, so we omit the details.

\subsubsection{Refined Asymptotic Characterization of $n(t)$}
We now specialize to $w = \mathbf{1}$ and establish the next lemma, which refines our asymptotic characterization of $n(t)$ in Appendix \ref{appxBatch}.\footnote{Easy to see that in the transformed model, $\lambda_i \propto \lvert w_i \rvert$. So $\lambda_i = \frac{1}{K}$ here.} Proposition \ref{propBoundOnB} will immediately follow. 

\begin{lemma}\label{lemmFreqSharp}
For $t \geq 8(R+1)K\sqrt{K}$, it holds that $\lvert n_i(t) - \frac{t}{K} \rvert \leq 4(R+1)\sqrt{K}$. 
\end{lemma}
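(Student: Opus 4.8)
The plan is to solve the \emph{continuous} relaxation of the $t$-optimal allocation problem in closed form, and then transfer the resulting bound to the integer optimum $n(t)$ by convexity. Throughout I work with the transformed quantities and the exact derivative formulas recorded in the Facts above, writing $\gamma = (\tilde{V}+E)^{-1}\tilde{V}\mathbf{1}$ with $E=\diag(1/q_1,\dots,1/q_K)$ and $R=\|\tilde{V}^{-1}\|_{op}$. The one identity I will lean on repeatedly is obtained by rearranging $(\tilde{V}+E)\gamma=\tilde{V}\mathbf{1}$, namely $\gamma=\mathbf{1}-\tilde{V}^{-1}(E\gamma)$. Since the $i$-th entry of $E\gamma$ is exactly $\gamma_i/q_i$, this identity links the vector $\gamma$ to the marginal-value ratios $\gamma_i/q_i$ that appear in $\partial_i f=-\gamma_i^2/q_i^2$.

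First I would solve the continuous problem of minimizing $f(q)$ subject to $\sum_i q_i=t$ and $q_i\ge 0$. Since $f$ is convex (Lemma \ref{lemmVar}), an interior stationary point is the global minimizer, and stationarity on the simplex forces $\partial_i f$ to be equal across $i$, i.e. $\gamma_i/q_i\equiv m$ for a common $m>0$. Substituting $E\gamma=m\mathbf{1}$ into the identity above gives $\gamma=\mathbf{1}-m\tilde{V}^{-1}\mathbf{1}$, hence $q_i^*=\gamma_i/m=1/m-[\tilde{V}^{-1}\mathbf{1}]_i$; summing over $i$ and using $\sum_i q_i^*=t$ pins down $1/m=(t+\mathbf{1}'\tilde{V}^{-1}\mathbf{1})/K$ and yields the exact formula
\[
q_i^*(t)=\frac{t}{K}+\Big(\frac{\mathbf{1}'\tilde{V}^{-1}\mathbf{1}}{K}-[\tilde{V}^{-1}\mathbf{1}]_i\Big).
\]
The deviation term is independent of $t$ and sums to zero; bounding it by Cauchy--Schwarz, $|\mathbf{1}'\tilde{V}^{-1}\mathbf{1}|\le RK$ and $|[\tilde{V}^{-1}\mathbf{1}]_i|\le R\sqrt{K}$, so $|q_i^*(t)-t/K|\le 2R\sqrt{K}$. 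For $t\ge 8(R+1)K\sqrt{K}$ this also guarantees $q_i^*>0$, so the minimizer is indeed interior and the formula is valid.

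Next I would pass from $q^*$ to the integer optimum $n(t)$ using convexity. Rounding $q^*$ to a feasible lattice point $\hat{n}$ (same coordinate sum $t$) with $\|\hat{n}-q^*\|_\infty<1$, optimality gives $f(q^*)\le f(n(t))\le f(\hat{n})$, and both $\hat{n}-q^*$ and $n(t)-q^*$ lie in the tangent space $\{\sum_i v_i=0\}$. On the region where all coordinates are of order $t/K$, the exact second-derivative formulas show the Hessian is diagonally dominant: $\partial_{ii}f\approx 2(K/t)^3$ while $|\partial_{ij}f|=O(R(K/t)^4)$, so the restriction of the Hessian to the tangent space has both its smallest and largest eigenvalues comparable to $2K^3/t^3$. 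Strong convexity then converts the sandwich into a distance bound, $\tfrac{\alpha}{2}\|n(t)-q^*\|^2\le f(n(t))-f(q^*)\le f(\hat{n})-f(q^*)\le \tfrac{\beta}{2}\|\hat{n}-q^*\|^2<\tfrac{\beta}{2}K$, with $\beta/\alpha=1+o(1)$, so that $\|n(t)-q^*\|_2\lesssim\sqrt{K}$. Combining with the continuous bound yields $|n_i(t)-t/K|\le\|n(t)-q^*\|_\infty+2R\sqrt{K}\le (2R+1)\sqrt{K}\le 4(R+1)\sqrt{K}$, which is the claim, with room to absorb the lower-order slack.

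The main obstacle is the \emph{localization} needed to make the Hessian estimates in the last step rigorous: the smoothness and strong-convexity constants $\beta,\alpha$ are only valid in a neighborhood of $q^*$ where every coordinate is of order $t/K$, yet a priori $n(t)$ is merely known to be \emph{some} integer minimizer. Breaking this circularity requires showing that the sublevel set $\{f\le f(\hat{n})\}$---which contains $n(t)$ and is convex---is trapped inside a box $\{|q_i-t/K|\le C\sqrt{K}\}$, for instance by exhibiting a strict lower bound on $f$ along the boundary of a slightly larger box. A softer alternative is to invoke the qualitative estimate $n_i(t)=t/K+O(1)$ already established in Appendix \ref{appxBatch} for the a priori localization, at the cost of a non-explicit threshold. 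A secondary, purely bookkeeping obstacle is tracking the $o(1)$ terms in $\beta/\alpha$ and in the Hessian approximation with explicit constants, so that the final bound genuinely lands below $4(R+1)\sqrt{K}$ for \emph{all} $t\ge 8(R+1)K\sqrt{K}$ rather than merely for sufficiently large $t$.
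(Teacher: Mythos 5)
Your route is genuinely different from the paper's, and the first half of it is correct and elegant: solving the continuous relaxation exactly via the fixed point $\gamma=\mathbf{1}-m\tilde{V}^{-1}\mathbf{1}$, $q_i^*=1/m-[\tilde{V}^{-1}\mathbf{1}]_i$, checks out (one verifies directly that $(\tilde{V}+E^*)\gamma=\tilde{V}\mathbf{1}$ at this point, and convexity of $f$ makes any interior stationary point the global minimizer of the relaxation), and the Cauchy--Schwarz bound $|q_i^*-t/K|\le 2R\sqrt{K}$ is fine. The genuine gap is the transfer to the integer optimum $n(t)$. Your strong-convexity sandwich needs two-sided Hessian bounds along the entire segment from $q^*$ to $n(t)$, but a priori $n(t)$ is only known to be \emph{some} integer minimizer; you name this circularity yourself, and neither proposed fix closes it at the stated threshold. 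The sublevel-set trapping argument is not carried out, and invoking the $n_i(t)=\lambda_i t+O(1)$ estimate from Appendix C imports a non-explicit threshold, whereas the explicit threshold $t\ge 8(R+1)K\sqrt{K}$ and the explicit constant $4(R+1)\sqrt{K}$ are the entire content of this lemma (they feed Proposition \ref{propBoundOnB}). Moreover, even granting localization to a box $|q_i-t/K|\le C\sqrt{K}$ with $C$ of order $R$, at $t$ equal to the threshold the diagonal Hessian entries $2\gamma_i^2/q_i^3$ vary across the box by a bounded factor that is not $1+o(1)$ (the coordinates differ multiplicatively by up to $1\pm CK^{1.5}/t$, cubed), so $\sqrt{\beta/\alpha}$ is an absolute constant noticeably larger than $1$, and the resulting bound $\sqrt{\beta/\alpha}\,\sqrt{K}+2R\sqrt{K}$ does not obviously sit below $4(R+1)\sqrt{K}$ when $R$ is small. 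So as written the argument proves the statement only for $t$ sufficiently large with unspecified constants.

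The paper avoids all of this by never leaving the integer problem. It takes $q=n(t)-e_1$ where $n_1(t)\ge t/K$ is a maximal coordinate, and combines the discrete optimality condition $\lvert f(q+e_1)-f(q)\rvert\ge\lvert f(q+e_j)-f(q)\rvert$ with the exact two-sided sandwich $\gamma_j^2/(q_j(q_j+1))\le\lvert f(q+e_j)-f(q)\rvert\le\gamma_j^2/q_j^2$ and the same identity $\mathbf{1}-\gamma=\tilde{V}^{-1}E\gamma$ that you use. This yields the self-bounding inequality $\sum_i(1-\gamma_i)^2\le 2R^2K\gamma_1^2/q_1^2$, hence $\gamma_1\le 1+2R\sqrt{K}/q_1$ and $\gamma_j\ge 1-2R\sqrt{K}/q_1$, and then $q_j+1\ge(\gamma_j/\gamma_1)q_1\ge q_1-4R\sqrt{K}$ directly, with no continuous relaxation, no Hessian, and no a priori localization beyond the trivial fact that the largest coordinate is at least $t/K$. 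If you want to salvage your approach, the cleanest repair is to replace the strong-convexity step with exactly this kind of first-order comparison at $n(t)$ itself; as it stands, the rounding-plus-Hessian step is where your proof does not yet go through.
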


\begin{proof}
Note from (\ref{eq:gamma}) that $(\tilde{V}+E)\gamma = \tilde{V}w$. So $\tilde{V}(w - \gamma) = E \gamma = (\frac{\gamma_1}{q_1}, \dots, \frac{\gamma_K}{q_K})'$, and 
\[
w - \gamma = (\tilde{V})^{-1} \cdot \left(\frac{\gamma_1}{q_1}, \dots, \frac{\gamma_K}{q_K}\right)'. 
\]
From the definition of the operator norm, we deduce
\begin{equation}\label{eq:gammaIneq1}
\sum_{i=1}^{K} (1 - \gamma_i)^2 = \|w - \gamma\|^2 \leq R^2 \cdot \left( \sum_{j=1}^{K} \frac{\gamma_j^2}{q_j^2} \right).
\end{equation}
This holds for any division vector $q$ and the corresponding $\gamma$ (which is a function of $q$). 

Now suppose without loss of generality that $n_1(t) \geq \frac{t}{K}$. Let $q = (n_1(t)-1, n_2(t), \dots, n_K(t))$ and consider the corresponding $\gamma$. Then from $t$-optimality we have
\[
\lvert f(q_1 + 1, q_{-1}) - f(q) \rvert \geq \lvert f(q_j+1, q_{-j}) - f(q) \rvert, \quad \forall j. 
\]
Note that the discrete partial derivatives above are related to the usual continuous partials by the following inequalities:\footnote{The RHS follows from the convexity of $f$; the LHS can be proved by using Fact 2, Fact 3 and noting that $\gamma_j^2$ is an increasing function in $q_j$, because $\frac{\partial \gamma_j(q)}{\partial q_j} = \frac{\gamma_j}{q_j^2} \cdot [(V+E)^{-1}]_{jj}$ has the same sign of $\gamma_j$.}
\[
\frac{\gamma_j^2}{q_j(q_j+1)} \leq \lvert f(q_j+1, q_{-j}) - f(q) \rvert \leq \frac{\gamma_j^2}{q_j^2}.
\]
We therefore deduce 
\begin{equation}\label{eq:gammaIneq2}
\frac{\gamma_1^2}{q_1^2} \geq \frac{\gamma_j^2}{q_j(q_j+1)}, \quad \forall j. 
\end{equation}
Combining (\ref{eq:gammaIneq1}) and (\ref{eq:gammaIneq2}) and using $\frac{1}{q_j^2} \leq \frac{2}{q_j(q_j+1)}$, we see that 
\begin{equation}\label{eq:gammaIneq3}
\sum_{i = 1}^{K} (1- \gamma_i)^2 \leq 2R^2K \cdot \frac{\gamma_1^2}{q_1^2}.
\end{equation}

In particular, we know that $\gamma_1 - 1 \leq R \sqrt{2K} \cdot \frac{\gamma_1}{q_1}$. Easy to see this implies 
\begin{equation}\label{eq:gammaIneq4}
\gamma_1 \leq 1 + \frac{2R\sqrt{K}}{q_1} \leq \sqrt{2}
\end{equation}
whenever $q_1 = n_1(t)-1 \geq \frac{t}{K} - 1 \geq (2\sqrt{2}+2)R\sqrt{K}$. Plugging this back into the RHS of (\ref{eq:gammaIneq3}), we then obtain
\begin{equation}\label{eq:gammaIneq5}
\gamma_j \geq 1 - \frac{2R\sqrt{K}}{q_1} \geq2-\sqrt{2}.
\end{equation}

Now use (\ref{eq:gammaIneq2}), (\ref{eq:gammaIneq4}) and (\ref{eq:gammaIneq5}) to deduce that 
\[
q_j+1 \geq \frac{\gamma_j}{\gamma_1} \cdot q_1 \geq \frac{1 - \frac{2R\sqrt{K}}{q_1}}{1 + \frac{2R\sqrt{K}}{q_1}} \cdot q_1 \geq \left(1 - \frac{4R\sqrt{K}}{q_1}\right) \cdot q_1 = q_1 - 4R\sqrt{K}.
\]
Recall $q_j = n_j(t)$ for $j > 1$ and $q_1 = n_1(t) - 1$. We thus have 
\begin{equation}\label{eq:gammaIneq6}
n_j(t) \geq n_1(t) - 4R \sqrt{K} - 2.
\end{equation}
Since $n_1(t) \geq \frac{t}{K}$, the above implies $n_j(t) \geq \frac{t}{K} - 4(R+1)\sqrt{K}$ for each signal $j$. This proves half of the lemma. 

For the other half, note that $n_j(t) \leq \frac{t}{K}$ must hold for \emph{some} signal $j$. Thus (\ref{eq:gammaIneq6}) yields $n_1(t) \leq \frac{t}{K} + 4(R+1)\sqrt{K}$. This is not just true for signal $1$, but in fact for any signal $i$ with $n_i(t) \geq \frac{t}{K}$. So we conclude $n_i(t) \leq \frac{t}{K} + 4(R+1)\sqrt{K}$ for each signal $i$. The proof of the lemma is complete. 
\end{proof}


\clearpage

\clearpage

\section{Online Appendix}

\subsection{Applications of Results from Section \ref{games} (Multi-player Games)} \label{appxGames}

\subsubsection{Beauty Contest}
\cite{HellwigVeldkamp} introduced a beauty contest game with endogenous one-shot information acquisition. We build on this by modifying the information acquisition stage so that players \emph{sequentially} acquire information over many periods (rather than once), and face a \emph{capacity constraint} each period (rather than costly signals). We show that the basic insights of \cite{HellwigVeldkamp} hold in this setting.

Specifically, suppose that at an unknown final period, a unit mass of players simultaneously chooses prices $p_i \in \mathbb{R}$ to minimize the (normalized) squared distance between their price and an unknown target price $p^*$, which depends on the unknown state $\omega$ and also on the average price $\overline{p}=\int p_i ~ di$:
\begin{equation} \label{eq:HVpayoff}
u_i(p_i, \overline{p}, \omega) = -\frac{1}{(1-r)^2}\cdot (p_i - p^*)^2 \quad \mbox{ where } p^* = (1-r)\cdot \omega + r\cdot \overline{p}.
\end{equation}
The constant $r \in (-1,1)$ determines whether pricing decisions are complements or substitutes.\footnote{When $r > 0$, best responses are increasing in the prices set by other players, thus decisions are complements. Conversely, $r < 0$ implies decisions are substitutes.}

In every period up until the final period, each player acquires $B$ signals from the set $(X^{i}_k)$, as in the framework we have developed. To closely mirror the setup in \cite{HellwigVeldkamp}, we set each $\theta^i_1=\omega$. Assuming ``conditional independence" of players' signals, we can directly apply Corollary \ref{corDominant} and conclude that in every equilibrium, players choose a deterministic (myopic) sequence of information acquisitions. This result echoes \cite{HellwigVeldkamp}, who show that equilibrium is unique when players choose from \emph{private} signals (see their Subsection 1.3.4).\footnote{\cite{HellwigVeldkamp} also study a case in which players observe signals that are distorted by a common noise (which violates conditional independence). They show that multiple equilibria generally arise with such ``public signals". \cite{DewanMyatt}, \cite{MyattWallace} and \cite{Pavan} restore a unique linear symmetric equilibrium by assuming perfectly divisible signals, similar to the continuous-time variant of our model. In contrast, our equilibrium analysis relies on the informational environment (i.e.\ conditional independence), but not on symmetry or linearity of the strategy.} Our extension is to introduce dynamics and show how the dynamic problem can be reduced into a static one.

Let $\Sigma(t)$ be the posterior variance about $\omega$ after the first $t$ myopic observations. Since the players in our model acquire $B$ signals each period, their (common) posterior variance at the end of $t$ periods is given by $\Sigma(Bt)$. Thus, conditional on period $t$ being the final period, our game is as if the players acquire a batch of $Bt$ signals and then choose prices. This means that equilibrium prices are determined in the same way as in \cite{HellwigVeldkamp}:
\begin{equation}\label{eq:HVprice}
p(\mathcal{I}^i_{\leq Bt}) = \frac{1-r} {1-r + r \cdot \Sigma(Bt)} \cdot \mathbb{E}(\omega \vert \mathcal{I}^i_{\leq Bt}),
\end{equation}
where $\mathcal{I}^i_{\leq Bt}$ represents player $i$'s information set, consisting of $Bt$ signal realizations.

We can use this characterization of equilibrium to re-evaluate the main insight in \cite{HellwigVeldkamp}: the incentive to acquire more informative signals is increasing in aggregate information acquisition if decisions are complements and decreasing if decisions are substitutes. For this purpose, we augment the model with a period $0$, in which each player $i$ invests in a capacity level $B_i$ at some cost. Afterwards, players acquire information myopically (under possibly differential capacity constraints) and participate in the beauty contest game. 

Let $\mu \in \Delta(\mathbb{Z_{+}})$ be the distribution over capacity levels chosen by player $i$'s opponents. Then, player $i$'s expected utility from choosing capacity $B_i$ is given by 
\begin{equation*}\label{eq:HVeu}
EU(B_i, \mu) = -\mathbb{E}_{t \sim \pi} \left[\frac{\Sigma(B_it)}{\left(1-r + r\cdot \int_{B} \Sigma(Bt) ~d\mu(B)\right)^2}\right].
\end{equation*}
Above, the expectation is taken with respect to the random final period $t$ distributed according to $\pi$, while inside the expectation, the term $\int_{B} \Sigma(Bt) ~d\mu(B)$ is the average posterior variance among the players. Similar to Proposition 1 in \cite{HellwigVeldkamp}, we have the following result:

\begin{corollary}\label{corHV}
Suppose $\hat{B_i} > B_i$ and $\hat{\mu} > \mu$ in the sense of first-order stochastic dominance. Then the sign of the difference $EU(B_i, \mu) + EU(\hat{B_i}, \hat{\mu}) - EU(B_i, \hat{\mu}) - EU(\hat{B_i}, \mu)$ is
\begin{enumerate}
\itemsep0em 
\item[(a)] zero, if there is no strategic interaction $(r=0$); 
\item[(b)] positive, if decisions are complementary ($r>0)$;
\item[(c)] negative, if decisions are substitutes ($r<0)$. 
\end{enumerate}
\end{corollary}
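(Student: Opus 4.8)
The plan is to recognize the stated quantity as a test of increasing/decreasing differences of $EU$ in the pair $(B_i,\mu)$, and to exploit the fact that the integrand defining $EU$ is \emph{multiplicatively separable} into a factor depending only on $B_i$ and a factor depending only on $\mu$. Since $EU(B_i,\mu)=-\mathbb{E}_{t\sim\pi}[\,\cdot\,]$ is linear in the expectation over the final period $t$, I would first fix $t$, establish the sign of the cross-difference of the integrand, and only integrate over $t\sim\pi$ at the very end.

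For fixed $t$, write the integrand as a product $\Sigma(B_it)\cdot h(\bar\Sigma(\mu,t))$, where $\bar\Sigma(\mu,t):=\int_B \Sigma(Bt)\,d\mu(B)$ is the average opponent posterior variance and $h(x):=(1-r+rx)^{-2}$. The first factor depends on the own capacity $B_i$ only, the second on the opponent distribution $\mu$ only. Substituting into the four-term cross-difference, the mixed terms telescope and I obtain the clean factorization
\begin{equation*}
EU(B_i,\mu)+EU(\hat B_i,\hat\mu)-EU(B_i,\hat\mu)-EU(\hat B_i,\mu)=-\,\mathbb{E}_{t\sim\pi}\Big[\big(\Sigma(B_it)-\Sigma(\hat B_it)\big)\big(h(\bar\Sigma(\mu,t))-h(\bar\Sigma(\hat\mu,t))\big)\Big].
\end{equation*}
The whole problem then reduces to signing the two univariate differences inside the expectation.

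The first difference is nonnegative: since $\hat B_i>B_i$ and the posterior variance $\Sigma(\cdot)$ is decreasing in the number of myopic observations (Lemma \ref{lemmVar}), we have $\Sigma(B_it)\ge\Sigma(\hat B_it)$, strictly once $t\ge 1$. For the second difference I would first note that $\hat\mu$ first-order stochastically dominates $\mu$ while $B\mapsto\Sigma(Bt)$ is decreasing, so $\bar\Sigma(\hat\mu,t)\le\bar\Sigma(\mu,t)$: the more informed opponent population has lower average posterior variance. Then $h'(x)=-2r\,(1-r+rx)^{-3}$, so on the relevant domain the sign of $h'$ is exactly the sign of $-r$, making $h$ constant when $r=0$, strictly decreasing when $r>0$, and strictly increasing when $r<0$. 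Combining, $h(\bar\Sigma(\mu,t))-h(\bar\Sigma(\hat\mu,t))$ vanishes when $r=0$, is $\le 0$ when $r>0$, and is $\ge 0$ when $r<0$. Multiplying by the nonnegative first factor together with the overall minus sign yields the three cases (a), (b), (c); integrating over $t\sim\pi$ preserves the sign.

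The only genuinely substantive points to verify are well-posedness and strictness. For $h$ to be well-defined with the asserted monotonicity I need the denominator $1-r+r\bar\Sigma(\mu,t)$ to be positive; this is automatic when $r>0$ (it is $\ge 1-r>0$) and, when $r<0$, is exactly the standing condition inherited from the beauty-contest equilibrium that makes the price coefficient in (\ref{eq:HVprice}) finite and correctly signed. I expect this sign-of-the-denominator bookkeeping to be the main (if modest) obstacle, the factorization itself being the real engine of the argument. Strictness of the conclusions in (b) and (c) then follows provided $\pi$ places positive mass on some $t\ge 1$, $\Sigma$ is strictly decreasing (guaranteed by strictly positive marginal value of signals under Assumption \ref{assumptionIdentifiability}), and $\hat\mu\neq\mu$, so that both factors are strictly nonzero on a positive-$\pi$-measure set of periods.
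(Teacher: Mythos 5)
Your proof is correct, and since the paper itself supplies no argument for Corollary \ref{corHV} (it simply appeals to the analogy with Proposition 1 of \cite{HellwigVeldkamp}), your multiplicative factorization of the four-term cross-difference into $-\mathbb{E}_{t\sim\pi}\bigl[(\Sigma(B_it)-\Sigma(\hat B_it))(h(\bar\Sigma(\mu,t))-h(\bar\Sigma(\hat\mu,t)))\bigr]$ is exactly the intended supermodularity argument. Your handling of the two side conditions\textemdash positivity of $1-r+r\bar\Sigma$ (automatic for $r\ge 0$, inherited from well-posedness of the equilibrium price for $r<0$) and strict monotonicity of $\Sigma$ along the myopic path (which follows from Lemma \ref{lemmNotStuck} rather than from Assumption \ref{assumptionIdentifiability} alone)\textemdash is the only place needing any care, and you flag both appropriately.
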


\noindent When decisions are complements, the value of additional information is increasing in the amount of \emph{aggregate} information. Thus player $i$ has a stronger incentive to choose a higher signal capacity if his opponents (on average) acquire more signals. This incentive goes in the opposite direction when decisions are substitutes, which confirms the main finding of \cite{HellwigVeldkamp}.

\subsubsection{Strategic Trading}
We consider the strategic trading game introduced in \cite{LambertOstrovskyPanov}, in which individuals trade given asymmetric information about the value of an asset. We endogenize the information available to traders by adding a pre-trading stage in which traders sequentially acquire signals. As before, we suppose that trading occurs at a final time period that is determined according to an arbitrary full-support distribution.

In more detail: At the final time period, a security with unknown value $v$ is traded in a market, and each of $n$ traders submits a demand $d_i$. There are additionally liquidity traders who generate exogenous random demand $u$. A market-maker privately observes a signal $\theta_M$ (possibly multi-dimensional) and the total demand $D = \sum_i d_i + u$. He sets the price $P(\theta_M, D)$, which in equilibrium equals $\mathbb{E}[v \mid \theta_M, D]$. Each strategic trader then obtains profit $\Pi_i = d_i \cdot (v- P(\theta_M,D))$.

We suppose that in each period up to and including the final time period, each trader $i$ chooses to observe a signal from his set $(X^{i}_k)$ (described in Section \ref{games}). The requirement of conditional independence is strengthened to apply to a payoff-relevant \emph{vector} $\omega=(v,\theta_M,u)$ (instead of a real-valued unknown): That is, for each player $i$, conditional on the value of $\theta^i_1$, the payoff-relevant vector $\omega$ and the other players' unknown states $(\theta^j)_{j \neq i}$ are assumed to be conditionally independent from player $i$'s states $\theta^i$. Relative to the fully general setting considered in \cite{LambertOstrovskyPanov}, this assumption allows for flexible correlation \emph{within} a player's signals, but places a strong restriction on the correlation \emph{across} different players' signals. Applying Corollary \ref{corDominant}, we can conclude that:

\begin{corollary}
Under the above assumptions, there is an essentially unique linear NE in which the on-path signal acquisitions are myopic, and in the final period, players play the unique linear equilibrium described in \cite{LambertOstrovskyPanov}.
\end{corollary}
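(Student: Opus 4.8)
The plan is to decouple the information-acquisition phase from the final-period trading game, handling the former with the dominance of myopic acquisition and the latter with the uniqueness result of \cite{LambertOstrovskyPanov}. The bridge between the two is the strengthened conditional-independence assumption, which guarantees that the \emph{only} object player $i$ ever needs to learn about is his one-dimensional state of interest $\theta^i_1$, so that the single-state machinery developed in the body of the paper (Blackwell-ordering by posterior variance, and myopic dominance) applies verbatim.

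First I would make the reduction precise. Conditional on $\theta^i_1$, the payoff-relevant vector $\omega=(v,\theta_M,u)$ and the rival states $(\theta^j)_{j\neq i}$ are independent of $\theta^i$, while each of player $i$'s signals is a linear function of $\theta^i$ plus independent noise; hence, conditional on $\theta^i_1$, player $i$'s signals carry no information about $\omega$ or about the rivals. It follows that player $i$'s posterior over $(\omega,(\theta^j)_{j\neq i})$ is a deterministic function of his posterior over $\theta^i_1$. Fixing any measurable profile of opponents' acquisition-and-trading strategies, player $i$'s trading-stage problem---choose $d_i$ to maximize $\mathbb{E}[\,d_i\,(v-P(\theta_M,D))\mid \mathcal{I}^i\,]$ with $D=d_i+\sum_{j\neq i}d_j+u$---therefore has a value that depends on $\mathcal{I}^i$ only through the induced belief about $\theta^i_1$. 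Being the value of a decision problem, this value is convex in that belief, and so is weakly increasing in the Blackwell informativeness of player $i$'s information about $\theta^i_1$.

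Second, I would invoke the dynamic Blackwell comparison. Because more-informative-about-$\theta^i_1$ is always better, and posterior variances about $\theta^i_1$ do not depend on realized signals under normality, Lemma \ref{lemmDynamicBlackwell} (via Corollary \ref{corDynamicBlackwell}) shows that the deterministic myopic path---which minimizes the period-by-period posterior variance about $\theta^i_1$---dominates every alternative acquisition strategy, for \emph{every} intertemporal objective and irrespective of opponents' behavior. This is exactly the dominance statement behind Corollary \ref{corDominant}, maintained under the hypothesis that $B$ is sufficiently large or the environment is separable. Consequently, in any linear Nash equilibrium each player's on-path acquisitions must coincide with the myopic path up to payoff-irrelevant tie-breaking, which yields the ``essentially unique'' (and history-independent, deterministic) acquisition rule.

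Finally, I would pin down final-period play and confront the main obstacle. Conditioning on the exogenous final period $t$, every player's information set consists of exactly $Bt$ realizations along the myopic path, so the continuation game is the one-shot \cite{LambertOstrovskyPanov} trading game with this induced Gaussian information structure; their existence-and-uniqueness theorem for the linear equilibrium then determines demands, completing the characterization. I expect the main difficulty to be \emph{verifying that the myopic-induced information structure satisfies the hypotheses of the} \cite{LambertOstrovskyPanov} \emph{uniqueness result}, and relatedly that the convexity-of-value argument in the reduction step survives the strategic interaction in the trading stage. Concretely, one must check that the strengthened conditional independence over the \emph{vector} $\omega$ produces precisely the linear-Gaussian, conditionally-independent signal structure their theorem requires, and that the trading payoff---bilinear in the profile of demands---still yields a value for player $i$ that is convex in his posterior mean of $\theta^i_1$ once opponents' demands are taken to be affine in their own beliefs (as in the linear equilibrium). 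Once these compatibility checks are in place, the two-step argument closes.
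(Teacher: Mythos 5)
Your proposal is correct and follows essentially the same route as the paper: the paper derives this corollary directly by applying Corollary \ref{corDominant} (myopic acquisition is a dominant strategy under conditional independence, so the on-path acquisitions are the deterministic myopic sequence) and then invoking the uniqueness of the linear equilibrium in \cite{LambertOstrovskyPanov} for the final-period trading game. The paper itself supplies almost no further detail, so your additional steps---the reduction to the one-dimensional state $\theta^i_1$ via conditional independence, the Blackwell/convexity argument for why better information about $\theta^i_1$ is weakly better for any opponent profile, and the compatibility checks with the hypotheses of \cite{LambertOstrovskyPanov}---are a faithful elaboration of the intended argument rather than a departure from it.
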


Thus, the closed-form solutions that are a key contribution of \cite{LambertOstrovskyPanov} extend to our dynamic setting with endogenous information.

\subsection{Example in Which $n(t)$ is Not Monotone Even for Large $t$} \label{appxExample2}
Here we continue to study Example 2 presented in Figure \ref{fig:intuition} of the main text. We will show that the $t$-optimal division vectors $n(t)$ fail to be monotone, even when we consider only periods after some large $T$.

The posterior variance function is 
\[
f(q_1, q_2, q_3) = 1 - \frac{1}{1 + \frac{1}{q_1} + 1 - \frac{1}{1 + \frac{1}{q_2} +\frac{1}{1+q_3}}}
\]
This suggests that the $t$-optimal problem can be separated into two parts: choosing $q_1$, and allocating the remaining observations between $q_2$ and $q_3$. The latter allocation problem is simple: an optimal division satisfies $q_3 = q_2 - 1$ or $q_3 = q_2$. With some extra algebra, we obtain that for $N \geq 1$: 
\begin{enumerate}
\item If $t = 3N+1$, then the unique $t$-optimal division is $(N+2, N, N-1)$; 

\item If $t = 3N+2$, then the unique $t$-optimal division is $(N+3, N, N-1)$;

\item If $t = 3N+3$, then the unique $t$-optimal division is $(N+2, N+1, N)$. 
\end{enumerate}

Crucially, note that when transitioning from $t = 3N+2$ to $t = 3N+3$, the $t$-optimal number of $X_1$ signals is decreased. This reflects the complementarity between signals $X_2$ and $X_3$, which causes the DM to observe them in pairs. Due to this failure of monotonicity, a sequential rule cannot achieve the $t$-optimal division vectors for all large $t$.

\subsection{Additional Result for $K = 2$}\label{appxK=2} 
When there are only two states and two signals, we can show that for a broad class of environments, the myopic information acquisition strategy is optimal from period $1$.\footnote{In the proposition below, if the linear coefficients $a, b, c, d$ were picked at random, then with probability $\frac12$ we would have $abcd \leq 0$.}

\begin{proposition}\label{propK=2Complementary} 
Suppose $K = 2$, the prior is standard Gaussian $(V^0 = \mathbf{I}_2)$, and both signals have variance $1$.\footnote{We make these simplifying assumptions so that the condition for immediate equivalence is easy to state and interpret.} Write $C = \left(\begin{array}{cc} a & b \\
c & d
\end{array}\right)$ 
and assume without loss that $|ad| \geq |bc|$. Then the optimal information acquisition strategy is myopic whenever the following inequality holds: 
\begin{equation}\label{eq:K=2Complementary}
(1+2b^2)\cdot \lvert ad-bc \rvert \geq \lvert ad + bc \rvert.
\end{equation}
In particular, this is true whenever $abcd \leq 0$. 
\end{proposition}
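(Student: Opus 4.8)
The plan is to deduce the proposition from the monotonicity of the $t$-optimal divisions, exactly as in the proofs of Theorems \ref{propBatch} and \ref{propSeparable}. By Corollary \ref{corDynamicBlackwell}, if each coordinate of $n(t)$ is nondecreasing in $t$, then so is each coordinate of $n(Bt)$ for every $B$, and the DM has an optimal strategy that achieves these divisions myopically from period $1$. So it suffices to show that, under (\ref{eq:K=2Complementary}), the $K=2$ optimal divisions $n(t)=(n_1(t),\,t-n_1(t))$ never ``swap'': both coordinates weakly increase as $t$ grows by one.

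First I would isolate a clean sufficient condition for this monotonicity. Writing $F(q_1,t)=f(q_1,t-q_1)$, the minimizer $n_1(t)=\argmin_{q_1} F(q_1,t)$ is nondecreasing in $t$ by a standard monotone-comparative-statics (Topkis) argument, provided $F$ has nonincreasing differences in $(q_1,t)$, which amounts to $\partial_{12} f \le \partial_{22} f$; the symmetric requirement that $n_2(t)=t-n_1(t)$ also increase amounts to $\partial_{12} f \le \partial_{11} f$. (Equivalently, a short discrete exchange argument shows that a swap from $n(t)=(m_1,m_2)$ to $(m_1+2,m_2-1)$ forces the discrete second differences to satisfy $D_{11}f \le D_{12}f$ at $(m_1,m_2-1)$, and symmetrically for the opposite swap.) Since $f$ is convex in each argument by Lemma \ref{lemmVar}, we have $\partial_{11}f,\partial_{22}f>0$, so the content of the condition is that the cross-partial $\partial_{12}f$ is dominated by each own second derivative. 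Thus the goal reduces to verifying
\[
\partial_{12} f \le \partial_{11} f \quad\text{and}\quad \partial_{12} f \le \partial_{22} f \qquad \text{for all } q_1,q_2\ge 0 .
\]

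Next I would make these inequalities explicit. In the transformed model each signal is $\tilde\theta_i$ plus unit-variance noise, with prior covariance $\tilde V = CC'$ and payoff weights $w=\tfrac{1}{ad-bc}(d,-b)'$ (the transpose of the first row of $C^{-1}$), so that $f(q_1,q_2)=w'(P+\diag(q_1,q_2))^{-1}w$ with $P=(CC')^{-1}$. For $K=2$ this inverse is an explicit $2\times 2$ rational function of $q_1,q_2$ and of $a,b,c,d$, and differentiating yields closed forms for $\partial_{11}f,\partial_{22}f,\partial_{12}f$. Substituting $w$ and $P$ and clearing denominators reduces each of the two inequalities above to a polynomial inequality in $(q_1,q_2)$. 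The normalization $|ad|\ge|bc|$ (obtained after relabeling the two states/signals) ensures that one of the two inequalities holds for every $q$, so only the other is binding, and I expect that binding inequality to hold for all $q_1,q_2\ge 0$ precisely when $(1+2b^2)\,|ad-bc|\ge|ad+bc|$, which is (\ref{eq:K=2Complementary}). The ``in particular'' claim is then immediate: if $abcd\le 0$ then $(ad+bc)^2=(ad)^2+2abcd+(bc)^2\le (ad)^2-2abcd+(bc)^2=(ad-bc)^2$, so $|ad+bc|\le|ad-bc|\le(1+2b^2)\,|ad-bc|$ and (\ref{eq:K=2Complementary}) holds.

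The main obstacle is this last algebraic step. The two domination inequalities depend a priori on $q_1,q_2$, and a purely asymptotic ($t\to\infty$) check along the optimal ray does \emph{not} reproduce the factor $1+2b^2$; this indicates that the worst case is attained at finite counts rather than in the limit. The crux is therefore to show that the relevant rational function of $(q_1,q_2)$ is extremized in a configuration yielding exactly (\ref{eq:K=2Complementary}) — i.e. that the $q$-dependence can be controlled so that a single $q$-free inequality certifies monotonicity for all $t$. Tracking the signs of $w_1,w_2$ (equivalently of the $\gamma_i$ appearing in the derivative formulas) and confirming which of the two inequalities is slack under $|ad|\ge|bc|$ are the delicate points of the verification.
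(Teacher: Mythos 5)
Your high-level skeleton matches the paper's: reduce the claim to monotonicity of the $t$-optimal divisions $n(t)$ and invoke Corollary \ref{corDynamicBlackwell}, and your ``in particular'' derivation ($abcd\le 0 \Rightarrow (ad+bc)^2\le(ad-bc)^2$) is exactly right. But the core of the proof is missing. You reduce everything to the pointwise dominations $\partial_{12}f\le\partial_{11}f$ and $\partial_{12}f\le\partial_{22}f$ for all $q_1,q_2\ge 0$, and then write that you ``expect'' the binding one to hold precisely under (\ref{eq:K=2Complementary}). That expectation is the entire content of the proposition and is never verified; worse, the sufficient condition you chose is strictly stronger than what is needed, and there is no reason to believe (\ref{eq:K=2Complementary}) delivers it. The Topkis conditions amount to the difference $g(q_1,q_2)=f(q_1+1,q_2)-f(q_1,q_2+1)$ being monotone in each argument, whereas monotonicity of the argmin only requires the \emph{sign} of $g$ to be monotone. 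The constant $1+2b^2$ in (\ref{eq:K=2Complementary}) does not come from any global bound on second derivatives (as you yourself observe, the asymptotics do not produce it); it comes from a single discrete increment at $q_1=0$, which your continuous-derivative formulation cannot see.

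What the paper actually does is more elementary and exploits sign-separability. Writing $f(q_1,q_2)=\frac{1+b^2q_1+d^2q_2}{1+(a^2+b^2)q_1+(c^2+d^2)q_2+(ad-bc)^2q_1q_2}$ explicitly and clearing denominators, the comparison $f(q_1+1,q_2)<f(q_1,q_2+1)$ becomes $L(q_1)<R(q_2)$ where $L$ and $R$ are explicit univariate quadratics with nonnegative leading coefficients. The hypothesis $|ad|\ge|bc|$ makes every coefficient of $R$ nonnegative, so $R$ is increasing on $q_2\ge 0$; the hypothesis $(1+2b^2)|ad-bc|\ge|ad+bc|$ is exactly $L(1)-L(0)\ge 0$, which together with convexity of the quadratic makes $L$ increasing over the integers. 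With both index functions increasing, the greedy comparison never reverses a past choice, so $n(t)$ is monotone and Corollary \ref{corDynamicBlackwell} applies. To repair your argument you would either need to carry out this explicit computation (at which point the Topkis machinery is unnecessary), or prove your two second-derivative inequalities globally under (\ref{eq:K=2Complementary}) --- and the latter is likely false as stated, since your condition is not the one the algebra of this example certifies.
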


\noindent To interpret, (\ref{eq:K=2Complementary}) requires that the determinant of the matrix $C$, $ad - bc$, is not too small (holding other terms constant). Equivalently, the two vectors (in $\mathbb{R}^2$) defining the signals should not be close to collinear. This rules out situations where the two signals provide such similar information in the initial periods that they substitute one another. 

\begin{proof}
Under the assumptions, the DM's posterior variance about $\theta_1$ is computed to be 
\[
f(q_1, q_2) = \frac{1+b^2q_1+d^2q_2}{1+(a^2+b^2)q_1+(c^2+d^2)q_2+(ad-bc)^2q_1q_2}.
\]
Given $q_i$ observations of each signal $i$ in the past, the myopic strategy chooses signal $1$ if and only if $f(q_1+1, q_2) < f(q_1, q_2+1)$, which reduces to
\begin{equation}\label{eq:K=2MyopicRule}
\begin{split}
&(ad-bc)^2b^2q_1^2 + (1+b^2)(ad-bc)^2q_1 - (a^2d^2-b^2c^2)q_1+ c^2(1+b^2)\\
<&(ad-bc)^2d^2q_2^2 + (1+d^2)(ad-bc)^2q_2 + (a^2d^2-b^2c^2)q_2+ a^2(1+d^2)
\end{split}
\end{equation}
The condition $|ad| \geq |bc|$ ensures that the RHS is an increasing function of $q_2$, because the coefficients in front of $q_2^2$ and $q_2$ are both positive. Meanwhile, the condition $(1+2b^2)|ad-bc| \geq |ad+bc|$ implies the LHS is larger when $q_1 = 1$ than when $q_1 = 0$, so that the LHS is also increasing in (integer values of) $q_1$. 

Even if $f$ may not be written into separable form, (\ref{eq:K=2MyopicRule}) suggests that the comparison between the marginal values of signal $1$ and $2$ ``is separable." It follows that the $t$-optimal division vectors are increasing in $t$. Proposition \ref{propK=2Complementary} is proved. 
\end{proof}

\subsection{Eventual Optimality of the Myopic Strategy}\label{appxMyopic}
Below, write $m(t)$ for the division vector at time $t$ achieved under the (history-independent) myopic rule.\footnote{That is, $m(t)=(m_1(t), \dots, m_K(t))$ where $m_i(t)$ is the number of times signal $i$ has been observed under myopic information acquisition prior to and including period $t$.} We have discussed that when Theorems \ref{propBatch} or \ref{propSeparable} apply, the myopic division vector $m(t)$ is $t$-optimal at every period $t$. In this appendix, we argue that generically, division vectors $m(t)$ at late periods are $t$-optimal. This result complements our Theorem \ref{propGenericEventual}, and suggests that a DM who naively follows the myopic rule all the way cannot do very poorly. 

To avoid repetition, here we only sketch the core argument. The main new step is to show that the division vectors $m(t)$ under the myopic rule grow to infinity in each coordinate; that is, a myopic DM would \emph{not get stuck} observing a subset of signals. Once this is shown, we can repeat the (rest of the) proof of Lemma \ref{lemmDynamicFreq} and deduce that $m_i(t) - \lambda_i \cdot t$ remains bounded. And with these asymptotic characterizations, we can reproduce the proof of Theorem \ref{propGenericEventual} (now for the myopic strategy instead of the optimal strategy) without trouble.\footnote{These latter steps are actually simpler to carry out for the myopic strategy. This is because in constructing a deviation from the myopic strategy, we only need to look for a lower posterior variance at a single period. So we no longer need to make use of switch deviations.}

To see myopic signal choices never get stuck, we establish the following lemma.
\begin{lemma}\label{lemmNotStuck}
Fix an arbitrary division vector $q \in \mathbb{R}_{+}^{K}$ (need not be integral). The partial derivatives of $f$ at $q$ are all zero if and only if $q_1 = \dots = q_K = \infty$. 
\end{lemma}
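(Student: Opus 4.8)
The plan is to work in the linearly-transformed model used in the proof of Proposition \ref{propBoundOnB}, where each signal is $X_i = \tilde\theta_i$ plus standard Gaussian noise, the prior covariance over the transformed states is $\tilde V$, and the payoff-relevant state is $\theta_1 = \langle w, \tilde\theta\rangle$ with $w_k = \sigma_k [C^{-1}]_{1k}$. By Assumption \ref{assumptionIdentifiability} the first row of $C^{-1}$ has all nonzero entries, so \emph{every} coordinate of $w$ is nonzero; this is the fact that will ultimately force a contradiction.

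First I would recast the posterior variance in precision form. Since $q_i$ observations of signal $i$ add precision $q_i$ about $\tilde\theta_i$, the posterior precision matrix of $\tilde\theta$ is $\tilde V^{-1}+\diag(q)$, and applying the Woodbury identity to the expression $f = w'(\tilde V - \tilde V(\tilde V+E)^{-1}\tilde V)w$ recorded for Proposition \ref{propBoundOnB} yields the clean formula $f(q) = w'(\tilde V^{-1}+\diag(q))^{-1}w$, valid for all finite $q\in\mathbb{R}_+^K$ (the matrix $\tilde V^{-1}+\diag(q)\succeq\tilde V^{-1}\succ 0$ is always invertible, and crucially nothing is divided by $q_i$). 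Differentiating gives $\partial_i f(q) = -u_i^2$, where $u := (\tilde V^{-1}+\diag(q))^{-1}w$. The ``if'' direction is then immediate: as all $q_i\to\infty$ the inverse tends to $0$, so $u\to 0$ and every partial derivative tends to $0$.

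For the ``only if'' direction, suppose every partial derivative vanishes at $q$ but the conclusion fails, so that $S := \{i : q_i<\infty\}$ is nonempty. I would condition on the exactly-learned coordinates $S^c = \{i : q_i = \infty\}$: conditioning the Gaussian $\tilde\theta\sim\mathcal N(0,\tilde V)$ on $\tilde\theta_{S^c}$ gives conditional precision equal to the principal submatrix $[\tilde V^{-1}]_{SS}\succ 0$ over $\tilde\theta_S$, while $\theta_1 = w_S'\tilde\theta_S + (\text{known constant})$. Hence the residual problem is an \emph{exact} smaller instance with positive-definite prior precision $[\tilde V^{-1}]_{SS}$ and payoff weight $w_S$ (whose entries are still all nonzero), and for $j\in S$ the derivative $\partial_j f$ equals the partial derivative of the reduced posterior variance $f_S(q_S) = w_S'([\tilde V^{-1}]_{SS}+\diag(q_S))^{-1}w_S$. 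Applying the identity of the previous paragraph to this finite, nondegenerate subproblem, write $u^{\mathrm{red}} := ([\tilde V^{-1}]_{SS}+\diag(q_S))^{-1}w_S$; the hypothesis forces $u^{\mathrm{red}}_j = 0$ for all $j\in S$, i.e. $u^{\mathrm{red}}=0$, and therefore $w_S = ([\tilde V^{-1}]_{SS}+\diag(q_S))\,u^{\mathrm{red}} = 0$, contradicting $w_S\neq 0$. Thus $S=\emptyset$ and all $q_i=\infty$.

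The main obstacle is purely the bookkeeping at the boundary of the domain: the earlier formula $\partial_i f = -\gamma_i^2/q_i^2$ is awkward at $q_i=0$ and ill-defined at $q_i=\infty$. Passing to the precision parametrization $\tilde V^{-1}+\diag(q)$ removes the $q_i=0$ difficulty entirely, and the conditioning/Schur-complement reduction cleanly handles the mixed case in which some coordinates are infinite and others finite. Once those are in place, the contradiction reduces to the trivial observation that a nonsingular matrix cannot map the nonzero vector $w_S$ to $0$.
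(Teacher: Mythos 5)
Your proof is correct, but it takes a genuinely different route from the paper's. The paper disposes of this lemma with a two-sentence convexity argument: since $f$ is globally convex on $\overline{\mathbb{R}_+}^K$ (Lemma \ref{lemmVar}), a division at which every partial derivative vanishes must be a global minimizer of posterior variance; the minimum value is $0$, and by Assumption \ref{assumptionIdentifiability} it is attained only at $q_1=\dots=q_K=\infty$ (any division with a finite coordinate leaves $\theta_1$ unidentified, hence has strictly positive posterior variance). You instead compute the gradient exactly in the precision parametrization, $\partial_i f(q)=-u_i^2$ with $u=(\tilde V^{-1}+\diag(q))^{-1}w$, and handle divisions with a mix of finite and infinite coordinates by conditioning on the exactly-learned states, which reduces the problem to the principal submatrix $[\tilde V^{-1}]_{SS}$; the contradiction then comes from the invertibility of $[\tilde V^{-1}]_{SS}+\diag(q_S)$ together with $w_S\neq 0$. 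The two arguments invoke Assumption \ref{assumptionIdentifiability} in equivalent guises (identification of $\theta_1$ versus all entries of $w$ being nonzero). What your version buys is self-containedness and rigor at the boundary: it does not rely on the convexity lemma, it is well defined at $q_i=0$, and the Schur-complement reduction makes the mixed finite/infinite case explicit rather than implicit. What the paper's version buys is brevity, since convexity is already on the books; the one step you assert without full detail (that $\partial_j f$ at a point with $q_{S^c}=\infty$ equals $\partial_j f_S(q_S)$, i.e.\ that the limit commutes with differentiation) is routine for these rational functions but is worth a sentence of justification.
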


Intuitively, this holds because for normal linear signals, the posterior variance is \emph{globally convex}. So if each signal has zero marginal value relative to the division $q$, then $q$ must be a global minimizer of posterior variance. 

We conclude by mentioning that a similar result (i.e.\ myopic information acquisition does not get stuck) would not in general be true for other signal structures. The following is a counterexample with normal but non-linear signals. 
\begin{example}
Consider three states $\theta_1, \theta_2, \theta_3$ drawn independently. The DM has access to these three signals:
\begin{align*}
X_1 &= \theta_1 + \text{sign}(\theta_2) + \epsilon_1 \\
X_2 &= \text{sign}(\theta_2\theta_3) + \epsilon_2 \\
X_3 &= \theta_3 + \epsilon_3
\end{align*}
where $\epsilon_1, \epsilon_2, \epsilon_3$ are Gaussian noise terms. We focus on the prediction problem, in which (at a random time) the DM makes a prediction about $\theta_1$ and receives negative of the squared prediction error.

Note that prior to the first observation of $X_2$, signal $X_3$ is completely uninformative about the payoff-relevant state $\theta_1$ (even when combined with previous observations of $X_1$). Similarly, signal $X_2$ is individually uninformative about $\theta_2$,\footnote{This is because the sign of $\theta_2\theta_3$ does not contain any new information about $\theta_2$ when $\theta_3$ is equally likely to be positive or negative.} and thus about $\theta_1$. These imply that the DM's uncertainty about $\theta_1$ is not reduced upon the first observation of either $X_2$ or $X_3$. Hence, the myopic rule in this example is to always observe $X_1$, contrary to Lemma \ref{lemmNotStuck}. 

Thus, if the DM acquires information myopically, he will never completely learn the value of $\theta_1$. By contrast, if the DM is sufficiently patient, then his optimal strategy will observe each signal infinitely often and identify the value of $\theta_1$ in the long run. Thus, in this example the myopic signal path does not eventually agree with the optimal path. 
\end{example}

\end{document}